\def\Csp{{\mathbb{C}}}
\def\Rsp{{\mathbb{R}}}
\def\Thetasp{{\mbox{\boldmath $\Theta$}}}
\def\Thetaspsc{{\mbox{\boldmath $\Thetasc$}}}
\def\bOmega{{\mbox{\boldmath $\Omega$}}}
\def\pxtheta{P_{\Xmatsc;\thetavecsc}}
\def\pxthetazero{P_{\Xmatsc;\thetavecsc_{0}}}
\def\qxtheta{Q^{\left(u\right)}_{\Xmatsc;\thetavecsc}}
\def\uGausss{u_{\Gausssc}}
\newcommand{\Gausssc}{{\mbox{\tiny${\rm{G}}$}}}
\def\thetaveczero{\thetavec_{0}}
\def\thetaveczerosc{\thetavecsc_{0}}
\def\thetavec{{\bf{\theta}}}
\newcommand{\avec}{{\bf{a}}}
\newcommand{\bvec}{{\bf{b}}}
\newcommand{\yvec}{{\bf{y}}}
\newcommand{\xvec}{{\bf{x}}}
\newcommand{\zvec}{{\bf{z}}}
\newcommand{\hvec}{{\bf{h}}}
\newcommand{\etavec}{{\bf{\eta}}}
\newcommand{\zerovec}{{\bf{0}}}
\newcommand{\alphavec}{{\bf{\alpha}}}
\newcommand{\Gammamat}{{\bf{\Gamma}}}
\newcommand{\Amat}{{\bf{A}}}
\newcommand{\Bmat}{{\bf{B}}}
\newcommand{\Cmat}{{\bf{C}}}
\newcommand{\Fmat}{{\bf{F}}}
\newcommand{\Gmat}{{\bf{G}}}
\newcommand{\Imat}{{\bf{I}}}
\newcommand{\Pmat}{{\bf{P}}}
\newcommand{\Smat}{{\bf{S}}}
\newcommand{\Rmat}{{\bf{R}}}
\newcommand{\Wmat}{{\bf{W}}}
\newcommand{\Xmat}{{\bf{X}}}
\newcommand{\Xmatsc}{{\mbox{\boldmath \tiny $\Xmat$}}}
\newcommand{\Zmatsc}{{\mbox{\boldmath \tiny $\Zmat$}}}
\newcommand{\Amatsc}{{\mbox{\boldmath \tiny $\Amat$}}}
\newcommand{\Wmatsc}{{\mbox{\boldmath \tiny $\Wmat$}}}
\newcommand{\Zmat}{{\bf{Z}}}
\newcommand{\Psimat}{\mbox{\boldmath $\Psi$}}
\newcommand{\bzeta}{\mbox{\boldmath $\zeta$}}
\def\bzeta{{\mbox{\boldmath $\zeta$}}}
\def\bomega{{\mbox{\boldmath $\omega$}}}
\def\bOmega{{\mbox{\boldmath $\Omega$}}}
\def\bxi{{\mbox{\boldmath $\xi$}}}
\def\bSigma{{\mbox{\boldmath $\Sigma$}}}
\def\bomega{{\mbox{\boldmath $\omega$}}}
\def\bxi{{\mbox{\boldmath $\xi$}}}
\def\psivec{{\mbox{\boldmath $\psi$}}}
\def\bSigmasc{{\mbox{\boldmath \tiny{$\bSigma$}}}}
\def\alphavec{{\mbox{\boldmath $\alpha$}}}
\def\alphavecsm{{\mbox{\boldmath \small $\alpha$}}}
\def\etavec{{\mbox{\boldmath $\eta$}}}
\def\thetavec{{\mbox{\boldmath $\theta$}}}
\def\thetavecsc{{\mbox{\boldmath \tiny {$\theta$}}}}
\def\Thetasc{{\mbox{\tiny {$\Theta$}}}}
\def\XCalsc{{\mbox{\tiny $\mathcal{X}$}}}
\def\XCal{{\mbox{$\mathcal{X}$}}}
\def\muvec{{\mbox{\boldmath $\mu$}}}
\newcommand{\be}{\begin{equation}}
\newcommand{\ee}{\end{equation}}
\newcommand{\beqna}{\begin{eqnarray}}
\newcommand{\eeqna}{\end{eqnarray}}
\begin{document}
\title{Measure-Transformed Quasi Maximum Likelihood Estimation}
\author{{Koby Todros and Alfred O. Hero}
\thanks{This work was partially supported by United States-Israel Binational Science Foundation grant 2014334 and United States MURI grant W911NF-15-1-0479.}}

\maketitle
\newtheorem{Theorem}{Theorem}
\newtheorem{Lemma}{Lemma}
\newtheorem{Corollary}{Corollary}
\newtheorem{Conclusion}{Conclusion}
\newtheorem{Proposition}{Proposition}
\newtheorem{Definition}{Definition}
\newtheorem{Remark}{Remark}
\newtheorem{Identity}{Identity}
\begin{abstract}
In this paper the Gaussian quasi maximum likelihood estimator (GQMLE) is generalized by applying a transform to the probability distribution of the data. The proposed estimator, called measure-transformed GQMLE (MT-GQMLE), minimizes the empirical Kullback-Leibler divergence between a transformed probability distribution of the data and a hypothesized Gaussian probability measure. By judicious choice of the transform we show that, unlike the GQMLE, the proposed estimator can gain sensitivity to higher-order statistical moments and resilience to outliers leading to significant mitigation of the model mismatch effect on the estimates. Under some mild regularity conditions we show that the MT-GQMLE is consistent, asymptotically normal and unbiased. Furthermore, we derive a necessary and sufficient condition for asymptotic efficiency. A data driven procedure for optimal selection of the measure transformation parameters is developed that minimizes the trace of an empirical estimate of the asymptotic mean-squared-error matrix. The MT-GQMLE is applied to linear regression and source localization and numerical comparisons
illustrate its robustness and resilience to outliers. 
\end{abstract}
\begin{IEEEkeywords}
Robust estimation, higher-order statistics, probability measure transform, quasi maximum likelihood estimation, gain estimation, source localization.
\end{IEEEkeywords}
\section{Introduction}
\label{sec:intro}
Classical multivariate estimation \cite{Fisher1}, \cite{Fisher2} deals with the problem of estimating a deterministic vector parameter using a sequence of multivariate samples from an underlying probability distribution. 
When the probability distribution is known to lie in a specified parametric family of probability measures, the maximum likelihood estimator (MLE) \cite{Lehmann}, \cite{KayBook} can be implemented. In many practical scenarios an accurate parametric model is not available, i.e., the underlying parametric family of probability distributions is unknown. In these cases, alternatives to the MLE that do not require complete knowledge of the likelihood function become attractive.

A popular alternative to the MLE is the quasi maximum likelihood estimator (QMLE) \cite{White}, \cite{Kay}, that similarly to the MLE belongs to the wider classes of M-estimators \cite{Serfling}, \cite{HuberRob} and the generalized method of moments \cite{GMM}. The QMLE minimizes the empirical Kulback-Leibler divergence \cite{Kullback} to the underlying probability distribution of the data over a misspecified parametric family of hypothesized probability measures. When the hypothesized parametric family is the class of Gaussian probability distributions the well established Gaussian QMLE (GQMLE) \cite{White}, \cite{rousseaux2005gaussian}-\cite{Viberg} is obtained that reduces to fitting the mean vector and covariance matrix of the underlying parametric distribution to the sample mean vector (SMV) and sample covariance matrix (SCM) of the data. The GQMLE has gained popularity due to its low implementation complexity, simplicity of performance analysis, and the computational advantages that are analogous to the first and second-order methods of moments \cite{Lehmann}, \cite{KayBook}, \cite{Pearson}. Despite the possible model mismatch, relative to the Gaussian distribution, the GQMLE is a consistent estimator of the true parameters of interest when the mean vector and covariance matrix are correctly specified \cite{White}, \cite{bollerslev1992quasi}, \cite{Trognon}. However, in some circumstances, such as for certain types of non-Gaussian data, deviation from normality inflicts poor estimation performance of the GQMLE even though it may be consistent. This can occur when the mean vector and covariance matrix are weakly dependent on the parameter to be estimated and stronger dependency is present in higher-order cumulants, or in case of heavy-tailed data when the non-robust SMV and SCM provide poor estimates in the presence of outliers.  

In order to overcome this limitation several non-Gaussian QMLEs have been proposed in the econometric and signal processing literature that assume more complex distributional models \cite{bollerslev1987conditionally}-\cite{georgiou2006maximum}. Although non-Gaussian QMLEs can be successful in reducing the effect of model mismatch, they may suffer from the several drawbacks. First, unlike the GQMLE, they may have cumbersome implementations with increased computational and sample complexities and their performance analysis may not be tractable. Second, to be consistent and asymptotically normal they may require considerably more restrictive regularity conditions as compared to the GQMLE \cite{Trognon}, \cite{fan2014quasi}, \cite{Huber}, \cite{newey1997asymptotic}. Furthermore, when the hypothesized parametric family largely deviates from normality they may have low asymptotic relative efficiency (ARE) \cite{Serfling} w.r.t. the MLE under nominal Gaussian distribution of the data.

In this paper, a new generalization of the GQMLE is proposed that applies Gaussian quasi maximum likelihood estimation under a transformed probability distribution of the data. Under the proposed generalization we show that new estimators can be obtained that can gain sensitivity to higher-order statistical moments, resilience to outliers, and yet have the computational advantages of the first and second-order methods of moments. 
This generalization, called measure-transformed GQMLE (MT-GQMLE), is based on the probability measure transformation framework that was recently applied to canonical correlation analysis \cite{MTCCA}, \cite{MTCCAAPP}, and multiple signal classification \cite{Todros2}, \cite{Todros3}. 

The transformation is structured by a non-negative function, called the MT-function, and maps the true probability distribution into a set of new probability measures on the observation space. By modifying the MT-function, classes of measure transformations can be obtained that have different useful properties that mitigate the effect of model mismatch on the estimates. Under the considered transform we define the parametric measure-transformed (MT) mean vector and covariance matrix, establish their sensitivity to higher-order statistical moments, and develop their strongly consistent estimates. These quantities are then used for constructing the MT-GQMLE.

The proposed estimator minimizes the empirical Kullback-Leibler divergence to the transformed probability distribution of the data over a family of hypothesized Gaussian probability measures.
This hypothesized class of Gaussian distributions is characterized by the MT-mean vectors and MT-covariance matrices corresponding to the true underlying parametric family of transformed probability measures. It is worthwhile noting that knowledge of these MT-mean vectors and MT-covariance matrices, which establishes partial information about the underlying distribution, is analogous to the side information in the minimax estimation problem considered in \cite{Ahlswede}. Under some mild regularity conditions we show that the MT-GQMLE is consistent, asymptotically normal and unbiased. Additionally, a necessary and sufficient condition for asymptotic efficiency is derived. Robustness of the proposed MT-GQMLE to outliers is studied using its vector valued influence function \cite{Hampel}. A sufficient condition on the MT-function that guarantees outlier resilience is derived. Furthermore, a data-driven procedure for optimal selection of the MT-function within some parametric class of functions is developed that minimizes the trace of an empirical estimate of the asymptotic mean-squared-error (MSE) matrix. 

We illustrate the MT-GQMLE for linear regression and source localization in the presence of spherically contoured noise. By specifying the MT-function within the family of zero-centered Gaussian-shaped functions parameterized by a scale parameter, we show that the MT-GQMLE can significantly mitigate the model mismatch effect introduced by the normality assumption. More specifically, we show that the proposed MT-GQMLE outperforms the non-robust  GQMLE and other robust alternatives and attains MSE performance that are significantly closer to those obtained by the MLE that, unlike the proposed estimator, requires complete knowledge of the likelihood function.

The basic idea behind the proposed MT-GQMLE was first presented in the conference paper \cite{MTQMLConf}. The contribution of the present paper relative to \cite{MTQMLConf} includes:
\begin{inparaenum}
\item
detailed derivation of the MT-GQMLE,
\item
robustness analysis using its vector valued influence function,
\item
rigorous proofs of the propositions and theorems stating its properties, and
\item
more complete simulation studies.
\end{inparaenum}

The paper is organized as follows. In Section \ref{GQMLERev}, the GQMLE  is reviewed. Section \ref{PMT} reviews the principles of the considered probability measure transformation. In Section \ref{QMLEst} we use this transformation to construct the MT-GQMLE. The proposed estimator is applied to linear regression and source localization in Section \ref{NumExamp}. In Section \ref{Conclusion}, we conclude by summarizing the main points of the paper. The proofs of the propositions and theorems stated throughout the paper are given in the Appendix.
\section{The Gaussian quasi maximum likelihood estimator}
\label{GQMLERev}
\subsection{Preliminaries}
We define the measure space $\left(\XCal,\mathcal{S}_{\XCalsc},\pxthetazero\right)$, where $\XCal\subseteq\Csp^{p}$ is the observation space of a complex-valued random vector $\Xmat$, $\mathcal{S}_{\XCalsc}$ is a $\sigma$-algebra over $\XCal$, and $\pxthetazero$ is a probability measure that belongs to some unknown parametric family of probability measures 
\begin{equation}
\label{PMClass}
\mathcal{P}\triangleq\left\{\pxtheta:\thetavec\in\Thetasp\subseteq\Rsp^{m}\right\}
\end{equation}
on $\mathcal{S}_{\XCalsc}$. The vector $\thetaveczero$ denotes a fixed unknown parameter value to be estimated and the vector $\thetavec$ is a free parameter that indexes the parameter space $\Thetasp$. 
The vector $\thetavec_{0}$ will be called the true vector parameter and $\pxthetazero$ will be called the true probability distribution of $\Xmat$. It is assumed that the family $\left\{\pxtheta: \thetavec\in\Thetasp\right\}$ is absolutely continuous w.r.t. a dominating $\sigma$-finite measure $\rho$ on $\mathcal{S}_{\XCalsc}$, such that the Radon-Nykonym derivative \cite{Folland} 
\begin{equation}
\label{likelihoodfunc}
f\left(\xvec;\thetavec\right)\triangleq\frac{d\pxtheta\left(\xvec\right)}{d\rho\left(\xvec\right)},
\end{equation}
exists for all $\thetavec\in\Thetasp$. The function $f\left(\xvec;\thetavec\right)$ is called the likelihood function of $\thetavec$ observed by the vector $\xvec\in\XCal$.
Let $g:\XCal\rightarrow\Csp$ denote an integrable scalar function on $\XCal$. The expectation of $g\left(\Xmat\right)$ under $\pxtheta$ is defined as
\begin{equation}
\label{ExpDef} 
{\rm{E}}\left[g\left(\Xmat\right);\pxtheta\right]\triangleq\int\limits_{\XCalsc}g\left(\xvec\right)d\pxtheta\left(\xvec\right).
\end{equation}
The empirical probability measure $\hat{P}_{\Xmatsc}$ given a sequence of samples $\Xmat_{n}$, $n=1,\ldots,N$ from $\pxtheta$ is specified by
\begin{equation}
\label{EmpProbMes}
\hat{P}_{\Xmatsc}\left(A\right)=\frac{1}{N}\sum\limits_{n=1}^{N}\delta_{\Xmatsc_{n}}\left(A\right),
\end{equation}
where $A\in\mathcal{S}_{\XCalsc}$, and $\delta_{\Xmatsc_{n}}\left(\cdot\right)$ is the Dirac probability measure at $\Xmat_{n}$ \cite{Folland}.
\subsection{The Gaussian QMLE}
\label{GQMLE}
Given a sequence of samples from $\pxthetazero$, the GQMLE of $\thetaveczero$ minimizes the empirical Kulback-Leibler divergence (KLD) between the probability measure $\pxthetazero$ and a hypothesized complex circular Gaussian probability distribution \cite{Schreier} $\Phi_{\Xmatsc;\thetavecsc}$ that is characterized by the mean vector $\muvec_{\Xmatsc}\left(\thetavec\right)\triangleq{\rm{E}}\left[\Xmat;\pxtheta\right]$ and the covariance matrix $\bSigma_{\Xmatsc}\left(\thetavec\right)\triangleq{\rm{cov}}\left[\Xmat;\pxtheta\right]$. The KLD between $\pxthetazero$ and $\Phi_{\Xmatsc;\thetavecsc}$ is defined as \cite{Kullback}:
\begin{equation}
\label{KL0}
D_{\rm{KL}}\left[\pxthetazero||\Phi_{\Xmatsc;\thetavecsc}\right]\triangleq{\rm{E}}\left[\log\frac{{f\left(\Xmat;\thetaveczero\right)}}
{\phi\left(\Xmat;\thetavec\right)};\pxthetazero\right],
\end{equation}
where $f\left(\xvec;\thetaveczero\right)$ and 
\begin{equation}
\label{GaussDens0}
\phi\left(\xvec;\thetavec\right)\triangleq{\det}^{-1}\left[\pi\bSigma_{\Xmatsc}\left(\thetavec\right)\right]
\exp\left({-\left(\xvec-\muvec_{\Xmatsc}\left(\thetavec\right)\right)^{H}
\bSigma^{-1}_{\Xmatsc}\left(\thetavec\right)
 \left(\xvec-\muvec_{\Xmatsc}\left(\thetavec\right)\right)}\right)
\end{equation}
are the density functions associated with $\pxthetazero$ and $\Phi_{\Xmatsc;\thetavecsc}$, respectively, w.r.t. the dominating $\sigma$-finite measure $\rho$ on $\mathcal{S}_{\XCalsc}$. An empirical estimate of (\ref{KL0}) given a sequence of samples $\Xmat_{n}$, $n=1,\ldots,N$ from $\pxthetazero$ is defined as:
\begin{equation}
\label{KLEmp0}
\hat{D}_{\rm{KL}}\left[\pxthetazero||\Phi_{\Xmatsc;\thetavecsc}\right]
\triangleq\frac{1}{N}\sum\limits_{n=1}^{N}\log\frac{f\left(\Xmat_{n};\thetaveczero\right)}{\phi\left(\Xmat_{n};\thetavec\right)}.
\end{equation}
The GQMLE of $\thetaveczero$ is obtained by minimization of (\ref{KLEmp0}) w.r.t. $\thetavec$. Since the density $f\left(\cdot;\thetaveczero\right)$ is $\thetavec$-independent this minimization is equivalent to maximization of the term $\frac{1}{N}\sum_{n=1}^{N}\log{\phi\left(\Xmat_{n};\thetavec\right)}$, which by (\ref{GaussDens0}) amounts to maximization of the following objective function:
\begin{equation}
\label{ObjFun0}
J\left(\thetavec\right)\triangleq
-D_{\rm{LD}}\left[\hat{\bSigma}_{\Xmatsc}||\bSigma_{\Xmatsc}\left(\thetavec\right)\right] -
\left\|\hat{\muvec}_{\Xmatsc}-{\muvec}_{\Xmatsc}\left(\thetavec\right)\right\|^{2}_{{\left(\bSigmasc_{\xvec}\left(\thetavecsc\right)\right)}^{-1}},
\end{equation}
where $D_{\rm{LD}}\left[\Amat||\Bmat\right]\triangleq
{\rm{tr}}\left[\Amat\Bmat^{-1}\right]-\log\det\left[\Amat\Bmat^{-1}\right]-p$
is the log-determinant divergence \cite{LogDetDiv} between positive definite matrices $\Amat,\Bmat\in\Csp^{p\times{p}}$, $\left\|\avec\right\|_{\Cmat}\triangleq\sqrt{\avec^{H}\Cmat\avec}$ denotes the weighted Euclidian norm of a vector $\avec\in\Csp^{p}$ with positive-definite weighting matrix $\Cmat\in\Csp^{p\times{p}}$, and $\hat{\muvec}_{\Xmatsc}\triangleq\frac{1}{N}\sum_{n=1}^{N}\Xmat_{n}$ and $\hat{\bSigma}_{\Xmatsc}\triangleq\frac{1}{N}\sum_{n=1}^{N}\left(\Xmat_{n}-\hat{\muvec}_{\Xmatsc}\right)\left(\Xmat_{n}-\hat{\muvec}_{\Xmatsc}\right)^{H}$ denote the standard SMV and SCM.
Hence, the GQMLE of $\thetavec_{0}$ is given by:
\begin{equation}
\label{PropEst0}
\hat{\thetavec}=\arg\max\limits_{\thetavecsc\in\Thetaspsc}J\left(\thetavec\right).
\end{equation}
\section{Probability measure transformation}
\label{PMT}
In this section, we review the principles of the probability measure transform \cite{MTCCA}-\cite{Todros3} in the new context of the parametric family (\ref{PMClass}).
We define the parametric measure-transformed mean vector and covariance matrix and show their relation to higher-order statistical moments. Furthermore, we formulate the empirical measured transformed mean and covariance and state the conditions for strong consistency. These quantities are used in the following section to construct the proposed measure-transformed extension of the GQMLE (\ref{PropEst0}). 
\subsection{Probability measure transform}
The following definition of a transformation on the parametric probability measure $P_{\Xmatsc;\thetavecsc}$ parallels the transformation on a non-parametric distribution stated as Definition 1 in \cite{Todros2}.
\begin{Definition}  
\label{Def1}
Given a non-negative function $u:\Csp^{p}\rightarrow\Rsp_{+}$ satisfying 
\begin{equation} 
\label{Cond}
0<{{{\rm{E}}}\left[u\left(\Xmat\right);P_{\Xmatsc;\thetavecsc}\right]}<\infty,
\end{equation}
a transform on $P_{\Xmatsc;\thetavecsc}$ is defined via the relation:
\begin{equation}
\label{MeasureTransform} 
\qxtheta\left(A\right)\triangleq{\rm{T}}_{u}\left[\pxtheta\right]\left(A\right)=\int\limits_{A}\varphi_{u}\left(\xvec;\thetavec\right)d\pxtheta\left(\xvec\right),
\end{equation}
where $A\in\mathcal{S}_{\XCalsc}$ and
\begin{equation} 
\label{VarPhiDef}  
\varphi_{u}\left(\xvec;\thetavec\right)\triangleq\frac{u\left(\xvec\right)}{{{\rm{E}}}\left[u\left(\Xmat\right);\pxtheta\right]}.
\end{equation}
The function $u\left(\cdot\right)$ is called the MT-function.
\end{Definition}  
Similarly to  the non-parametric case (Proposition 1 in \cite{Todros2}), the parameterized transformation (\ref{MeasureTransform}) has the following properties.
\begin{Proposition}[Properties of the transform]
\label{Prop1}
Let $\qxtheta$ be defined by relation (\ref{MeasureTransform}). 
Then
\begin{enumerate}[1)]
\item
\label{P1}
$\qxtheta$ is a probability measure on $\mathcal{S}_{\XCalsc}$.
\item
\label{P2}
$\qxtheta$ is absolutely continuous w.r.t. $\pxtheta$, with Radon-Nikodym derivative \cite{Folland}:
\begin{equation}
\label{MeasureTransformRadNik}      
\frac{d\qxtheta\left(\xvec\right)}{d\pxtheta\left(\xvec\right)}=\varphi_{u}\left(\xvec;\thetavec\right).
\end{equation}
\end{enumerate} 
[Proof: see Appendix A in \cite{Todros2}]
\end{Proposition}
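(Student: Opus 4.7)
The plan is to verify the two assertions by direct unpacking of (\ref{MeasureTransform}), with the hypothesis (\ref{Cond}) playing the single essential role of making the normalizing constant ${\rm{E}}[u(\Xmat);\pxtheta]$ strictly positive and finite. This guarantees that $\varphi_u(\cdot;\thetavec)$ in (\ref{VarPhiDef}) is a well-defined non-negative element of $\mathcal{L}_1(\pxtheta)$, after which both claims reduce to elementary measure-theoretic manipulations.

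For part \ref{P1}, I would verify the three defining axioms of a probability measure in turn. Non-negativity of $\qxtheta(A)$ is immediate since $\varphi_u\geq 0$ and the integral of a non-negative function is non-negative. The normalization $\qxtheta(\XCal)=1$ follows by setting $A=\XCal$ in (\ref{MeasureTransform}) and inserting (\ref{VarPhiDef}): the factor ${\rm{E}}[u(\Xmat);\pxtheta]$ in the denominator of $\varphi_u$ cancels exactly against the same quantity arising from the integral of $u$ in the numerator. Countable additivity over a disjoint sequence $\{A_k\}\subset\mathcal{S}_{\XCalsc}$ is then obtained from the monotone convergence theorem applied to the non-negative partial sums $\sum_{k=1}^{N}\mathbbm{1}_{A_k}\varphi_u$, which increase pointwise to $\mathbbm{1}_{\bigcup_{k} A_k}\varphi_u$.

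Part \ref{P2} then follows at once. Absolute continuity $\qxtheta\ll\pxtheta$ is direct: if $\pxtheta(A)=0$, then $\qxtheta(A)=\int_{A}\varphi_u(\xvec;\thetavec)\,d\pxtheta(\xvec)=0$. The identity (\ref{MeasureTransformRadNik}) for the Radon-Nikodym derivative is then read off by comparing (\ref{MeasureTransform}) with the defining relation of the Radon-Nikodym derivative and invoking its $\pxtheta$-a.e.\ uniqueness.

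I do not anticipate a substantive obstacle. The only conceivable subtlety would concern $\sigma$-finiteness or an auxiliary integrability hypothesis, but both $\qxtheta$ and $\pxtheta$ are finite, indeed probability, measures and $\varphi_u\in\mathcal{L}_1(\pxtheta)$ by construction, so no additional assumption is required. The argument is essentially the parameter-carrying restatement of the non-parametric proof given in Appendix A of \cite{Todros2}; the only work specific to the parametric setting is observing that (\ref{Cond}) is imposed pointwise in $\thetavec$, so every step is carried out with $\thetavec$ held fixed.
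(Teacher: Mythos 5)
Your proof is correct and follows exactly the route the paper relies on: the paper itself defers to Appendix A of \cite{Todros2}, where the same three-step verification (non-negativity, normalization via cancellation of ${\rm{E}}\left[u\left(\Xmat\right);\pxtheta\right]$, countable additivity by monotone convergence) and the same immediate derivation of absolute continuity and the Radon--Nikodym identity are carried out for the non-parametric case. Your only addition --- noting that (\ref{Cond}) is imposed pointwise in $\thetavec$ so the argument holds with $\thetavec$ fixed --- is precisely the observation the paper makes implicitly in calling the proposition a ``straightforward parametric extension.''
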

As in \cite{Todros2} we say that the probability measure $\qxtheta$ is ``generated by the MT-function $u\left(\cdot\right)$''. 
\subsection{The measure-transformed mean and covariance}
\label{MT_MEAN_COV}
According to (\ref{MeasureTransformRadNik}) the mean vector and covariance matrix of $\Xmat$ under $\qxtheta$, called the MT-mean and MT-covariance, are given by the parameterized functions:
\begin{equation}
\label{MTMean}    
\muvec^{\left(u\right)}_{\Xmatsc}\left(\thetavec\right)\triangleq{\rm{E}}\left[\Xmat\varphi_{u}\left(\Xmat;\thetavec\right);\pxtheta\right]
\end{equation}
and
\begin{equation}  
\label{MTCovZ}     
\bSigma^{\left(u\right)}_{\Xmatsc}\left(\thetavec\right)\triangleq{\rm{E}}\left[
\left(\Xmat-\muvec^{\left(u\right)}_{\Xmatsc}\left(\thetavec\right)\right)
\left(\Xmat-\muvec^{\left(u\right)}_{\Xmatsc}\left(\thetavec\right)\right)^{H}\varphi_{u}\left(\Xmat;\thetavec\right);\pxtheta\right],
\end{equation}
respectively. Equations  (\ref{MTMean})  and (\ref{MTCovZ}) imply that $\muvec^{\left(u\right)}_{\Xmatsc}\left(\thetavec\right)$ and $\bSigma^{\left(u\right)}_{\Xmatsc}\left(\thetavec\right)$ are weighted mean and covariance of $\Xmat$ under $\pxtheta$, with the weighting function $\varphi_{u}\left(\cdot;\cdot\right)$ defined in (\ref{VarPhiDef}). By modifying the MT-function $u\left(\cdot\right)$, such that the condition (\ref{Cond}) is satisfied, the MT-mean and MT-covariance are modified. In particular, by choosing $u\left(\cdot\right)$ to be any non-zero constant valued function we have $\qxtheta=\pxtheta$, for which the standard mean vector $\muvec_{\Xmatsc}\left(\thetavec\right)$ and covariance matrix $\bSigma_{\Xmatsc}\left(\thetavec\right)$ are obtained. Alternatively, when $u\left(\cdot\right)$ is non-constant analytic function, which has a convergent Taylor series expansion, the resulting MT-mean and MT-covariance involve \textit{higher-order statistical moments} of $\pxtheta$.
\subsection{The empirical measure-transformed mean and covariance}
According to  (\ref{MTMean})  and (\ref{MTCovZ}) the values of the parameterized MT-mean and MT-covariance functions can be estimated using only samples from the distribution $\pxtheta$. In the following Proposition, which is a straightforward parametric extension of the non-parametric case stated as Proposition 2 in \cite{Todros2}, strongly consistent estimates of $\muvec^{\left(u\right)}_{\Xmatsc}\left(\thetavec\right)$ and $\bSigma^{\left(u\right)}_{\Xmatsc}\left(\thetavec\right)$ are presented based on i.i.d. samples from $\pxtheta$. 
\begin{Proposition}[Strongly consistent estimates of the MT-mean and MT-covariance]
\label{ConsistentEst}
Let $\Xmat_{n}$, $n=1,\ldots,N$ denote a sequence of i.i.d. samples from $\pxtheta$. Define the empirical MT-mean and MT-covariance, respectively:
\begin{equation}   
\label{Mu_u_Est} 
\hat{\muvec}^{\left(u\right)}_{\Xmatsc}\triangleq\sum_{n=1}^{N}\Xmat_{n}\hat{\varphi}_{u}\left(\Xmat_{n}\right)
\end{equation}
and
\begin{equation}   
\label{Rx_u_Est}   
\hat{\bSigma}^{\left(u\right)}_{\Xmatsc}\triangleq\sum\limits_{n=1}^{N}\left(\Xmat_{n}-\hat{\muvec}^{\left(u\right)}_{\Xmatsc}\right)\left(\Xmat_{n}-\hat{\muvec}^{\left(u\right)}_{\Xmatsc}\right)^{H}\hat{\varphi}_{u}\left(\Xmat_{n}\right),
\end{equation}
where 
\begin{equation}
\label{hat_varphi}  
\hat{\varphi}_{u}\left(\Xmat_{n}\right)\triangleq\frac{u\left(\Xmat_{n}\right)}{\sum_{n=1}^{N}u\left(\Xmat_{n}\right)}.
\end{equation}
If 
\begin{equation}
\label{Cond11}
{\rm{E}}\left[\left\|\Xmat\right\|^{2}u\left(\Xmat\right);\pxtheta\right]<\infty
\end{equation}
then
$\hat{\muvec}^{\left(u\right)}_{\Xmatsc}\xrightarrow{\textit{w.p. 1}}{\muvec}^{\left(u\right)}_{\Xmatsc}\left(\thetavec\right)$ and $\hat{\bSigma}^{\left(u\right)}_{\Xmatsc}\xrightarrow{\textit{w.p. 1}}{\bSigma}^{\left(u\right)}_{\Xmatsc}\left(\thetavec\right)$ as $N\rightarrow\infty$, where ``$\xrightarrow{\textit{w.p. 1}}$'' denotes convergence with probability (w.p.) 1 \cite{MeasureTheory}. 
[Proof: see Appendix B in \cite{Todros2}]
\end{Proposition}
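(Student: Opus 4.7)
The plan is to reduce both claims to a straightforward application of Kolmogorov's strong law of large numbers (SLLN) to the numerator and denominator appearing in \eqref{hat_varphi}, and then invoke the continuous mapping theorem to pass the almost sure convergence through the ratio. First I would rewrite the empirical weights as
\begin{equation}
\hat{\varphi}_{u}(\Xmat_n) = \frac{u(\Xmat_n)/N}{\tfrac{1}{N}\sum_{k=1}^{N}u(\Xmat_k)},
\end{equation}
so that
\begin{equation}
\hat{\muvec}^{(u)}_{\Xmatsc} = \frac{\tfrac{1}{N}\sum_{n=1}^{N}\Xmat_n\,u(\Xmat_n)}{\tfrac{1}{N}\sum_{n=1}^{N}u(\Xmat_n)},\qquad
\sum_{n=1}^{N}\Xmat_n\Xmat_n^{H}\hat{\varphi}_u(\Xmat_n) = \frac{\tfrac{1}{N}\sum_{n=1}^{N}\Xmat_n\Xmat_n^{H}u(\Xmat_n)}{\tfrac{1}{N}\sum_{n=1}^{N}u(\Xmat_n)}.
\end{equation}

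Next I would verify the integrability hypotheses needed to apply the SLLN to each of the three averages appearing above. The denominator average converges a.s.\ to ${\rm{E}}[u(\Xmat);\pxtheta]$ by \eqref{Cond}, which is strictly positive. For the numerator of the MT-mean, Cauchy--Schwarz gives
\begin{equation}
{\rm{E}}\!\left[\|\Xmat\|\,u(\Xmat);\pxtheta\right] \leq \sqrt{{\rm{E}}\!\left[\|\Xmat\|^{2}u(\Xmat);\pxtheta\right]\cdot{\rm{E}}\!\left[u(\Xmat);\pxtheta\right]} < \infty
\end{equation}
by \eqref{Cond11} and \eqref{Cond}, so each coordinate of $\Xmat\,u(\Xmat)$ is integrable and the SLLN yields coordinate-wise a.s.\ convergence to ${\rm{E}}[\Xmat\,u(\Xmat);\pxtheta]$. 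For the outer-product term, the bound $|X_iX_j^{*}|\,u(\Xmat)\leq\tfrac{1}{2}(|X_i|^2+|X_j|^2)u(\Xmat)\leq\|\Xmat\|^{2}u(\Xmat)$ together with \eqref{Cond11} shows that each entry of $\Xmat\Xmat^{H}u(\Xmat)$ is integrable, so the SLLN again applies entry-wise and delivers a.s.\ convergence to ${\rm{E}}[\Xmat\Xmat^{H}u(\Xmat);\pxtheta]$.

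Having established these three componentwise a.s.\ limits on a common probability-one event, I would apply the continuous mapping theorem to the ratio $(\avec,b)\mapsto\avec/b$, which is continuous at every $(\avec,b)$ with $b>0$. Since the denominator limit is strictly positive by \eqref{Cond}, this gives
\begin{equation}
\hat{\muvec}^{(u)}_{\Xmatsc}\xrightarrow{w.p.1}\frac{{\rm{E}}[\Xmat\,u(\Xmat);\pxtheta]}{{\rm{E}}[u(\Xmat);\pxtheta]}
={\rm{E}}\!\left[\Xmat\,\varphi_{u}(\Xmat;\thetavec);\pxtheta\right]=\muvec^{(u)}_{\Xmatsc}(\thetavec),
\end{equation}
using the definition \eqref{VarPhiDef}. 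Expanding $\hat{\bSigma}^{(u)}_{\Xmatsc}=\sum_n\Xmat_n\Xmat_n^{H}\hat{\varphi}_u(\Xmat_n)-\hat{\muvec}^{(u)}_{\Xmatsc}(\hat{\muvec}^{(u)}_{\Xmatsc})^{H}$ and applying the continuous mapping theorem once more (now to the ratio and to the bilinear map $\muvec\mapsto\muvec\muvec^{H}$, both continuous) yields
\begin{equation}
\hat{\bSigma}^{(u)}_{\Xmatsc}\xrightarrow{w.p.1}{\rm{E}}\!\left[\Xmat\Xmat^{H}\varphi_{u}(\Xmat;\thetavec);\pxtheta\right]-\muvec^{(u)}_{\Xmatsc}(\thetavec)\muvec^{(u)H}_{\Xmatsc}(\thetavec)=\bSigma^{(u)}_{\Xmatsc}(\thetavec),
\end{equation}
which is the desired second claim. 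The main (mild) obstacle is simply the bookkeeping needed to extract the integrability of $\Xmat\,u(\Xmat)$ and $\Xmat\Xmat^{H}u(\Xmat)$ from the single scalar hypothesis \eqref{Cond11}; once that is done, the SLLN plus the continuous mapping theorem deliver the result immediately, and the argument is essentially identical to the non-parametric case of Proposition~2 in \cite{Todros2}, applied for a fixed $\thetavec\in\Thetasp$.
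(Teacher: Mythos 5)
Your proof is correct and follows essentially the same route the paper takes: the paper defers to Appendix B of \cite{Todros2}, which is exactly this SLLN-plus-continuous-mapping argument (normalize the weights into ratios of sample averages, check integrability of $u(\Xmat)$, $\Xmat u(\Xmat)$ and $\Xmat\Xmat^{H}u(\Xmat)$ via (\ref{Cond}), Cauchy--Schwarz and the bound $|X_iX_j^*|\le\|\Xmat\|^2$, then pass to the limit through the ratio and the outer-product decomposition of $\hat{\bSigma}^{(u)}_{\Xmatsc}$). Your bookkeeping for the integrability conditions and the algebraic identity $\hat{\bSigma}^{(u)}_{\Xmatsc}=\sum_n\Xmat_n\Xmat_n^{H}\hat{\varphi}_u(\Xmat_n)-\hat{\muvec}^{(u)}_{\Xmatsc}\hat{\muvec}^{(u)H}_{\Xmatsc}$ (which uses $\sum_n\hat{\varphi}_u(\Xmat_n)=1$) are both accurate, so nothing is missing.
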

Notice that when the MT-function $u(\cdot)$ is non-zero constant valued, the estimators $\hat{\muvec}^{\left(u\right)}_{\Xmatsc}$ and $\frac{N}{N-1}\hat{\bSigma}^{(u)}_{\Xmatsc}$ reduce to the standard unbiased sample mean vector (SMV) and sample covariance matrix (SCM), respectively. 
\section{The measure-transformed Gaussian quasi maximum likelihood estimator}
\label{QMLEst}
In this section we extend the GQMLE (\ref{PropEst0}) by applying the transformation (\ref{MeasureTransform}) to the probability measure $\pxtheta$. Similarly to the GQMLE, 
given a sequence of samples from $\pxthetazero$, the proposed MT-GQMLE of $\thetaveczero$ minimizes the empirical Kullback-Leibler divergence between the transformed probability measure $Q^{\left(u\right)}_{\Xmatsc;\thetaveczerosc}$ and a hypothesized complex circular Gaussian probability distribution. This Gaussian measure denoted here as $\Phi^{\left(u\right)}_{\Xmatsc;\thetavecsc}$ is characterized by the MT-mean $\muvec^{(u)}_{\Xmatsc}\left(\thetavec\right)$ and MT-covariance $\bSigma^{(u)}_{\Xmatsc}\left(\thetavec\right)$ corresponding to the transformed probability measure $Q^{\left(u\right)}_{\Xmatsc;\thetavecsc}$. The minimization is carried out w.r.t. the free vector parameter $\thetavec$.
Regularity conditions for consistency, asymptotic normality and unbiasedness are derived. Additionally, we provide a closed-form expression for the asymptotic MSE and obtain a necessary and sufficient condition for asymptotic efficiency. Robustness of the MT-GQMLE to outliers is studied using its influence function. Optimal selection of the MT-function $u\left(\cdot\right)$ out of some parametric class of functions is also discussed.  
\subsection{The MT-GQMLE}
The KLD between $Q^{\left(u\right)}_{\Xmatsc;\thetaveczerosc}$ and $\Phi^{\left(u\right)}_{\Xmatsc;\thetavecsc}$ is defined as \cite{Kullback}:
\begin{equation}
\label{KL1}
D_{\rm{KL}}\left[Q^{(u)}_{\Xmatsc;\thetaveczerosc}||\Phi^{(u)}_{\Xmatsc;\thetavecsc}\right]\triangleq{\rm{E}}\left[\log\frac{{q^{(u)}\left(\Xmat;\thetaveczero\right)}}
{\phi^{(u)}\left(\Xmat;\thetavec\right)};Q^{\left(u\right)}_{\Xmatsc;\thetaveczerosc}\right],
\end{equation}
where $q^{(u)}\left(\xvec;\thetaveczero\right)$ and 
\begin{equation}
\label{GaussDens}
\phi^{(u)}\left(\xvec;\thetavec\right)\triangleq{\det}^{-1}\left[\pi\bSigma^{(u)}_{\Xmatsc}\left(\thetavec\right)\right]
\exp\left({-\left(\xvec-\muvec^{(u)}_{\Xmatsc}\left(\thetavec\right)\right)^{H}
\left(\bSigma^{(u)}_{\Xmatsc}\left(\thetavec\right)\right)^{-1}
 \left(\xvec-\muvec^{(u)}_{\Xmatsc}\left(\thetavec\right)\right)}\right)
\end{equation}
are the density functions associated with $Q^{\left(u\right)}_{\Xmatsc;\thetaveczerosc}$ and $\Phi^{\left(u\right)}_{\Xmatsc;\thetavecsc}$, respectively. According to  (\ref{MeasureTransformRadNik}), the divergence $D_{\rm{KL}}\left[Q^{(u)}_{\Xmatsc;\thetaveczerosc}||\Phi^{(u)}_{\Xmatsc;\thetavecsc}\right]$ can be estimated using only samples from $\pxthetazero$.
Hence, similarly to (\ref{Mu_u_Est}) and  (\ref{Rx_u_Est}), an empirical estimate of (\ref{KL1}) given a sequence of samples $\Xmat_{n}$, $n=1,\ldots,N$ from $\pxthetazero$ is defined as:
\begin{equation}
\label{KLEmp}
\hat{D}_{\rm{KL}}\left[Q^{(u)}_{\Xmatsc;\thetaveczerosc}||\Phi^{(u)}_{\Xmatsc;\thetavecsc}\right]
\triangleq\sum\limits_{n=1}^{N}\hat{\varphi}_{u}\left(\Xmat_{n}\right)\log\frac{q^{(u)}\left(\Xmat_{n};\thetaveczero\right)}{\phi^{(u)}\left(\Xmat_{n};\thetavec\right)},
\end{equation}
where $\hat{\varphi}_{u}\left(\cdot\right)$ is defined in (\ref{hat_varphi}). The proposed estimator of $\thetaveczero$ is obtained by minimization of (\ref{KLEmp}) w.r.t. $\thetavec$. 
Since the density $q^{(u)}\left(\cdot;\thetaveczero\right)$ is $\thetavec$-independent this minimization is equivalent to maximization of the term $\sum_{n=1}^{N}\hat{\varphi}_{u}\left(\Xmat_{n}\right)\log{\phi^{(u)}\left(\Xmat_{n};\thetavec\right)}$, which by  (\ref{Mu_u_Est}), (\ref{Rx_u_Est}) and (\ref{GaussDens}) amounts to maximization of the following objective function:
\begin{equation}
\label{ObjFun}
J_{u}\left(\thetavec\right)\triangleq
-D_{\rm{LD}}\left[\hat{\bSigma}^{\left(u\right)}_{\Xmatsc}||\bSigma^{\left(u\right)}_{\Xmatsc}\left(\thetavec\right)\right] -
\left\|\hat{\muvec}^{\left(u\right)}_{\Xmatsc}-{\muvec}^{\left(u\right)}_{\Xmatsc}\left(\thetavec\right)\right\|^{2}_{{\left(\bSigmasc^{\left(u\right)}_{\xvec}\left(\thetavecsc\right)\right)}^{-1}},
\end{equation}
where the operators ${D}_{\rm{LD}}\left[\cdot||\cdot\right]$ and $\left\|\cdot\right\|_{(\cdot)}$ are defined below (\ref{ObjFun0}). Thus, the proposed MT-GQMLE is given by:
\begin{equation}
\label{PropEst}
\hat{\thetavec}_{u}=\arg\max\limits_{\thetavecsc\in\Thetaspsc}J_{u}\left(\thetavec\right).
\end{equation}
By modifying the MT-function $u\left(\cdot\right)$, such that the condition (\ref{Cond}) is satisfied the MT-GQMLE is modified, resulting in a family of estimators
generalizing the GQMLE described in Subsection \ref{GQMLE}. In particular, if $u\left(\xvec\right)$ is chosen to be any non-zero constant function over $\XCal$, then $\qxtheta=\pxtheta$ and the standard GQMLE (\ref{PropEst0})  is obtained.
\subsection{Asymptotic performance analysis}
Here, we study the asymptotic performance of the proposed estimator (\ref{PropEst}). For simplicity, we assume a sequence of i.i.d. samples $\Xmat_{n}$,  $n=1,\ldots,N$ from $\pxthetazero$. 
\begin{Theorem}[Strong consistency of $\hat{\thetavec}_{u}$]
\label{ConsistencyTh}
Assume that the following conditions are satisfied:
\begin{enumerate}[({A}-1)]
\item
\label{AS_1}
The parameter space $\Thetasp$ is compact.
\item
\label{AS_2}
$\muvec^{(u)}_{\Xmatsc}\left(\thetaveczero\right)\neq\muvec^{(u)}_{\Xmatsc}\left(\thetavec\right)$ or $\bSigma^{(u)}_{\Xmatsc}\left(\thetaveczero\right)\neq\bSigma^{(u)}_{\Xmatsc}\left(\thetavec\right)$ $\forall\thetavec\neq\thetaveczero$. 
\item
\label{AS_3} 
$\bSigma^{(u)}_{\Xmatsc}\left(\thetavec\right)$ is non-singular $\forall\thetavec\in\Thetasp$.
\item
\label{AS_4} 
$\muvec^{\left(u\right)}_{\Xmatsc}\left(\thetavec\right)$ and $\bSigma^{\left(u\right)}_{\Xmatsc}\left(\thetavec\right)$ are continuous over $\Thetasp$.
\item
\label{AS_5}
${\rm{E}}\left[\left\|\Xmat\right\|^{2}u\left(\Xmat\right);\pxthetazero\right]<\infty$.
\end{enumerate}
Then, 
\begin{equation}
\label{StrongConsistency}
\hat{\thetavec}_{u}\xrightarrow{\textit{w.p. 1}}\thetaveczero\hspace{0.2cm}{\text{as $N\rightarrow\infty$}}.
\end{equation}
[A proof is given in Appendix \ref{ConsistencyThProof}]
\end{Theorem}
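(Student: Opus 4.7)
The plan is to apply the classical consistency framework for extremum estimators: on the compact parameter space (condition (A-1)), if $J_{u}(\thetavec)$ converges uniformly almost surely on $\Thetasp$ to a deterministic limit $\bar{J}_{u}(\thetavec)$ whose unique maximizer is $\thetaveczero$, then its argmax is consistent. The candidate limit is obtained by substituting into (\ref{ObjFun}) the almost-sure limits $(\hat{\muvec}^{(u)}_{\Xmatsc},\hat{\bSigma}^{(u)}_{\Xmatsc})\to(\muvec^{(u)}_{\Xmatsc}(\thetaveczero),\bSigma^{(u)}_{\Xmatsc}(\thetaveczero))$ supplied by Proposition \ref{ConsistentEst} under (A-5):
\begin{equation*}
\bar{J}_{u}(\thetavec)\triangleq -D_{\rm{LD}}\!\left[\bSigma^{(u)}_{\Xmatsc}(\thetaveczero)||\bSigma^{(u)}_{\Xmatsc}(\thetavec)\right] - \left\|\muvec^{(u)}_{\Xmatsc}(\thetaveczero)-\muvec^{(u)}_{\Xmatsc}(\thetavec)\right\|^{2}_{(\bSigmasc^{(u)}_{\xvec}(\thetavecsc))^{-1}}.
\end{equation*}

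Uniqueness of the maximizer follows from two elementary nonnegativity facts: the log-determinant divergence is nonnegative and vanishes only when its two positive-definite arguments coincide, while the weighted Euclidean quadratic form is nonnegative and vanishes only when its argument is zero. Thus $\bar{J}_{u}(\thetavec)\leq 0=\bar{J}_{u}(\thetaveczero)$, with equality only if both the MT-mean and MT-covariance at $\thetavec$ and $\thetaveczero$ coincide; identifiability assumption (A-2) rules this out for $\thetavec\neq\thetaveczero$.

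For uniform convergence, I would write $J_{u}(\thetavec)=F(\hat{\muvec}^{(u)}_{\Xmatsc},\hat{\bSigma}^{(u)}_{\Xmatsc};\thetavec)$, where $F$ is the deterministic map obtained from (\ref{ObjFun}) by inserting generic arguments for the empirical MT-moments. Continuity (A-4) together with non-singularity (A-3) and compactness (A-1) implies that $\bSigma^{(u)}_{\Xmatsc}(\thetavec)$ has a uniform positive lower bound on its smallest eigenvalue over $\Thetasp$, so $(\bSigma^{(u)}_{\Xmatsc}(\thetavec))^{-1}$ is continuous and uniformly bounded in operator norm on $\Thetasp$. Consequently $F$ is jointly uniformly continuous on any compact neighborhood of $\{(\muvec^{(u)}_{\Xmatsc}(\thetaveczero),\bSigma^{(u)}_{\Xmatsc}(\thetaveczero))\}\times\Thetasp$. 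Combined with the a.s.\ convergence of the empirical MT-moments, this yields $\sup_{\thetavecsc\in\Thetaspsc}|J_{u}(\thetavec)-\bar{J}_{u}(\thetavec)|\xrightarrow{\textit{w.p. 1}}0$, whence the classical argmax theorem delivers (\ref{StrongConsistency}).

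\emph{Main obstacle.} The delicate step is the uniform convergence: stochastic perturbations of $\hat{\bSigma}^{(u)}_{\Xmatsc}$ appear simultaneously inside a log-determinant and inside the inverse weighting the mean-matching term, while all terms also depend on $\thetavec$ through $\muvec^{(u)}_{\Xmatsc}(\thetavec)$ and $\bSigma^{(u)}_{\Xmatsc}(\thetavec)$. Conditions (A-1), (A-3), and (A-4) conspire to give the uniform spectral lower bound on $\bSigma^{(u)}_{\Xmatsc}(\thetavec)$ that is essential here; without such a bound the weighted norm and the log-determinant divergence could fail to be uniformly continuous in $\thetavec$, and the extremum-estimator machinery would break down.
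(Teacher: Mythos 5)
Your proposal is correct and follows essentially the same route as the paper's proof: the same deterministic limit $\bar{J}_{u}$, the same uniqueness argument from the nonnegativity of the log-determinant divergence and the weighted norm together with (A-2), and the same uniform-convergence-plus-argmax conclusion (the paper invokes Theorem 3.4 of White's monograph). The only cosmetic difference is that the paper establishes uniform convergence via an explicit Cauchy--Schwarz-type bound on $\sup_{\thetavecsc}|J_u(\thetavec)-\bar{J}_u(\thetavec)|$ rather than your joint-uniform-continuity packaging, but the ingredients (uniform spectral bound from (A-1), (A-3), (A-4); a.s.\ convergence of the empirical MT-moments from Proposition \ref{ConsistentEst} under (A-5); continuous mapping) are identical.
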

We note that Assumption A-\ref{AS_2} states an identifiability condition of $\thetaveczero$ under the Gaussian family of probability measures $\left\{\Phi^{(u)}_{\Xmatsc;\thetavecsc}:\thetavec\in\Thetasp\right\}$. This is because $\Phi^{(u)}_{\Xmatsc;\thetavecsc_{0}}\neq\Phi^{(u)}_{\Xmatsc;\thetavecsc}$ $\forall\thetavec\neq\thetaveczero$ if and only if $\muvec^{(u)}_{\Xmatsc}\left(\thetaveczero\right)\neq\muvec^{(u)}_{\Xmatsc}\left(\thetavec\right)$ or $\bSigma^{(u)}_{\Xmatsc}\left(\thetaveczero\right)\neq\bSigma^{(u)}_{\Xmatsc}\left(\thetavec\right)$ $\forall\thetavec\neq\thetaveczero$. If this condition is not satisfied, then by Eq. (\ref{ObjFunDet}) the asymptotic objective function does not have a unique global maximum, leading to inconsistent estimates.  
\begin{Theorem}[Asymptotic normality and unbiasedness of $\hat{\thetavec}_{u}$]
\label{ANor} 
Assume that the following conditions are satisfied:
\begin{enumerate}[({B}-1)]
\item
\label{C1}
$\hat{\thetavec}_{u}\xrightarrow{P}\thetaveczero$ as $N\rightarrow\infty$, where ``$\xrightarrow{P}$'' denotes convergence in probability \cite{MeasureTheory}.
\item
\label{C2}
$\thetaveczero$ lies in the interior of $\Thetasp$ which is assumed to be compact.
\item
\label{C3}  
$\muvec^{\left(u\right)}_{\Xmatsc}\left(\thetavec\right)$ and $\bSigma^{\left(u\right)}_{\Xmatsc}\left(\thetavec\right)$ are twice continuously differentiable in $\Thetasp$.
\item
\label{C4}
${\rm{E}}\left[u^{2}\left(\Xmat\right);\pxthetazero\right]<\infty$ and ${\rm{E}}\left[\left\|\Xmat\right\|^{4}u^{2}\left(\Xmat\right);\pxthetazero\right]<\infty$.
\end{enumerate}
Then, 
\begin{equation}
\label{Asymp}
\sqrt{N}\left(\hat{\thetavec}_{u}-\thetaveczero\right)\xrightarrow{D}\mathcal{N}\left(\zerovec,\Rmat_{u}\left(\thetaveczero\right)\right)
\hspace{0.2cm}{\textrm{as}}\hspace{0.2cm}N\rightarrow\infty,
\end{equation}
where ``$\xrightarrow{D}$'' denotes convergence in distribution \cite{MeasureTheory},
\begin{equation}
\label{AMSE}  
\Rmat_{u}\left(\thetaveczero\right)=\Fmat^{-1}_{u}\left(\thetaveczero\right)\Gmat_{u}\left(\thetaveczero\right)\Fmat^{-1}_{u}\left(\thetaveczero\right),
\end{equation}
\begin{equation}   
\label{GDef}
\Gmat_{u}\left(\thetavec\right)\triangleq{\rm{E}}\left[u^2\left(\Xmat\right)\psivec_{u}\left(\Xmat;\thetavec\right)\psivec^{T}_{u}\left(\Xmat;\thetavec\right);P_{\Xmatsc;\thetaveczerosc}\right],
\end{equation}
\begin{equation}
\label{PsiDef}
\psivec_{u}\left(\Xmat;\thetavec\right)\triangleq\nabla_{\thetavecsc}\log\phi^{(u)}\left(\Xmat;\thetavec\right),
\end{equation}
\begin{equation}
\label{FDef}   
\Fmat_{u}\left(\thetavec\right)\triangleq-{\rm{E}}\left[u\left(\Xmat\right)\Gammamat_{u}\left(\Xmat;\thetavec\right);P_{\Xmatsc;\thetaveczerosc}\right],
\end{equation}
\begin{equation}
\label{GammaDef}  
\Gammamat_{u}\left(\Xmat;\thetavec\right)\triangleq\nabla^{2}_{\thetavecsc}\log\phi^{(u)}\left(\Xmat;\thetavec\right)
\end{equation}
and it is assumed that $\Fmat_{u}\left(\thetavec\right)$ is non-singular at $\thetavec=\thetaveczero$. [A proof is given in Appendix \ref{ANorProof}]
\end{Theorem}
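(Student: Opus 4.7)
The plan is to treat $\hat{\thetavec}_{u}$ as an M-estimator and follow the classical Taylor-expansion template: turn the maximization into a zero-of-scores equation, expand around $\thetaveczero$, apply a multivariate CLT to the score term and a uniform SLLN to the Hessian term, and finish via Slutsky's theorem. \textbf{Setup.} The normalizer $\sum_{k}u(\Xmat_{k})$ inside $\hat{\varphi}_{u}(\Xmat_{n})$ in (\ref{hat_varphi}) is free of $\thetavec$, so the stationarity condition $\nabla_{\thetavecsc}J_{u}(\hat{\thetavec}_{u})=\zerovec$ is equivalent, after multiplication by that positive random scalar, to
\begin{equation*}
\frac{1}{N}\sum_{n=1}^{N}u(\Xmat_{n})\,\psivec_{u}(\Xmat_{n};\hat{\thetavec}_{u})=\zerovec.
\end{equation*}
By B-\ref{C3} the score $\psivec_{u}(\xvec;\cdot)$ is continuously differentiable on $\Thetasp$, so a componentwise multivariate mean-value theorem produces a random point $\tilde{\thetavec}_{N}$ on the segment from $\hat{\thetavec}_{u}$ to $\thetaveczero$ with
\begin{equation*}
\sqrt{N}\bigl(\hat{\thetavec}_{u}-\thetaveczero\bigr)=-\Hmat_{N}^{-1}(\tilde{\thetavec}_{N})\,\frac{1}{\sqrt{N}}\sum_{n=1}^{N}u(\Xmat_{n})\,\psivec_{u}(\Xmat_{n};\thetaveczero),
\end{equation*}
where $\Hmat_{N}(\thetavec)\triangleq\frac{1}{N}\sum_{n=1}^{N}u(\Xmat_{n})\,\Gammamat_{u}(\Xmat_{n};\thetavec)$.

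\textbf{Score CLT.} The pivotal algebraic identity is ${\rm{E}}[u(\Xmat)\psivec_{u}(\Xmat;\thetaveczero);\pxthetazero]=\zerovec$. Direct differentiation of (\ref{GaussDens}) exhibits each entry of $\psivec_{u}(\xvec;\thetavec)$ as a linear combination of terms of the form $\vvec^{H}(\xvec-\muvec^{(u)}_{\Xmatsc}(\thetavec))$ (and its conjugate) and $(\xvec-\muvec^{(u)}_{\Xmatsc}(\thetavec))^{H}\Amat(\xvec-\muvec^{(u)}_{\Xmatsc}(\thetavec))-{\rm{tr}}[\Amat\bSigma^{(u)}_{\Xmatsc}(\thetavec)]$, with $\vvec,\Amat$ depending only on $\muvec^{(u)}_{\Xmatsc}(\thetavec)$, $\bSigma^{(u)}_{\Xmatsc}(\thetavec)$ and their first derivatives. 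By the very definitions (\ref{MTMean})-(\ref{MTCovZ}), the $u$-weighted expectation of each such term under $\pxthetazero$ vanishes at $\thetavec=\thetaveczero$, establishing the identity. Because $\psivec_{u}$ is polynomial of degree at most two in $\xvec$, condition B-\ref{C4} yields ${\rm{E}}[\|u(\Xmat)\psivec_{u}(\Xmat;\thetaveczero)\|^{2};\pxthetazero]<\infty$, and the multivariate CLT delivers
\begin{equation*}
\frac{1}{\sqrt{N}}\sum_{n=1}^{N}u(\Xmat_{n})\,\psivec_{u}(\Xmat_{n};\thetaveczero)\xrightarrow{D}\mathcal{N}\bigl(\zerovec,\Gmat_{u}(\thetaveczero)\bigr).
\end{equation*}

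\textbf{Hessian and Slutsky.} From B-\ref{C1} one has $\tilde{\thetavec}_{N}\xrightarrow{P}\thetaveczero$. The entries of $\Gammamat_{u}(\xvec;\thetavec)$ are rational functions of $\muvec^{(u)}_{\Xmatsc}(\thetavec)$, $\bSigma^{(u)}_{\Xmatsc}(\thetavec)$ and their first two derivatives, and polynomial of degree at most two in $\xvec$; compactness of $\Thetasp$ (B-\ref{C2}) together with continuity (B-\ref{C3}) bounds the $\thetavec$-dependent factors uniformly on a closed neighborhood of $\thetaveczero$, while B-\ref{C4} supplies an integrable envelope for $u(\Xmat)\Gammamat_{u}(\Xmat;\cdot)$ on that neighborhood. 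A uniform SLLN then produces $\Hmat_{N}(\tilde{\thetavec}_{N})\xrightarrow{P}-\Fmat_{u}(\thetaveczero)$, which is invertible by the standing non-singularity hypothesis, so the continuous mapping theorem plus Slutsky applied to the mean-value representation above yields (\ref{Asymp}) with covariance (\ref{AMSE}); the zero-mean Gaussian limit is precisely the asymptotic unbiasedness claim.

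\textbf{Main obstacle.} The central technical step is this uniform SLLN for $\Hmat_{N}$: producing the required integrable envelope for $\sup_{\thetavecsc}\|u(\Xmat)\Gammamat_{u}(\Xmat;\thetavec)\|$ on a neighborhood of $\thetaveczero$ requires careful tracking of how the first two $\thetavec$-derivatives of $\muvec^{(u)}_{\Xmatsc}(\thetavec)$ and $\bSigma^{(u)}_{\Xmatsc}(\thetavec)$ interact with the degree-two polynomial growth in $\Xmat$, and the fourth-moment bound in B-\ref{C4} is calibrated exactly to supply that envelope; the same envelope then legitimizes interchanging expectation and differentiation to identify the pointwise limit with $-\Fmat_{u}(\thetaveczero)$ in (\ref{FDef}).
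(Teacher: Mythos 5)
Your proposal is correct and follows essentially the same route as the paper's proof: stationarity of the weighted score, a componentwise mean-value expansion around $\thetaveczero$, the identity ${\rm{E}}[u(\Xmat)\psivec_{u}(\Xmat;\thetaveczero);\pxthetazero]=\zerovec$ feeding a multivariate CLT, a uniform (weak) law of large numbers with the polynomial-in-$\|\Xmat\|$ envelope for the Hessian term, and Slutsky's theorem to conclude. The paper establishes the envelope and uniform convergence over all of $\Thetasp$ rather than just a neighborhood, but this is an inessential difference.
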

Theorem \ref{ANor} implies that, similarly to the standard MLE \cite{Lehmann}, \cite{KayBook} and the side-information based MLE considered in \cite{Ahlswede}, the proposed MT-GQMLE converges to $\thetaveczero$ at a rate of $\frac{1}{\sqrt{N}}$.
Clearly, by (\ref{Asymp}) the asymptotic MSE of $\hat{\thetavec}_{u}$ is given by:
\begin{equation}
\label{AMSEN}  
\Cmat_{u}\left(\thetavec_{0}\right)=N^{-1}\Rmat_{u}\left(\thetaveczero\right).
\end{equation}
The following proposition relates (\ref{AMSEN}) and the Cram\'{e}r-Rao lower bound (CRLB) \cite{Cramer}, \cite{Rao}. 
\begin{Proposition}[Relation to the CRLB]
\label{EffTh} 
Let
\begin{equation}
\label{logfGrad}
\etavec(\Xmat;\thetavec)\triangleq\nabla_{\thetavecsc}\log{f}(\Xmat;\thetavec)
\end{equation}
denote the gradient of the logarithm over the likelihood function (\ref{likelihoodfunc}) w.r.t. $\thetavec$. Assume that the following conditions are satisfied:
\begin{enumerate}[({C}-1)]
\item
\label{D1}
For any $\thetavec\in\Thetasp$, the partial derivatives $\frac{\partial{f}\left(\xvec;\thetavecsc\right)}{\partial\theta_{k}}$ and $\frac{\partial^{2}\phi^{(u)}\left(\xvec,\thetavecsc\right)}{\partial\theta_{k}\partial\theta_{j}}$ $k,j=1,\ldots,m$ exist $\rho-\text{a.e.}$, where $\theta_{k}$, $k=1,\ldots,m$ denote the entries of the vector parameter $\thetavec$.
\item
\label{D2}
For any $\thetavec\in\Thetasp$, $v_{k}\left(\xvec;\thetavec\right)\triangleq\frac{\partial\log\phi^{(u)}\left(\xvec;\thetavecsc\right)}{\partial\theta_{k}}f\left(\xvec;\thetavec\right){u}\left(\xvec\right)\in\mathcal{L}_{1}\left(\XCal,\rho\right)$, $k=1,\ldots,m$, where $\mathcal{L}_{1}\left(\XCal,\rho\right)$ denotes the space of absolutely integrable functions, defined on $\XCal$, w.r.t. the measure $\rho$.
\item
\label{D3}
There exist dominating functions $r_{k,j}\left(\xvec\right)\in\mathcal{L}_{1}\left(\XCal,\rho\right)$, $k,j=1,\ldots,m$, such that for any $\thetavec\in\Thetasp$ $\left|\frac{\partial{v}_{k}\left(\xvec;\thetavecsc\right)}{\partial\theta_{j}}\right|\leq{r_{k,j}}\left(\xvec\right)$ $\rho-\text{a.e.}$
\item
The matrix $\Gmat_{u}\left(\thetavec\right)$ (\ref{GDef}) and the Fisher information matrix of $\pxtheta$
$$\Imat_{\rm{FIM}}\left[\pxtheta\right]\triangleq{\rm{E}}\left[\etavec\left(\Xmat;\thetavec\right)\etavec^{T}\left(\Xmat;\thetavec\right);P_{\Xmatsc;\thetavecsc}\right]$$ are non-singular at $\thetavec=\thetaveczero$.
\end{enumerate}
Then,
\begin{equation}
\label{CRBRel}
\nonumber
\Cmat_{u}\left(\thetaveczero\right)\succeq{N}^{-1}\Imat^{-1}_{\rm{FIM}}\left[\pxthetazero\right],
\end{equation}
where equality holds if and only if
\begin{equation}
\label{AchCRB}  
\etavec\left(\Xmat;\thetaveczero\right)=\Imat_{\rm{FIM}}\left[\pxthetazero\right]\Fmat^{-1}_{u}\left(\thetaveczero\right)\psivec_{u}\left(\Xmat;\thetaveczero\right)u\left(\Xmat\right)\hspace{0.2cm}{\textit{w.p. 1}}.
\end{equation}
[A proof is given in Appendix \ref{EffThProof}]
\end{Proposition}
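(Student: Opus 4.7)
The plan is to establish the lower bound by combining an ``information identity'' for the MT-GQMLE estimating equation with a Schur-complement bound on the $2m\times 2m$ joint covariance of the estimating function $u(\Xmat)\psivec_{u}(\Xmat;\thetaveczero)$ and the true score $\etavec(\Xmat;\thetaveczero)$, both evaluated under $\pxthetazero$.

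First I would establish the parameter-indexed zero-mean identity
\begin{equation*}
{\rm{E}}\left[u(\Xmat)\psivec_{u}(\Xmat;\thetavec);\pxtheta\right]=\zerovec,\qquad\forall\,\thetavec\in\Thetasp.
\end{equation*}
The key observation is that $\log\phi^{(u)}(\xvec;\vartheta)$ is quadratic in $\xvec$, so its $Q^{(u)}_{\Xmatsc;\thetavec}$-expectation depends on $\vartheta$ only through $\muvec^{(u)}_{\Xmatsc}(\vartheta)$ and $\bSigma^{(u)}_{\Xmatsc}(\vartheta)$; by Gibbs' inequality together with the identifiability assumption (A-\ref{AS_2}), the unique $\vartheta$-maximizer of this Gaussian cross-entropy is $\vartheta=\thetavec$. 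Setting its $\vartheta$-gradient to zero and rewriting the $Q^{(u)}$-expectation as a $\pxtheta$-expectation via (\ref{MeasureTransformRadNik}) gives the claim. Differentiating this identity in $\thetavec$, interchanging $\nabla_{\thetavecsc}$ with the $\rho$-integral by dominated convergence (which is precisely what assumptions (C-\ref{D1})--(C-\ref{D3}) are designed to justify), and applying $\nabla_{\thetavecsc}f(\xvec;\thetavec)=\etavec(\xvec;\thetavec)f(\xvec;\thetavec)$, then yields the information equality
\begin{equation*}
\Fmat_{u}(\thetaveczero)={\rm{E}}\left[u(\Xmat)\psivec_{u}(\Xmat;\thetaveczero)\etavec^{T}(\Xmat;\thetaveczero);\pxthetazero\right].
\end{equation*}
Since both $u\psivec_{u}$ and $\etavec$ have zero mean at $\thetaveczero$ (the latter by the usual score-function regularity argument using (C-\ref{D1})), $\Fmat_{u}(\thetaveczero)$ is their cross-covariance.

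Next I would assemble the block covariance matrix
\begin{equation*}
\Mmat\triangleq\begin{pmatrix}\Gmat_{u}(\thetaveczero) & \Fmat_{u}(\thetaveczero)\\ \Fmat_{u}(\thetaveczero) & \Imat_{\rm{FIM}}[\pxthetazero]\end{pmatrix},
\end{equation*}
which is positive semidefinite (symmetry of $\Fmat_{u}$ follows from $\Gammamat_{u}$ being a Hessian). Taking the Schur complement with respect to the nonsingular block $\Gmat_{u}(\thetaveczero)$ gives $\Imat_{\rm{FIM}}[\pxthetazero]\succeq\Fmat_{u}(\thetaveczero)\Gmat_{u}^{-1}(\thetaveczero)\Fmat_{u}(\thetaveczero)$; inverting both sides (which reverses the PD ordering) produces $\Fmat_{u}^{-1}(\thetaveczero)\Gmat_{u}(\thetaveczero)\Fmat_{u}^{-1}(\thetaveczero)\succeq\Imat_{\rm{FIM}}^{-1}[\pxthetazero]$, and multiplying by $N^{-1}$ together with (\ref{AMSE}) and (\ref{AMSEN}) yields the stated bound.

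For the equality condition I would use the saturation property of the Schur-complement inequality, which is equivalent to $\etavec(\Xmat;\thetaveczero)=\Lmat\cdot u(\Xmat)\psivec_{u}(\Xmat;\thetaveczero)$ w.p. $1$ for some constant matrix $\Lmat$. Matching cross-covariances forces $\Lmat\,\Gmat_{u}(\thetaveczero)=\Fmat_{u}(\thetaveczero)$, and equality itself enforces $\Gmat_{u}(\thetaveczero)=\Fmat_{u}(\thetaveczero)\Imat_{\rm{FIM}}^{-1}[\pxthetazero]\Fmat_{u}(\thetaveczero)$; substituting produces $\Lmat=\Imat_{\rm{FIM}}[\pxthetazero]\Fmat_{u}^{-1}(\thetaveczero)$, which reproduces (\ref{AchCRB}). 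I expect the main obstacle to be the first step: pinning down the parameter-indexed zero-mean identity via the quadratic structure of $\log\phi^{(u)}$ and then rigorously differentiating under the integral sign using (C-\ref{D1})--(C-\ref{D3}), since the remaining algebra is a routine Schur-complement / matrix Cauchy--Schwarz argument.
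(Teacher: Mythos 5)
Your proof is correct and follows essentially the same route as the paper's: the information identity $\Fmat_{u}(\thetaveczero)={\rm{E}}[u(\Xmat)\psivec_{u}(\Xmat;\thetaveczero)\etavec^{T}(\Xmat;\thetaveczero);\pxthetazero]$ obtained by differentiating the zero-mean estimating identity under the integral sign (the paper's Identity 1, justified by C-1 through C-3), followed by the matrix covariance/Cauchy--Schwarz inequality --- your Schur-complement step on the joint second-moment matrix of $u(\Xmat)\psivec_{u}$ and $\etavec$ is exactly that inequality --- and then inversion, with the same linear-dependence characterization of equality. The only cosmetic difference is that you derive ${\rm{E}}\left[u(\Xmat)\psivec_{u}(\Xmat;\thetavec);\pxtheta\right]=\zerovec$ via a Gibbs/cross-entropy maximization argument, whereas the paper's Lemma 5 verifies it by direct computation from the explicit form of $\psivec_{u}$; note that identifiability (A-2), which you invoke but which is not among this proposition's hypotheses, is not actually needed there, since the chain rule through the unconstrained Gaussian-parameter maximizer already annihilates the gradient.
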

One can verify that when $\pxtheta$ is a Gaussian measure, the condition (\ref{AchCRB}) is satisfied only for non-zero constant valued MT-functions resulting in the standard GQMLE (\ref{PropEst0}) that only involves first and second-order moments. This implies that in the Gaussian case, non-constant MT-functions  will always lead to asymptotic performance degradation. In the non-Gaussian case, however, as will be illustrated in the sequel, there are many practical scenarios where a non-constant MT-function can significantly decrease the asymptotic MSE as compared to the GQMLE. This results in estimators with weighted mean and covariance that involve higher-order moments, as discussed in Subsection \ref{MT_MEAN_COV}, and can gain robustness against outliers, as will be discussed in Subsection \ref{Robustness}. By solving the differential equation (\ref{AchCRB}) one can obtain the MT-function for which the resulting MT-GQMLE is asymptotically efficient. Unfortunately, in the non-Gaussian case, the solution is highly cumbersome and requires the knowledge of the likelihood function $f\left(\Xmat;\thetavec\right)$. Therefore, as will be described in Subsection \ref{OptChoice}, we propose an alternative technique for optimizing the choice of the MT-function that is based on the following empirical estimate of the asymptotic MSE (\ref{AMSEN}).  
\begin{Theorem}[Empirical asymptotic MSE]
\label{EAMSETh}  
Define the empirical asymptotic MSE:
\begin{equation}
\label{EMSEN}
\hat{\Cmat}_{u}(\hat{\thetavec}_{u})\triangleq{N}^{-1}\hat{\Rmat}_{u}(\hat{\thetavec}_{u}),
\end{equation}
where
\begin{equation}
\label{EMSE}  
\hat{\Rmat}_{u}(\hat{\thetavec}_{u})\triangleq{{\hat{\Fmat}}^{-1}_{u}(\hat{\thetavec}_{u})\hat{\Gmat}_{u}(\hat{\thetavec}_{u}){\hat{\Fmat}}^{-1}_{u}(\hat{\thetavec}_{u})},
\end{equation}
\begin{equation}
\label{Ghat}
\hat{\Gmat}_{u}\left(\thetavec\right)\triangleq{N}^{-1}\sum_{n=1}^{N}u^{2}\left(\Xmat_{n}\right)\psivec_{u}\left(\Xmat_{n};\thetavec\right)\psivec^{T}_{u}\left(\Xmat_{n};\thetavec\right)
\end{equation}
and 
\begin{equation}
\label{Fhat}
\hat{\Fmat}_{u}\left(\thetavec\right)\triangleq-{N}^{-1}\sum_{n=1}^{N}u\left(\Xmat_{n}\right)\Gammamat_{u}\left(\Xmat_{n};\thetavec\right).
\end{equation}
Furthermore, assume that the following conditions are satisfied:
\begin{enumerate}[({D}-1)]
\item
\label{E1}
$\hat{\thetavec}_{u}\xrightarrow{P}\thetaveczero$ as $N\rightarrow\infty$.
\item
\label{E3}  
$\muvec^{\left(u\right)}_{\Xmatsc}\left(\thetavec\right)$ and $\bSigma^{\left(u\right)}_{\Xmatsc}\left(\thetavec\right)$ are twice continuously differentiable in $\Thetasp$ which is assumed to be compact.
\item
\label{E4}
${\rm{E}}\left[u^{2}\left(\Xmat\right);\pxthetazero\right]<\infty$ and ${\rm{E}}\left[\left\|\Xmat\right\|^{4}u^{2}\left(\Xmat\right);\pxthetazero\right]<\infty$.
\end{enumerate}
Then,
\begin{equation}
\label{ChatConv}
N\|\hat{\Cmat}_{u}(\hat{\thetavec}_{u})-{\Cmat}_{u}(\thetaveczero)\|\xrightarrow{P}0\hspace{0.2cm}\text{as $N\rightarrow\infty$}.
\end{equation}
[A proof is given in Appendix \ref{EAMSEThProof}]
\end{Theorem}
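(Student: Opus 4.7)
The plan is to first observe that since $\Cmat_u(\thetavec)=N^{-1}\Rmat_u(\thetavec)$ and $\hat{\Cmat}_u(\thetavec)=N^{-1}\hat{\Rmat}_u(\thetavec)$, the assertion (\ref{ChatConv}) is equivalent to showing $\|\hat{\Rmat}_u(\hat{\thetavec}_u)-\Rmat_u(\thetaveczero)\|\xrightarrow{P}0$. I would then apply the continuous mapping theorem to the sandwich map $(\Fmat,\Gmat)\mapsto\Fmat^{-1}\Gmat\Fmat^{-1}$, which is continuous at $(\Fmat_u(\thetaveczero),\Gmat_u(\thetaveczero))$ since $\Fmat_u(\thetaveczero)$ is nonsingular (a hypothesis inherited from Theorem \ref{ANor}). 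This reduces the problem to establishing the two convergences
$$\hat{\Fmat}_u(\hat{\thetavec}_u)\xrightarrow{P}\Fmat_u(\thetaveczero),\qquad \hat{\Gmat}_u(\hat{\thetavec}_u)\xrightarrow{P}\Gmat_u(\thetaveczero).$$

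For each of these, I would invoke the standard decomposition
$$\hat{\Gmat}_u(\hat{\thetavec}_u)-\Gmat_u(\thetaveczero)=\bigl[\hat{\Gmat}_u(\hat{\thetavec}_u)-\Gmat_u(\hat{\thetavec}_u)\bigr]+\bigl[\Gmat_u(\hat{\thetavec}_u)-\Gmat_u(\thetaveczero)\bigr],$$
and argue each bracket separately. The second bracket is handled by continuity of $\thetavec\mapsto\Gmat_u(\thetavec)$ on the compact set $\Thetasp$, which follows from D-\ref{E3} since $\Gmat_u$ is assembled from $\muvec^{(u)}_{\Xmatsc}(\thetavec)$, $\bSigma^{(u)}_{\Xmatsc}(\thetavec)$, and their first derivatives, together with $\hat{\thetavec}_u\xrightarrow{P}\thetaveczero$ from D-\ref{E1} and one more application of continuous mapping. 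The first bracket calls for a uniform law of large numbers (ULLN),
$$\sup_{\thetavecsc\in\Thetaspsc}\bigl\|\hat{\Gmat}_u(\thetavec)-\Gmat_u(\thetavec)\bigr\|\xrightarrow{P}0,$$
which I would deduce from the classical Jennrich/Newey--McFadden ULLN, whose hypotheses are (i) compactness of $\Thetasp$ (D-\ref{E3}), (ii) continuity in $\thetavec$ of the summand $u^{2}(\xvec)\psivec_u(\xvec;\thetavec)\psivec_u^{T}(\xvec;\thetavec)$ for each fixed $\xvec$, and (iii) a $\pxthetazero$-integrable envelope. The argument for $\hat{\Fmat}_u$ is entirely parallel, with $u(\xvec)\Gammamat_u(\xvec;\thetavec)$ playing the role of the summand.

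The principal technical obstacle is verifying the integrable envelope in (iii). To this end I would differentiate $\log\phi^{(u)}(\xvec;\thetavec)$ from (\ref{GaussDens}) twice to see that $\psivec_u(\xvec;\thetavec)$ is at most quadratic in $\xvec$ and $\Gammamat_u(\xvec;\thetavec)$ is at most quadratic in $\xvec$ as well, with coefficients that are continuous functions of $\muvec^{(u)}_{\Xmatsc}(\thetavec)$, $(\bSigma^{(u)}_{\Xmatsc}(\thetavec))^{-1}$, and their first two derivatives (all continuous on $\Thetasp$ by D-\ref{E3}, with nonsingularity of $\bSigma^{(u)}_{\Xmatsc}$ on $\Thetasp$ taken as a standing regularity condition carried over from Theorem \ref{ConsistencyTh}). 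Compactness of $\Thetasp$ then makes these coefficients uniformly bounded, so that $\|u^{2}(\xvec)\psivec_u\psivec_u^{T}\|$ admits an envelope of order $c\,(1+\|\xvec\|^{2})^{2}u^{2}(\xvec)$ and $\|u(\xvec)\Gammamat_u\|$ an envelope of order $c\,(1+\|\xvec\|^{2})\,u(\xvec)$. Both envelopes are $\pxthetazero$-integrable by assumption D-\ref{E4}, with the $u$ and $\|\xvec\|^{2}u$ moments controlled via Cauchy--Schwarz against $\mathrm{E}[u^{2}(\Xmat);\pxthetazero]$ and $\mathrm{E}[\|\Xmat\|^{4}u^{2}(\Xmat);\pxthetazero]$. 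Combining the ULLN with the continuity argument above, and composing with the sandwich map by continuous mapping, delivers (\ref{ChatConv}).
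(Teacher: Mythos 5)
Your proposal is correct and follows essentially the same route as the paper's proof: reduce to $\hat{\Fmat}_u(\hat{\thetavec}_u)\xrightarrow{P}\Fmat_u(\thetaveczero)$ and $\hat{\Gmat}_u(\hat{\thetavec}_u)\xrightarrow{P}\Gmat_u(\thetaveczero)$ via the continuous sandwich map, and establish each by the same triangle-inequality split into a uniform-LLN term (Newey--McFadden, with the polynomial-times-$u$ envelope obtained from the quadratic structure of $\psivec_u$ and $\Gammamat_u$ and compactness of $\Thetasp$) plus a continuity-of-the-limit term. The paper packages these steps into Lemmas \ref{WeakCons}, \ref{WeakConG}, \ref{FCont} and \ref{GFinit}, but the substance is identical.
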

\subsection{Robustness to outliers}
\label{Robustness}
Here, we study the robustness of the proposed MT-GQMLE (\ref{PropEst}) to outliers using its vector valued influence function \cite{Hampel}.
Define the probability measure 
\begin{equation}
\label{ContProb}
P_{\epsilon}\triangleq(1-\epsilon)\pxthetazero+\epsilon\delta_{\yvec}, 
\end{equation}
where $0\leq\epsilon\leq1$, $\yvec\in\Csp^{p}$, and
$\delta_{\yvec}$ is the Dirac probability measure at $\yvec$. The influence function of a Fisher consistent estimator \cite{Cox} with statistical functional $\textrm{S}[\cdot]$ at probability distribution $\pxthetazero$ is defined as \cite{Hampel}:
\begin{equation}
\label{IFDef}
{\rm{IF}}\left(\yvec;\thetaveczero\right)\triangleq\lim\limits_{\epsilon\rightarrow{0}}\frac{\textrm{S}\left[P_{\epsilon}\right]-\textrm{S}\left[\pxthetazero\right]}{\epsilon}
=\left.\frac{\partial{\textrm{S}}\left[{P}_{\epsilon}\right]}{\partial{\epsilon}}\right|_{\epsilon=0}.
\end{equation}
The influence function describes the effect on the estimator of an infinitesimal contamination at the point $\yvec$. An estimator is said to be B-robust if for any fixed value of $\thetaveczero\in\Thetasp$ its influence function is bounded over $\Csp^p$ \cite{Hampel}. 

The following Proposition states that under some mild regularity conditions, $\hat{\thetavec}_{u}$ is Fisher consistent, i.e., it can be represented as a statistical functional of the empirical probability distribution $\Smat_{u}[\hat{P}_{\Xmatsc}]$ that satisfies $\Smat_{u}[\pxthetazero]=\thetaveczero$.
\begin{Proposition}[Fisher consistency of $\hat{\thetavec}_{u}$]
\label{FishConsistency}
Assume that conditions A-\ref{AS_1}-A-\ref{AS_4} stated in Theorem \ref{ConsistencyTh} are satisfied. Then, $\hat{\thetavec}_{u}$ is Fisher consistent.
[A proof is given in Appendix \ref{FishConsistencyProof}]
\end{Proposition}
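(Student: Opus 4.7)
The plan is to realize the MT-GQMLE as an explicit statistical functional $\Smat_{u}[\cdot]$ acting on probability measures over $\XCal$, and then to verify that $\Smat_{u}[\pxthetazero]=\thetaveczero$. For any probability measure $P$ on $\XCal$ satisfying the integrability condition (\ref{Cond11}), define the weighted mean/covariance maps $\muvec^{(u)}[P]\triangleq{\rm{E}}[\Xmat{u}(\Xmat);P]/{\rm{E}}[u(\Xmat);P]$ and $\bSigma^{(u)}[P]\triangleq{\rm{E}}[(\Xmat-\muvec^{(u)}[P])(\Xmat-\muvec^{(u)}[P])^{H}u(\Xmat);P]/{\rm{E}}[u(\Xmat);P]$, and let
\begin{equation*}
J_{u}[P](\thetavec)\triangleq-D_{\rm{LD}}\bigl[\bSigma^{(u)}[P]\,\|\,\bSigma^{(u)}_{\Xmatsc}(\thetavec)\bigr]-\bigl\|\muvec^{(u)}[P]-\muvec^{(u)}_{\Xmatsc}(\thetavec)\bigr\|^{2}_{(\bSigmasc^{(u)}_{\xvec}(\thetavecsc))^{-1}},
\end{equation*}
setting $\Smat_{u}[P]\triangleq\arg\max_{\thetavecsc\in\Thetaspsc}J_{u}[P](\thetavec)$. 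Applied to the empirical measure $\hat{P}_{\Xmatsc}$, the expectations reduce to the normalized sums in (\ref{Mu_u_Est})--(\ref{Rx_u_Est}) (since ${\rm{E}}[h(\Xmat);\hat{P}_{\Xmatsc}]=N^{-1}\sum_{n}h(\Xmat_{n})$ and the factors of $N^{-1}$ cancel in the ratio defining $\hat{\varphi}_{u}$), so $J_{u}[\hat{P}_{\Xmatsc}]$ coincides with the objective in (\ref{ObjFun}) and hence $\Smat_{u}[\hat{P}_{\Xmatsc}]=\hat{\thetavec}_{u}$.

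Next, I would evaluate the functional at the true measure $\pxthetazero$. Comparing the definitions of $\muvec^{(u)}[\cdot]$ and $\bSigma^{(u)}[\cdot]$ with (\ref{MTMean}) and (\ref{MTCovZ}) gives $\muvec^{(u)}[\pxthetazero]=\muvec^{(u)}_{\Xmatsc}(\thetaveczero)$ and $\bSigma^{(u)}[\pxthetazero]=\bSigma^{(u)}_{\Xmatsc}(\thetaveczero)$, so
\begin{equation*}
J_{u}[\pxthetazero](\thetavec)=-D_{\rm{LD}}\bigl[\bSigma^{(u)}_{\Xmatsc}(\thetaveczero)\,\|\,\bSigma^{(u)}_{\Xmatsc}(\thetavec)\bigr]-\bigl\|\muvec^{(u)}_{\Xmatsc}(\thetaveczero)-\muvec^{(u)}_{\Xmatsc}(\thetavec)\bigr\|^{2}_{(\bSigmasc^{(u)}_{\xvec}(\thetavecsc))^{-1}}.
\end{equation*}
Both terms vanish at $\thetavec=\thetaveczero$, giving the value $0$. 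For any other $\thetavec\in\Thetasp$, A-\ref{AS_3} ensures $\bSigma^{(u)}_{\Xmatsc}(\thetavec)\succ 0$, so the weighted quadratic term is nonpositive and equal to zero iff $\muvec^{(u)}_{\Xmatsc}(\thetavec)=\muvec^{(u)}_{\Xmatsc}(\thetaveczero)$. Likewise, the log-determinant Bregman divergence satisfies $D_{\rm{LD}}[\Amat\,\|\,\Bmat]\geq 0$ with equality iff $\Amat=\Bmat$, so the first term is nonpositive and vanishes iff $\bSigma^{(u)}_{\Xmatsc}(\thetavec)=\bSigma^{(u)}_{\Xmatsc}(\thetaveczero)$. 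Thus $J_{u}[\pxthetazero](\thetavec)\leq 0$ with equality iff both the MT-means and the MT-covariances at $\thetavec$ and $\thetaveczero$ coincide, which by the identifiability hypothesis A-\ref{AS_2} forces $\thetavec=\thetaveczero$. The compactness assumption A-\ref{AS_1} together with the continuity in A-\ref{AS_4} guarantees that the maximum is attained, so the argmax is the unique point $\thetaveczero$, yielding $\Smat_{u}[\pxthetazero]=\thetaveczero$ and establishing Fisher consistency.

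The argument is essentially a population-level counterpart of the consistency proof in Theorem \ref{ConsistencyTh}, so there is no substantive obstacle; the only piece that is more than bookkeeping is the strict-positivity statement for the log-determinant Bregman divergence, which follows from the strict convexity of $-\log\det$ on the positive-definite cone (equivalently, from Jensen's inequality applied to the eigenvalues of $\Amat\Bmat^{-1}$, using the identity $x-\log x\geq 1$ with equality iff $x=1$). Everything else is a direct identification of the empirical objective as a continuous functional on $\hat{P}_{\Xmatsc}$ whose population-level maximizer is pinned down by A-\ref{AS_2}--A-\ref{AS_4}.
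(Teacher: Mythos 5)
Your proposal is correct and follows essentially the same route as the paper's proof: realize $\hat{\thetavec}_{u}$ as the statistical functional $\Smat_{u}[\hat{P}_{\Xmatsc}]$ via the weighted mean/covariance functionals, observe that evaluating at $\pxthetazero$ recovers the population objective $\bar{J}_{u}(\thetavec)$ of (\ref{ObjFunDet}), and invoke the unique-maximizer argument (the paper delegates this last step to Lemma \ref{Lemma1}, whose content — nonnegativity and equality conditions of the log-determinant divergence and the weighted norm, plus A-\ref{AS_2} — is exactly what you spell out).
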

Assuming that the proposed MT-GQMLE is Fisher consistent, an expression for its corresponding influence function is established in the following Proposition: 
\begin{Proposition}[Influence function of $\hat{\thetavec}_{u}$]
\label{InfFuncExp}
Assume that $\hat{\thetavec}_{u}$ is Fisher consistent. Furthermore, assume that the following conditions are satisfied:
\begin{enumerate}[({E}-1)]
\item
\label{F1}
For any $\thetavec\in\Thetasp$, the partial derivatives $\frac{\partial^{2}\phi^{(u)}\left(\xvec,\thetavecsc\right)}{\partial\theta_{k}\partial\theta_{j}}$ $k,j=1,\ldots,m$ exist $\pxthetazero-\text{a.e.}$, where $\theta_{k}$, $k=1,\ldots,m$ denote the entries of the vector parameter $\thetavec$.
\item
\label{F2}
For any $\thetavec\in\Thetasp$, $v_{k}\left(\xvec;\thetavec\right)\triangleq\frac{\partial\log\phi^{(u)}\left(\xvec;\thetavecsc\right)}{\partial\theta_{k}}{u}\left(\xvec\right)\in\mathcal{L}_{1}\left(\XCal,\pxthetazero\right)$, $k=1,\ldots,m$, where $\mathcal{L}_{1}\left(\XCal,\pxthetazero\right)$ denotes the space of absolutely integrable functions, defined on $\XCal$, w.r.t. the measure $\pxthetazero$.
\item
\label{F3}
There exist dominating functions $r_{k,j}\left(\xvec\right)\in\mathcal{L}_{1}\left(\XCal,\pxthetazero\right)$, $k,j=1,\ldots,m$, such that for any $\thetavec\in\Thetasp$ $\left|\frac{\partial{v}_{k}\left(\xvec;\thetavecsc\right)}{\partial\theta_{j}}\right|\leq{r_{k,j}}\left(\xvec\right)$ $\pxthetazero-\text{a.e.}$
\item
\label{F4}
The matrix $\Fmat_{u}\left(\thetavec\right)$ (\ref{FDef}) is non-singular at $\thetavec=\thetaveczero$.
\end{enumerate}
Then, the influence function of $\hat{\thetavec}_{u}$ is given by:
\begin{equation}
\label{InfFuncMTQML}
{\bf{IF}}\left(\yvec;\thetaveczero\right)=\Fmat^{-1}_{u}\left(\thetaveczero\right)\psivec_{u}\left(\yvec;\thetaveczero\right)u\left(\yvec\right).
\end{equation}
[A proof is given in Appendix \ref{InfFuncExpProof}]
\end{Proposition}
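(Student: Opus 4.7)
The plan is to recognize the MT-GQMLE as an M-estimator whose statistical functional $\Smat_u[\cdot]$ satisfies a score equation, and then extract its influence function by implicit differentiation along the contamination path $P_\epsilon$. By Proposition~\ref{FishConsistency}, the estimator admits such a functional with $\hat{\thetavec}_u=\Smat_u[\hat{P}_{\Xmatsc}]$ and $\Smat_u[\pxthetazero]=\thetaveczero$. Setting the gradient of the $\hat{\varphi}_u$-weighted log-Gaussian objective maximized in (\ref{PropEst}) to zero, and evaluating at a generic probability measure $P$, the positive scalar $\int u\,dP$ cancels and the first-order condition becomes
\begin{equation}
\label{scoreeq}
\int u(\xvec)\,\psivec_u(\xvec;\Smat_u[P])\,dP(\xvec)=\zerovec.
\end{equation}
Crucially, since $\muvec^{(u)}_{\Xmatsc}(\thetavec)$ and $\bSigma^{(u)}_{\Xmatsc}(\thetavec)$ are deterministic functions of $\thetavec$, the score $\psivec_u(\xvec;\thetavec)=\nabla_{\thetavecsc}\log\phi^{(u)}(\xvec;\thetavec)$ does not depend on $P$, so (\ref{scoreeq}) is a bona fide M-estimating identity.

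Next, I would substitute $P_\epsilon=(1-\epsilon)\pxthetazero+\epsilon\delta_\yvec$ into (\ref{scoreeq}) and differentiate in $\epsilon$ at $\epsilon=0$. By the definition (\ref{IFDef}), the derivative of $\Smat_u[P_\epsilon]$ at $\epsilon=0$ is the sought ${\bf{IF}}(\yvec;\thetaveczero)$. Fisher consistency cancels the $\int u\,\psivec_u(\,\cdot\,;\thetaveczero)\,d\pxthetazero$ contribution coming from differentiating the $(1-\epsilon)$ factor, and the chain rule applied to $\psivec_u(\xvec;\Smat_u[P_\epsilon])$ yields
\begin{equation}
\left(\int u(\xvec)\,\Gammamat_u(\xvec;\thetaveczero)\,d\pxthetazero(\xvec)\right){\bf{IF}}(\yvec;\thetaveczero)+u(\yvec)\,\psivec_u(\yvec;\thetaveczero)=\zerovec.
\end{equation}
By the definition (\ref{FDef}) of $\Fmat_u$, the bracketed matrix equals $-\Fmat_u(\thetaveczero)$; inverting it via hypothesis (E-\ref{F4}) delivers the claimed expression (\ref{InfFuncMTQML}).

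The main obstacle is justifying the interchange of $\partial/\partial\epsilon$ with the integral over $\pxthetazero$, together with the implicit differentiability of $\epsilon\mapsto\Smat_u[P_\epsilon]$ at $\epsilon=0$. Hypotheses (E-\ref{F1})--(E-\ref{F3}) are tailored exactly for this step: (E-\ref{F1}) provides the pointwise smoothness of $\log\phi^{(u)}(\xvec;\thetavec)$ in $\thetavec$ through second order, (E-\ref{F2}) ensures (\ref{scoreeq}) is well defined as a Lebesgue integral against $\pxthetazero$, and the $\pxthetazero$-integrable dominators in (E-\ref{F3}) permit dominated convergence so that one can differentiate under the integral and invoke the implicit function theorem, whose non-degeneracy assumption is precisely the invertibility of $\Fmat_u(\thetaveczero)$ in (E-\ref{F4}). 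Once this analytic passage is secured, the remaining manipulation is a routine product-rule computation.
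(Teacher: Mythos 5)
Your proposal is correct and follows essentially the same route as the paper's Appendix F: both start from the population score equation ${\rm{E}}\left[u\left(\Xmat\right)\psivec_{u}\left(\Xmat;\Smat_{u}\left[P_{\epsilon}\right]\right);P_{\epsilon}\right]=\zerovec$, differentiate at $\epsilon=0$ using Fisher consistency to kill the term from the $(1-\epsilon)$ factor, identify $\partial{\rm{E}}\left[u\psivec_{u}\right]/\partial\Smat_{u}$ with $-\Fmat_{u}\left(\thetaveczero\right)$ via the interchange of differentiation and integration licensed by (E-\ref{F1})--(E-\ref{F3}), and invert using (E-\ref{F4}). The only cosmetic difference is that you cite dominated convergence and the implicit function theorem where the paper cites a textbook theorem on differentiation under the integral sign; the substance is identical.
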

Interestingly, using (\ref{VarPhiDef}),  (\ref{MeasureTransformRadNik}), (\ref{GaussDens}), (\ref{FDef}) and (\ref{GammaDef}) one can verify that the influence function (\ref{InfFuncMTQML}) can be rewritten as ${\bf{IF}}\left(\yvec;\thetaveczero\right)=\Imat^{-1}_{\rm{FIM}}[\Phi^{(u)}_{\Xmatsc;\thetavecsc_{0}}]\psivec_{u}\left(\yvec;\thetaveczero\right)\varphi_{u}\left(\yvec;\thetaveczero\right)$, where $\Imat_{\rm{FIM}}[\Phi^{(u)}_{\Xmatsc;\thetavecsc_{0}}]\triangleq-{\rm{E}}\left[\Gammamat_{u}\left(\Xmat;\thetaveczero\right);\Phi^{(u)}_{\Xmatsc;\thetavecsc_{0}}\right]$ is the Fisher information of the Gaussian probability measure $\Phi^{(u)}_{\Xmatsc;\thetavecsc}$ at $\thetavec=\thetaveczero$. We note that the  term $\Imat^{-1}_{\rm{FIM}}[\Phi^{(u)}_{\Xmatsc;\thetavecsc_{0}}]\psivec_{u}\left(\yvec;\thetaveczero\right)$ is the unbounded influence function of the MLE under $\Phi^{(u)}_{\Xmatsc;\thetavecsc_{0}}$ \cite{Hampel}. Hence, we conclude that (\ref{InfFuncMTQML}) is a weighted version of the influence function of the MLE under $\Phi^{(u)}_{\Xmatsc;\thetavecsc_{0}}$, with the weighting function $\varphi_{u}\left(\yvec;\thetaveczero\right)$  (\ref{VarPhiDef}). 

The following proposition states sufficient conditions on the MT-function $u\left(\cdot\right)$ under which the influence function (\ref{InfFuncMTQML}) is bounded over a subset $\mathcal{C}\subseteq{\Csp^{p}}$ and decays to zero over this subset as the outlier norm approaches infinity. When these conditions are satisfied over $\mathcal{C}={\Csp^{p}}$ the proposed MT-GQMLE is B-robust and rejects large norm outliers in any direction. 
\begin{Proposition}
\label{RobustnessConditions}
Let $\mathcal{C}$ denote a subset of $\Csp^{p}$.
\begin{inparaenum}
\item
\label{G1}
If the MT-function $u(\yvec)$ and the product $u(\yvec)\|\yvec\|^{2}$ are bounded over $\mathcal{C}$ then the influence function (\ref{InfFuncMTQML}) is bounded over $\mathcal{C}$ . 
\item
\label{G2}
If over the subset $\mathcal{C}$ $u(\yvec)\rightarrow{0}$ and $u(\yvec)\|\yvec\|^{2}\rightarrow{0}$ as $\|\yvec\|\rightarrow{\infty}$, then $\left\|{\bf{IF}}\left(\yvec;\thetaveczero\right)\right\|\rightarrow{0}$ over $\mathcal{C}$ as $\|\yvec\|\rightarrow{\infty}$.
\end{inparaenum}
\newline
[A proof is given in Appendix \ref{RobustnessConditionsProof}] 
\end{Proposition}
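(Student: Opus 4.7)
The plan is to exploit the explicit form of the influence function derived in Proposition~\ref{InfFuncExp}, namely
\begin{equation}
\label{IFRecall}
{\bf{IF}}\left(\yvec;\thetaveczero\right)=\Fmat^{-1}_{u}\left(\thetaveczero\right)\psivec_{u}\left(\yvec;\thetaveczero\right)u\left(\yvec\right),
\end{equation}
and to show that the only $\yvec$-dependence arising from the factor $\psivec_{u}\left(\yvec;\thetaveczero\right)$ is at most quadratic in $\|\yvec\|$. The matrix $\Fmat^{-1}_{u}\left(\thetaveczero\right)$ is a fixed, finite quantity under the standing assumptions of Proposition~\ref{InfFuncExp} (in particular C-\ref{D1}--C-\ref{D2}/E-\ref{F4}), and may be absorbed into a single constant.

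The first and main step is therefore to compute $\psivec_{u}\left(\yvec;\thetaveczero\right)=\nabla_{\thetavecsc}\log\phi^{(u)}\left(\yvec;\thetavec\right)\bigr|_{\thetavec=\thetaveczero}$ explicitly from the Gaussian density (\ref{GaussDens}). Differentiating the log of (\ref{GaussDens}) with respect to a generic component $\theta_{k}$ using standard matrix calculus produces, for each $k$, a sum of three terms: a $\yvec$-independent piece coming from $-\log\det[\pi\bSigma^{(u)}_{\Xmatsc}(\thetavec)]$ and from $\mathrm{tr}[(\bSigma^{(u)}_{\Xmatsc}(\thetavec))^{-1}\partial_{\theta_k}\bSigma^{(u)}_{\Xmatsc}(\thetavec)]$-like contributions, a term \emph{linear} in $(\yvec-\muvec^{(u)}_{\Xmatsc}(\thetaveczero))$ coming from differentiating the mean, and a term \emph{quadratic} in $(\yvec-\muvec^{(u)}_{\Xmatsc}(\thetaveczero))$ coming from differentiating the inverse covariance. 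Evaluated at $\thetavec=\thetaveczero$, all the coefficients depend only on $\muvec^{(u)}_{\Xmatsc}(\thetaveczero)$, $\bSigma^{(u)}_{\Xmatsc}(\thetaveczero)$ and their first derivatives, which are fixed finite quantities under the assumptions. Applying the triangle inequality and the elementary bound $\|\yvec\|\le \tfrac{1}{2}(1+\|\yvec\|^{2})$, I obtain
\begin{equation}
\label{PsiBound}
\left\|\psivec_{u}\left(\yvec;\thetaveczero\right)\right\|\le a\left(\thetaveczero\right)+b\left(\thetaveczero\right)\|\yvec\|^{2}
\end{equation}
for constants $a(\thetaveczero),b(\thetaveczero)<\infty$ independent of $\yvec$.

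Combining (\ref{IFRecall}) with (\ref{PsiBound}) yields
\begin{equation}
\label{IFUpperBound}
\left\|{\bf{IF}}\left(\yvec;\thetaveczero\right)\right\|\le\left\|\Fmat^{-1}_{u}\left(\thetaveczero\right)\right\|\left(a\left(\thetaveczero\right)u(\yvec)+b\left(\thetaveczero\right)u(\yvec)\|\yvec\|^{2}\right),
\end{equation}
which is the key inequality from which both claims follow immediately. For part~1), the boundedness of $u(\yvec)$ and $u(\yvec)\|\yvec\|^{2}$ over $\mathcal{C}$ makes the right-hand side of (\ref{IFUpperBound}) uniformly bounded on $\mathcal{C}$. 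For part~2), the stated vanishing of $u(\yvec)$ and $u(\yvec)\|\yvec\|^{2}$ as $\|\yvec\|\to\infty$ within $\mathcal{C}$ drives the right-hand side of (\ref{IFUpperBound}) to zero, hence $\|{\bf{IF}}(\yvec;\thetaveczero)\|\to0$ in the same regime.

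The main obstacle is strictly bookkeeping: carrying out the matrix-calculus differentiation of $\log\phi^{(u)}(\yvec;\thetavec)$ component-wise and verifying that \emph{every} $\yvec$-dependent coefficient is at most quadratic in $\|\yvec\|$ with constants controlled by $\muvec^{(u)}_{\Xmatsc}(\thetaveczero)$, $\bSigma^{(u)}_{\Xmatsc}(\thetaveczero)$, and their $\thetavec$-derivatives evaluated at $\thetaveczero$. Once (\ref{PsiBound}) is in hand, both conclusions reduce to one-line applications of the hypotheses to (\ref{IFUpperBound}).
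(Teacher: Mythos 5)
Your proposal is correct and follows essentially the same route as the paper's proof in Appendix~\ref{RobustnessConditionsProof}: both expand $\psivec_{u}(\yvec;\thetaveczero)$ from the Gaussian log-density into constant, linear, and quadratic pieces in $\yvec-\muvec^{(u)}_{\Xmatsc}(\thetaveczero)$ with bounded coefficients, apply the triangle inequality and the bound $\|\yvec\|\le\|\yvec\|^{2}+1$, and absorb the fixed matrix $\Fmat^{-1}_{u}(\thetaveczero)$ into a constant to obtain an upper bound of the form $c\,u(\yvec)(1+\|\yvec\|^{2})$ from which both claims follow. The only cosmetic difference is that the paper bounds the $\yvec$-dependent coefficients via the unit vector $(\yvec-\muvec^{(u)}_{\Xmatsc}(\thetaveczero))/\|\yvec-\muvec^{(u)}_{\Xmatsc}(\thetaveczero)\|$ explicitly, which your bookkeeping step would reproduce.
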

\subsection{Optimization of the choice of the MT-function}
\label{OptChoice}
Here we restrict the class of MT-functions to a parametric family $\left\{u\left(\Xmat;\bomega\right), \bomega\in\bOmega\subseteq\Csp^{r}\right\}$ that satisfies the conditions stated in Definition \ref{Def1} and Theorem \ref{EAMSETh}. For example, the Gaussian family of functions that satisfy the conditions in Proposition \ref{RobustnessConditions} is a natural choice for inducing outlier resistance.
An optimal choice of the MT-function parameter $\bomega$ would minimize the trace of the empirical asymptotic MSE matrix (\ref{EMSEN}) that is constructed by the same sequence of samples used for obtaining the MT-GQMLE (\ref{PropEst}). Note that under the conditions of Theorem \ref{EAMSETh} the trace of (\ref{EMSEN}) is a consistent estimator of the mean-squared-deviation ${\rm{E}}[\|\hat{\thetavec}_{u}-\thetavec_{0}\|^{2};P_{\hat{\thetavecsc}_{u}}]$.
\section{Examples}
\label{NumExamp}
In this section the proposed MT-GQMLE (\ref{PropEst}) is applied to linear regression and source localization problems for the purpose of evaluation of its MSE performance and computational load as compared to other estimators. All simulation studies were performed on an iMac computer with 8 GB RAM and a 2.9 GHz Intel Core i5 processor with 4 cores.
\subsection{Linear regression}
\label{GainParam}
We consider the following linear observation model:
\begin{equation}
\label{GainModel}
\Xmat_{n}=\Amat\alphavec_{0}+\Wmat_{n},\hspace{0.2cm}n=1,\ldots,N,
\end{equation}
where $\Xmat_{n}\in\Csp^{p}$ is an observation vector, $\Amat\in\Csp^{p\times{q}}$ is a known deterministic matrix of regressors with $q<p$ linearly independent columns,
$\alphavec_{0}\in\Csp^{q}$ is an unknown deterministic vector of regression coefficients and $\Wmat_{n}\in\Csp^{p}$ is an additive noise. The vector parameter to be estimated is defined as:
\begin{equation}
\label{TrueParam}
\thetavec_{0}\triangleq\left[{\rm{Re}}\left\{\alphavec_{0}\right\}^{T},{\rm{Im}}\left\{\alphavec_{0}\right\}^{T}\right]^{T}\in\Rsp^{m},\hspace{0.2cm}m=2q.
\end{equation}
We specify the MT-function in the set:
\begin{equation}
\label{OrthSet}
\left\{u\left(\xvec\right)=v(\Pmat^{\bot}_{\Amatsc}\xvec),\hspace{0.1cm}v:\Csp^{p}\rightarrow\Rsp_{+}\right\},
\end{equation}
where $\Pmat^{\bot}_{\Amatsc}$ is the projection matrix onto the subspace orthogonal to the range space of $\Amat$. Assuming that condition (\ref{Cond}) is satisfied, one can verify using (\ref{VarPhiDef}), (\ref{MTMean}),  (\ref{MTCovZ}), (\ref{GainModel}) and  (\ref{OrthSet}) that the MT-mean and MT-covariance under the transformed probability measure $Q^{(u)}_{\Xmatsc;\thetavecsc}$ satisfy the following properties:
\begin{equation}
\label{MT_MEAN_GAIN}
\muvec^{(u)}_{\Xmatsc}\left(\thetavec\right)=\Amat\alphavec + \muvec^{(u)}_{\Wmatsc}
\end{equation}
and
\begin{equation}
\label{MT_COV_GAIN}
\bSigma^{(u)}_{\Xmatsc}\left(\thetavec\right)=\bSigma^{(u)}_{\Wmatsc},
\end{equation}
where $\muvec^{(u)}_{\Wmatsc}$ and $\bSigma^{(u)}_{\Wmatsc}$ are the MT-mean and MT-covariance of the noise component. Hence, by substituting (\ref{MT_MEAN_GAIN}) and (\ref{MT_COV_GAIN}) into (\ref{ObjFun}) the resulting MT-GQMLE (\ref{PropEst}) is given by: 
\begin{equation}
\label{MTQMLE_GAIN_1}
\hat{\thetavec}_{u}=\left[{\rm{Re}}\left\{\hat{\alphavec}^{(u)}\right\}^{T},{\rm{Im}}\left\{\hat{\alphavec}^{(u)}\right\}^{T}\right]^{T},
\end{equation}
where $\hat{\alphavec}^{(u)}\triangleq(\Amat^{H}(\bSigma^{(u)}_{\Wmatsc})^{-1}\Amat)^{-1}\Amat^{H}(\bSigma^{(u)}_{\Wmatsc})^{-1}(\hat{\muvec}^{(u)}_{\Xmatsc}-{\muvec}^{(u)}_{\Wmatsc})$.

We further assume that the noise component is spherically contoured with stochastic representation \cite{Visa}:
\begin{equation}
\label{CompGaussOrth}
\Wmat_{n}=\nu_{n}\Zmat_{n},
\end{equation}
where $\nu_{n}\in\Rsp_{++}$ is a first-order stationary process and $\Zmat_{n}\in\Csp^{p}$ is a proper-complex wide-sense stationary Gaussian process with zero-mean and scaled unit covariance $\sigma^{2}_{\Zmatsc}\Imat$. The processes $\nu_{n}$ and $\Zmat_{n}$ are also assumed to be statistically independent. 

In order to mitigate the effect of outliers and gain sensitivity to parametric variation of the higher-order moments, we specify the MT-function in a subset of (\ref{OrthSet}) that is comprised of zero-centred Gaussian functions parametrized by a width parameter $\omega$, i.e.,
\begin{equation}
\label{GaussMTFuncOrth}
\uGausss\left(\xvec;\omega\right)\triangleq\exp\left(-{\|\Pmat^{\bot}_{\Amatsc}\xvec\|^{2}}/{\omega^{2}}\right),\hspace{0.2cm\omega\in\Rsp_{++}}.
\end{equation}
Using (\ref{MTMean}), (\ref{MTCovZ}), (\ref{CompGaussOrth}) and (\ref{GaussMTFuncOrth}) it can be shown that the MT-mean and MT-covariance of the noise satisfy:
\begin{equation}
\label{MGGain}
\muvec^{(u_{\rm{G}})}_{\Wmatsc}\left(\omega\right)=\zerovec
\end{equation}
and 
\begin{equation}
\label{CGGain}
\bSigma^{(u_{\rm{G}})}_{\Wmatsc}\left(\omega\right)=r_{0}\left(\omega\right)\Pmat_{\Amatsc} + r_{1}\left(\omega\right)\Imat,
\end{equation} 
respectively, where $r_{0}\left(\omega\right)$ and $r_{1}\left(\omega\right)$ are some strictly positive functions of $\omega$ and $\Pmat_{\Amatsc}$ is the projection matrix onto the range space of $\Amat$. Hence, under the spherically contoured noise assumption (\ref{CompGaussOrth}) and the Gaussian MT-function (\ref{GaussMTFuncOrth}), the MT-GQMLE (\ref{MTQMLE_GAIN_1}) reduces to:
\begin{equation}
\label{MTQMLE_GAIN_2}
\hat{\thetavec}_{\uGausss}\left(\omega\right)=\left[{\rm{Re}}\left\{\hat{\alphavec}^{(\uGausss)}\left(\omega\right)\right\}^{T},{\rm{Im}}\left\{\hat{\alphavec}^{(\uGausss)}\left(\omega\right)\right\}^{T}\right]^{T},
\end{equation}
where $\hat{\alphavec}^{(\uGausss)}\left(\omega\right)\triangleq({\Amat^{H}\Amat})^{-1}\Amat^{H}\hat{\muvec}^{(\uGausss)}_{\Xmatsc}\left(\omega\right)$.

The asymptotic MSE (\ref{AMSEN}) of $\hat{\thetavec}_{\uGausss}\left(\omega\right)$ is given by:
\begin{equation}
\label{AMSE_GAIN}
\Cmat_{\uGausss}\left(\thetavec_{0};\omega\right)=\left(\frac{{\rm{E}}\left[\left(\frac{\omega^{2}}{2\sigma^{2}_{\zvec}\nu^{2}+\omega^{2}}\right)^{p-m}\nu^{2};P_{\nu}\right]}
{{\rm{E}}^{2}\left[\left(\frac{\omega^{2}}{\sigma^{2}_{\zvec}\nu^{2}+\omega^{2}}\right)^{p-m};P_{\nu}\right]}\right)\times\left(\frac{\sigma^{2}_{\Zmatsc}}{2N}\Bmat\right),
\end{equation}
where
\begin{equation}
\Bmat\triangleq
\left[{\begin{array}{*{20}c} 
{\rm{Re}}\left\{\Amat^{H}\Amat\right\} & 
-{\rm{Im}}\left\{\Amat^{H}\Amat\right\}
\\
{\rm{Im}}\left\{\Amat^{H}\Amat\right\} &
 {\rm{Re}}\left\{\Amat^{H}\Amat\right\} \end{array}}\right]^{-1}.
\end{equation}
The asymptotic MSE (\ref{AMSE_GAIN}) is comprised of two terms. The first term, which is non-linear in $\sigma^{2}_{\Zmatsc}$ and $\omega^{2}$, arises from the transformation (\ref{MeasureTransform}) of the probability measure $\pxtheta$. The second term is equivalent to the Gaussian CRLB \cite{KayBook} for estimating $\thetaveczero$. Notice that when the noise (\ref{CompGaussOrth}) is Gaussian, i.e. the texture parameter $\nu=1$ w.p. 1, $\Cmat_{\uGausss}\left(\thetavec_{0};\omega\right)\rightarrow\frac{\sigma^{2}_{\Zmatsc}}{2N}\Bmat$ as $\omega\rightarrow\infty$. Since $\uGausss\left(\xvec;\omega\right)\rightarrow{1}$ as $\omega\rightarrow\infty$ this result verifies the conclusion following Proposition \ref{EffTh}, which states that in the Gaussian case the achievability condition (\ref{AchCRB}) of the CRLB is satisfied for non-zero constant MT-function. Using (\ref{EMSEN}) and (\ref{MTQMLE_GAIN_2}) we obtain the following empirical estimate of the asymptotic MSE:
\begin{equation}
\label{PredMSEGain}  
\hat{\Cmat}_{\uGausss}\left(\hat{\thetavec}_{\uGausss}\left(\omega\right);\omega\right)=
\frac{\sum_{n=1}^{N}u^{2}_{\rm{G}}\left(\Xmat_{n};\omega\right)\bzeta\left(\Xmat_{n};\omega\right)\bzeta^{T}\left(\Xmat_{n};\omega\right)}
{(\sum_{n=1}^{N}u_{\rm{G}}\left(\Xmat_{n};\omega\right))^{2}}
\end{equation}
where $\bzeta\left(\xvec;\omega\right)\triangleq\Bmat\left[{\rm{Re}}\left\{\hvec\left(\xvec;\omega\right)\right\}^{T},{\rm{Im}}\left\{\hvec\left(\xvec;\omega\right)\right\}^{T}\right]^{T}
$ and $\hvec\left(\xvec;\omega\right)\triangleq\Amat^{H}\left(\xvec-\hat{\muvec}^{(u_{\rm{G})}}_{\Xmatsc}\left(\omega\right)\right)$. As discussed in Subsection \ref{OptChoice}, (\ref{PredMSEGain}) will be used for optimizing the width parameter $\omega$ of the Gaussian MT-function (\ref{GaussMTFuncOrth}). 

Next, we study the robustness to outliers. The vector valued influence function (\ref{InfFuncMTQML}) of the MT-GQMLE (\ref{MTQMLE_GAIN_2})  is given by:
\begin{equation}
\label{IF_NORM_GAIN}
\textbf{IF}\left(\yvec;\thetavec_{0}\right)={\rm{E}}^{-1}\left[\left(\frac{\omega^{2}}{\sigma^{2}_{\Zmatsc}\nu^{2}+\omega^{2}}\right)^{p-m};P_{\nu}\right]\left(\Bmat
\left[{\begin{array}{*{20}c} 
{\rm{Re}}\left\{\Amat^{H}\yvec\right\}  
\\
{\rm{Im}}\left\{\Amat^{H}\yvec\right\}  \end{array}}\right]
-\thetaveczero\right)\exp\left(-\frac{\left\|\Pmat^{\bot}_{\Amat}\yvec\right\|^2}{\omega^{2}}\right).
\end{equation}
Note that when $\yvec\in\mathcal{A}\triangleq\left\{\yvec\in\Csp^{p}:\frac{\left\|\Pmat_{\Amat}\yvec\right\|^{2}}{\left\|\yvec\right\|^{2}}=1\right\}$, which is the range space of $\Amat$, the influence function can grow unbounded as $\|\yvec\|\rightarrow\infty$. Hence, $\hat{\thetavec}_{\uGausss}\left(\omega\right)$ is not robust against outliers in the range space of $\Amat$. However, as follows from the following Proposition, $\hat{\thetavec}_{\uGausss}\left(\omega\right)$ is outlier robust and rejects large norm outliers over a sufficiently large subset of $\Csp^{p}$.
\begin{Proposition}
\label{RobGainProp}
Define the set $\mathcal{B}_{\epsilon}\triangleq\left\{\yvec\in\Csp^{p}:\frac{\left\|\Pmat_{\Amat}\yvec\right\|^{2}}{\left\|\yvec\right\|^{2}}\leq1-\epsilon\right\}$, where $\epsilon>0$ is some small constant. For any fixed width parameter $\omega$, the MT-function (\ref{GaussMTFuncOrth}) satisfies the following properties over the set $\mathcal{B}_{\epsilon}$: 
\begin{inparaenum}
\item
$\uGausss\left(\yvec;\omega\right)$ and $\uGausss\left(\yvec;\omega\right)\left\|\yvec\right\|^{2}$ are bounded. 
\item
$\uGausss\left(\yvec;\omega\right)\rightarrow{0}$ and $\uGausss\left(\yvec;\omega\right)\left\|\yvec\right\|^{2}\rightarrow{0}$ as $\|\yvec\|\rightarrow\infty$.
\end{inparaenum}
[A proof is given in Appendix \ref{RobGainPropProof}]
\end{Proposition}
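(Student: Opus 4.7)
The plan is to translate the defining inequality of $\mathcal{B}_{\epsilon}$ into a lower bound on $\|\Pmat^{\bot}_{\Amatsc}\yvec\|^2$ that scales linearly with $\|\yvec\|^2$, and then exploit Gaussian decay to dominate any polynomial growth. Since $\Pmat_{\Amatsc}$ and $\Pmat^{\bot}_{\Amatsc}$ are complementary orthogonal projectors, the Pythagorean identity gives $\|\yvec\|^2 = \|\Pmat_{\Amatsc}\yvec\|^2 + \|\Pmat^{\bot}_{\Amatsc}\yvec\|^2$ for every $\yvec\in\Csp^{p}$. Therefore, for any $\yvec \in \mathcal{B}_{\epsilon}$ the condition $\|\Pmat_{\Amatsc}\yvec\|^2 \leq (1-\epsilon)\|\yvec\|^2$ immediately yields $\|\Pmat^{\bot}_{\Amatsc}\yvec\|^2 \geq \epsilon\|\yvec\|^2$, and substituting into the definition (\ref{GaussMTFuncOrth}) I obtain the pointwise upper bound
\begin{equation}
\uGausss\left(\yvec;\omega\right) \;\leq\; \exp\!\left(-\epsilon\|\yvec\|^{2}/\omega^{2}\right), \qquad \yvec \in \mathcal{B}_{\epsilon}.
\end{equation}
This bound is the engine driving both parts of the proposition.

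For part (i), boundedness of $\uGausss(\yvec;\omega)$ is immediate since the exponent is non-positive, giving $\uGausss(\yvec;\omega)\leq 1$. For boundedness of $\uGausss(\yvec;\omega)\|\yvec\|^2$ on $\mathcal{B}_{\epsilon}$, I would set $t=\|\yvec\|^2 \geq 0$ and use the bound above to get $\uGausss(\yvec;\omega)\|\yvec\|^2 \leq t\exp(-\epsilon t/\omega^2)$. A single-variable calculus computation (setting the derivative with respect to $t$ to zero) shows that this scalar map attains its maximum at $t^{\star} = \omega^2/\epsilon$, yielding the uniform bound $\omega^2/(\epsilon e)$.

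For part (ii), as $\|\yvec\|\to\infty$ along $\mathcal{B}_{\epsilon}$, the inequality above gives $\uGausss(\yvec;\omega)\leq\exp(-\epsilon\|\yvec\|^2/\omega^2)\to 0$, while $\uGausss(\yvec;\omega)\|\yvec\|^2 \leq \|\yvec\|^2\exp(-\epsilon\|\yvec\|^2/\omega^2)\to 0$ because exponential decay dominates polynomial growth. Both conclusions follow simultaneously.

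There is no genuine obstacle: the proof is essentially a one-line reduction once the key projection identity is invoked, and the set $\mathcal{B}_{\epsilon}$ is defined precisely so as to ensure that a non-trivial fraction of each vector's squared norm sits in the orthogonal complement of the range of $\Amat$. The only place that calls for care is checking that the bound on $t\exp(-\epsilon t/\omega^2)$ is indeed attained on the admissible range of $t$, which is routine.
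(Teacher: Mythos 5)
Your proof is correct and follows essentially the same route as the paper's: the key step in both is the bound $\uGausss\left(\yvec;\omega\right)\leq\exp\left(-\epsilon\left\|\yvec\right\|^{2}/\omega^{2}\right)$ on $\mathcal{B}_{\epsilon}$ (which the paper asserts directly and you justify via the Pythagorean identity for the complementary projectors), after which boundedness and decay of the two functions are immediate. Your explicit maximization of $t\exp(-\epsilon t/\omega^{2})$ is a harmless extra detail the paper omits.
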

Therefore, by Propositions \ref{RobustnessConditions} and \ref{RobGainProp} the influence function of $\hat{\thetavec}_{\uGausss}\left(\omega\right)$ is bounded over $\mathcal{B}_{\epsilon}$ and decays to zero over this subset as the  outlier norm approaches infinity. Notice that while the probability measure of $\mathcal{A}$ is $P_{\Xmatsc}\left(\mathcal{A}\right)=0$, the probability measure of  $\mathcal{B}_{\epsilon}$ satisfies $P_{\Xmatsc}\left(\mathcal{B}_{\epsilon}\right)\approx1$ for sufficiently small $\epsilon$.
Hence, we conclude that $\hat{\thetavec}_{\uGausss}\left(\omega\right)$ is robust to outliers with sufficiently high probability. 

In the following simulation examples we evaluate the MSE performance of the MT-GQMLE (\ref{MTQMLE_GAIN_2}) as compared to the standard GQMLE (\ref{PropEst0}), the non-linear least squares B-robust M-estimator \cite{Serfling}, \cite{HuberRob} based on Tukey's bi-square loss function \cite{Tukey}, and the omniscient MLE. These are evaluated for a specific choice of true vector parameter $\thetavec_{0}=\left[\left[0.3,0.5\right],\left[0.6,0.8\right]\right]^{T}$, observations dimensionality $p=10$ and matrix of regressors $\Amat=\frac{1}{\sqrt{2}}\left[\avec_{0},\avec_{1}\right]$, where $\avec_{k}\triangleq\frac{1}{\sqrt{p}}\left[1,\exp\left(i\vartheta_{k}\right),\ldots,\exp\left(i(p-1)\vartheta_{k}\right)\right]^{T}$, $k=0,1$, $\vartheta_{0}=\pi/3$ and $\vartheta_{1}=\pi/6$. We considered two types of noise distributions with zero location parameter and isotropic dispersion $\sigma^{2}_{\Zmatsc}\Imat$: 
\begin{inparaenum}
\item
Gaussian and
\item
$t$-distributed noise with $\lambda=0.2$ degrees of freedom.
\end{inparaenum}
Notice that unlike Gaussian noise, $t$-distributed noise is heavy tailed and produces outliers. 

The Tukey bi-square M-estimator we compared to is one that minimizes the following objective function $J_{\rho}\left(\thetavec\right)\triangleq\sum_{n=1}^{N}\rho\left(\frac{\|\Xmat_{n}-\Amat\alphavecsm\|}{\hat{\sigma}}\right)$, where $\rho\left(r\right)\triangleq1-\left(1-\left(\frac{r}{c}\right)^{2}\right)^{3}\mathbbm{1}_{[0,c]}\left(\left|r\right|\right)$ is Tukey's bi-square loss function, $c$ is a tuning constant that controls the asymptotic relative efficiency (ARE) \cite{Serfling} of the estimate relative to the CRLB under nominal Gaussian distribution, and $\mathbbm{1}_{[0,c]}\left(\cdot\right)$ denotes the indicator function of the closed interval $[0,c]$. The robust scale parameter estimate $\hat{\sigma}\triangleq\sqrt{\frac{1}{p}\sum_{k=1}^{p}\hat{\sigma}^{2}_{X_{k}}}$, where $\hat{\sigma}^{2}_{X_{k}} = \gamma^{2}[(\textrm{MAD}(\{\textrm{Re}(X_{k,n})\}_{n=1}^{N}))^{2} + (\textrm{MAD}(\{\textrm{Im}(X_{k,n})\}_{n=1}^{N}))^{2}]$, $\gamma\triangleq{1}/\textrm{erf}^{-1}(3/4)$, is a robust median absolute deviation (MAD) estimate of variance \cite{HuberRob}. The constant $\gamma$ ensures consistency of the scale estimate under normally distributed data \cite{HuberRob}. Similarly to the analysis of M-estimators of location with preliminary estimate of scale \cite{HuberRob} it can be shown that the ARE of the considered Tukey bi-square M-estimator, defined as the ratio between the traces of the CRLB and the asymptotic MSE under Gaussian distribution, is given by ${\rm{ARE}}\left(c\right)=\frac{\left(2c^{-2}p^{-1}{\rm{E}}\left[\left(1-\left(R/c\right)^{2}\right)R^{2}\mathbbm{1}_{[0,c]}\left(R\right);P_{R}\right]-{\rm{E}}\left[\left(1-\left(R/c\right)^{2}\right)^{2}\mathbbm{1}_{[0,c]}\left(R\right);P_{R}\right]\right)^{2}}{p^{-1}{\rm{E}}\left[\left(1-\left(R/c\right)^{2}\right)^{4}R^{2}\mathbbm{1}_{[0,c]}\left(R\right);P_{R}\right]}$, where $\sqrt{2}R$ is a chi distributed random variable with $p$ degrees of freedom. Using this formula, the parameter $c$ was set to achieve ARE of $95\%$ in all simulation examples. For the considered observation vector dimensionality $p=10$ we obtained $c\approx6.2$. By equating the first-order derivative of the objective function $J_{\rho}\left(\thetavec\right)$ to zero, Tukey's bi-weight M-estimator is derived by setting $\hat{\thetavec}=\left[{\rm{Re}}\{{\hat{\alphavec}}\}^{T},{\rm{Im}}\{{\hat{\alphavec}}\}^{T}\right]^{T}$, where $\hat{\alphavec}$ is the numerical solution of the equation ${\alphavec}={\left(\Amat^{H}\Amat\right)^{-1}\Amat^{H}}\frac{\sum_{n=1}^{N}\Xmat_{n}w\left(\Xmat_{n},{\alphavecsm}\right)}{\sum_{n=1}^{N}w\left(\Xmat_{n},{\alphavecsm}\right)}$ obtained by fixed-point iteration, and the weight function $w\left(\Xmat,\alphavec\right)\triangleq(1-({\|\Xmat-\Amat\alphavec\|}/{c\hat{\sigma}})^{2})^{2}\mathbbm{1}_{[0,c]}\left({\|\Xmat-\Amat\alphavec\|}/{c\hat{\sigma}}\right)$. Here, the fixed-point iteration was initialized by $\hat{\alphavec}_{\rm{init}}=\left(\Amat^{H}\Amat\right)^{-1}\Amat^{H}\hat{\muvec}_{{\rm{med}}}$, where $\hat{\muvec}_{{\rm{med}}}$ is the median location estimator. The maximum number of iterations and the stopping criterion in the fixed-point iteration were set to 100 and $\|\hat{\alphavec}_{l}-\hat{\alphavec}_{l-1}\|/\|\hat{\alphavec}_{l-1}\|<10^{-6}$, respectively, where $l$ denotes an iteration index. 

We note that the MLE under the $t$-distributed noise model with $\lambda$ degrees of freedom is obtained using the same iterative procedure as in Tukey's bi-square M-estimator with a different weight function $w\left(\Xmat,\alphavec\right)\triangleq(1+2\|\Xmat-\Amat\alphavec\|^{2}/{\lambda\sigma^{2}_{\Zmatsc}}))^{-1}$. Initialization and stopping criterion were identical to those of Tukey's bi-square M-estimator discussed above.

Notice that unlike Tukey's bi-square M-estimator and the MLE for $t$-distributed noise, the proposed MT-GQMLE (\ref{MTQMLE_GAIN_2}) does not involve iterative numerical optimization procedure that may require proper initialization and stopping criterion.

For each noise type we performed two simulations. In the first simulation example, we compared the traces of the asymptotic MSE matrix (\ref{AMSE_GAIN}) and its empirical estimate (\ref{PredMSEGain}) as a function of $\omega$. The empirical asymptotic MSE was obtained from a single realization of $N=1000$ i.i.d. snapshots. The signal-to-noise-ratio (SNR), defined here as $\rm{SNR}\triangleq{{\rm{tr}}\left[\Amat^{H}\Amat\right]}/{\sigma^{2}_{\Zmatsc}}$, was set to $0$ [dB] and $-10$ [dB] for the Gaussian and $t$-distributed noise, respectively. Observing Figs. \ref{Fig1a} and \ref{Fig2a} one sees that the compared quantities are very close. This indicates that the empirical asymptotic MSE can be reliably used for optimal choice of the MT-function parameter, as discussed in subsection \ref{OptChoice}.

In the second simulation example, we compared the traces of the empirical, asymptotic (\ref{AMSE_GAIN}) and empirical asymptotic (\ref{PredMSEGain}) MSEs of the MT-GQMLE to the traces of the empirical MSEs obtained by the other compared algorithms versus $\rm{SNR}$ and sample size $N$. The reported results represent empirically averaged performance with averaging over $1000$ Monte-Carlo simulations. The performance versus $\rm{SNR}$ was evaluated for $N=1000$ i.i.d. observations. The performance versus sample size were evaluated for $\rm{SNR}=0$ [dB] for both Gaussian and $t$-distributed noise. The optimal Gaussian MT-function parameter $\omega_{\rm{opt}}$ was obtained by minimizing the trace of (\ref{PredMSEGain}) over $K_{\Omega}=30$ equally spaced grid points of the range $\Omega=\left[1,30\right]$. Observing Figs. \ref{Fig1b} and \ref{Fig1c}, one can notice that all compared algorithms perform similarly when the noise is Gaussian. The slight performance gap of Tukey's bi-square M-estimator stems from the fact that its ARE is $95\%$, i.e., it is not asymptotically efficient. In Figs. \ref{Fig2b} and \ref{Fig2c} one sees that for $t$-distributed noise, the proposed MT-GQMLE (\ref{MTQMLE_GAIN_2}) outperforms the standard GQMLE and Tukey's bi-square M-estimator, and performs similarly to the MLE that unlike the propose estimator requires complete knowledge of the likelihood function. Here, the performance gap of the Tukey's bi-square M-estimator stems from the inconsistency of the scale estimate $\hat{\sigma}$ for non-Gaussian data. 

The averaged running times of the compared estimators are reported in Table \ref{Table1} for sample size $N=1000$ and ${\rm{SNR}}=0$ [dB] for both Gaussian and $t$-distributed noise. We note that for a fixed sample size, the averaged running times did not vary across the $\rm{SNR}$. Table \ref{Table1} indicates that in this example the observed performance gain of the MT-GQMLE does not come at the expense of significantly increased computational burden. To give a more comprehensive computational complexity assessment a general asymptotic computational load (ACL) analysis (for the considered estimation problem) is reported in Table \ref{Table2}. One sees that the ACL of all compared algorithms is cubic in the dimension $m$ of the parameter vector $\thetaveczero$. Also notice that unlike the MT-GQMLE, whose ACL is quadratic in the dimension $p$ of the observation vector and linear in the sample size $N$, the ACLs of Tukey's estimator and the non-Gaussian MLE are linear in $p$ and non-linear in $N$.
\begin{figure}[H] 
  \begin{center}
    {{\subfigure[]{\label{Fig1a}\includegraphics[scale = 0.445]{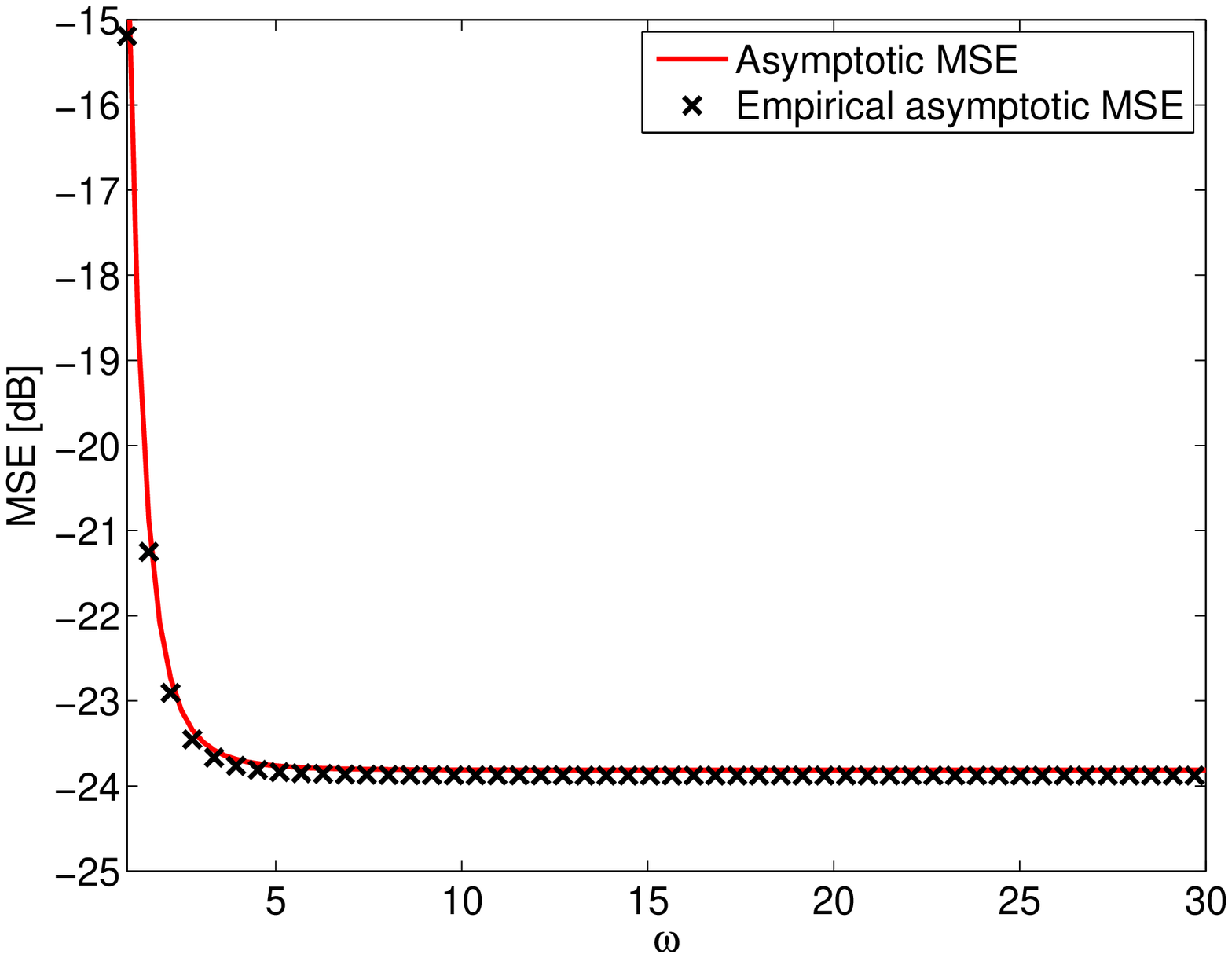}}}}
    {{\subfigure[]{\label{Fig1b}\includegraphics[scale = 0.445]{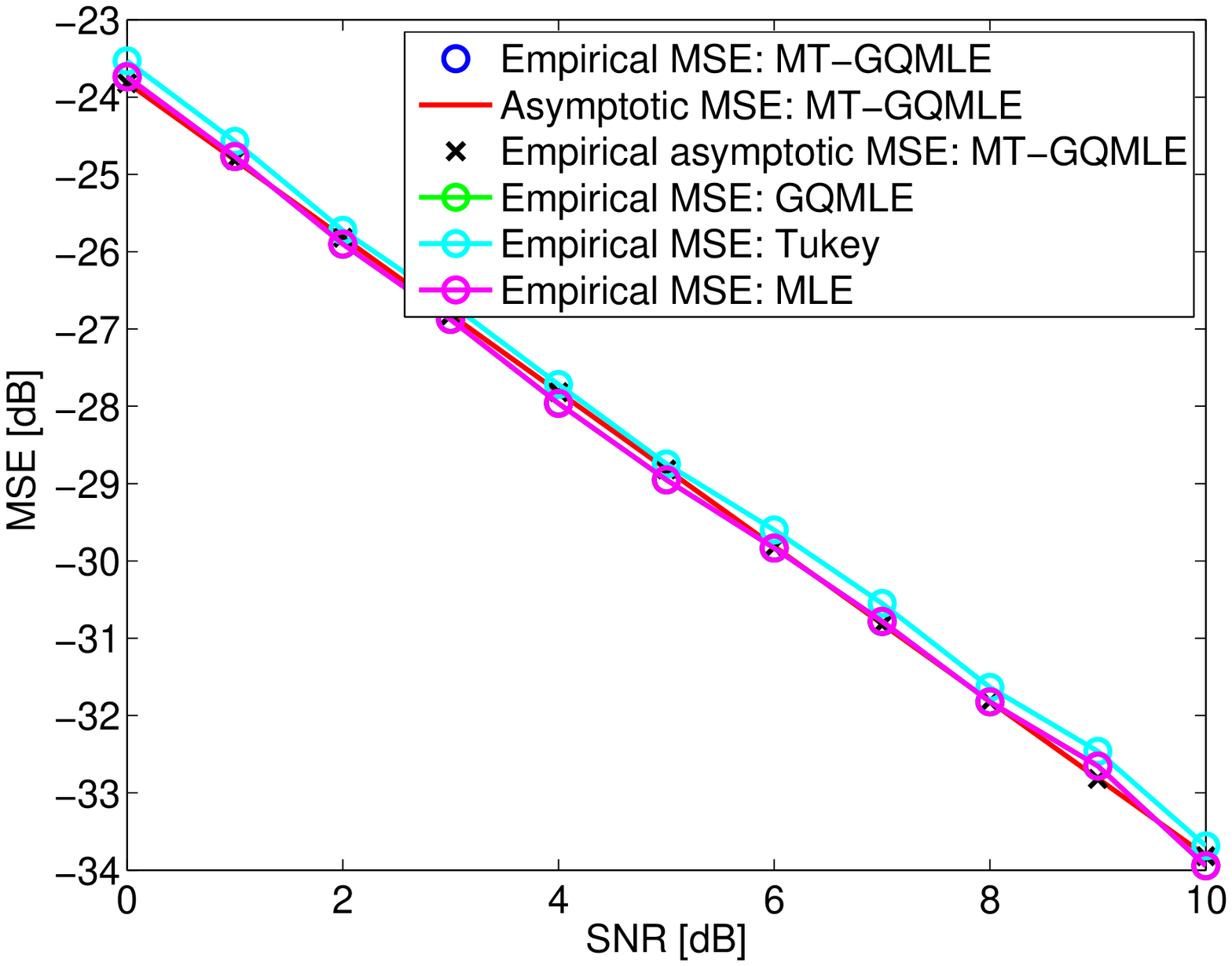}}}}
    {{\subfigure[]{\label{Fig1c}\includegraphics[scale = 0.445]{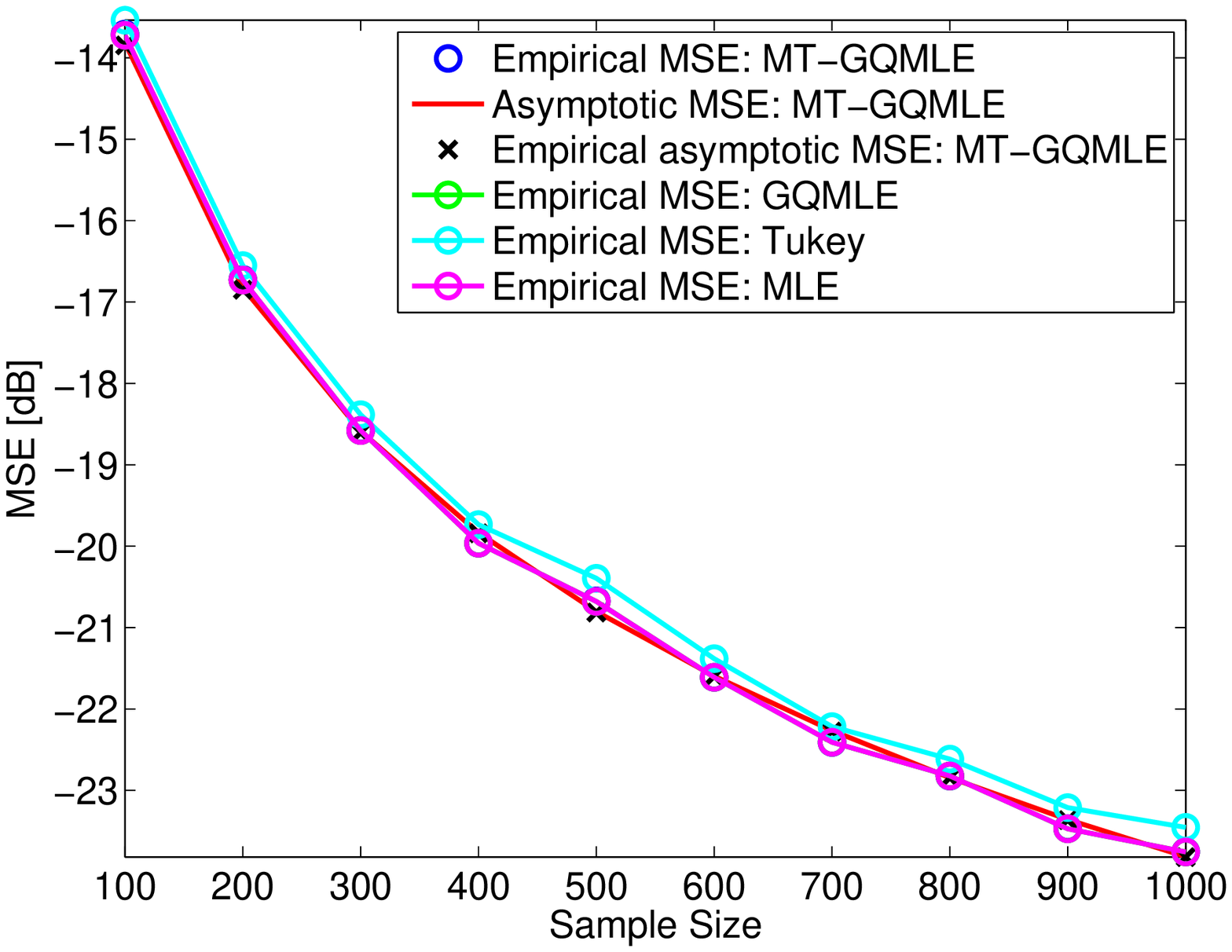}}}}
      \end{center}  
  \caption{\textbf{Linear regression in Gaussian noise:}
   (a) Traces of the asymptotic MSE matrix (\ref{AMSE_GAIN}) and its empirical estimate (\ref{PredMSEGain}) versus the width parameter $\omega$ of the Gaussian MT-function (\ref{GaussMTFuncOrth}). Notice that due to the consistency of (\ref{PredMSEGain}) the compared quantities are close. (b) + (c) Traces of the empirical, asymptotic (\ref{AMSE_GAIN}) and empirical asymptotic (\ref{PredMSEGain}) MSEs of the MT-GQMLE versus {\rm{SNR}}  (b) and sample size (c)  as compared to the GQMLE, Tukey's bi-square M-estimator and the omniscient MLE. Notice that the compared algorithms perform similarly when the noise is Gaussian.}
\label{Fig1}
\end{figure}
\begin{figure}[H]
  \begin{center}
    {{\subfigure[]{\label{Fig2a}\includegraphics[scale = 0.445]{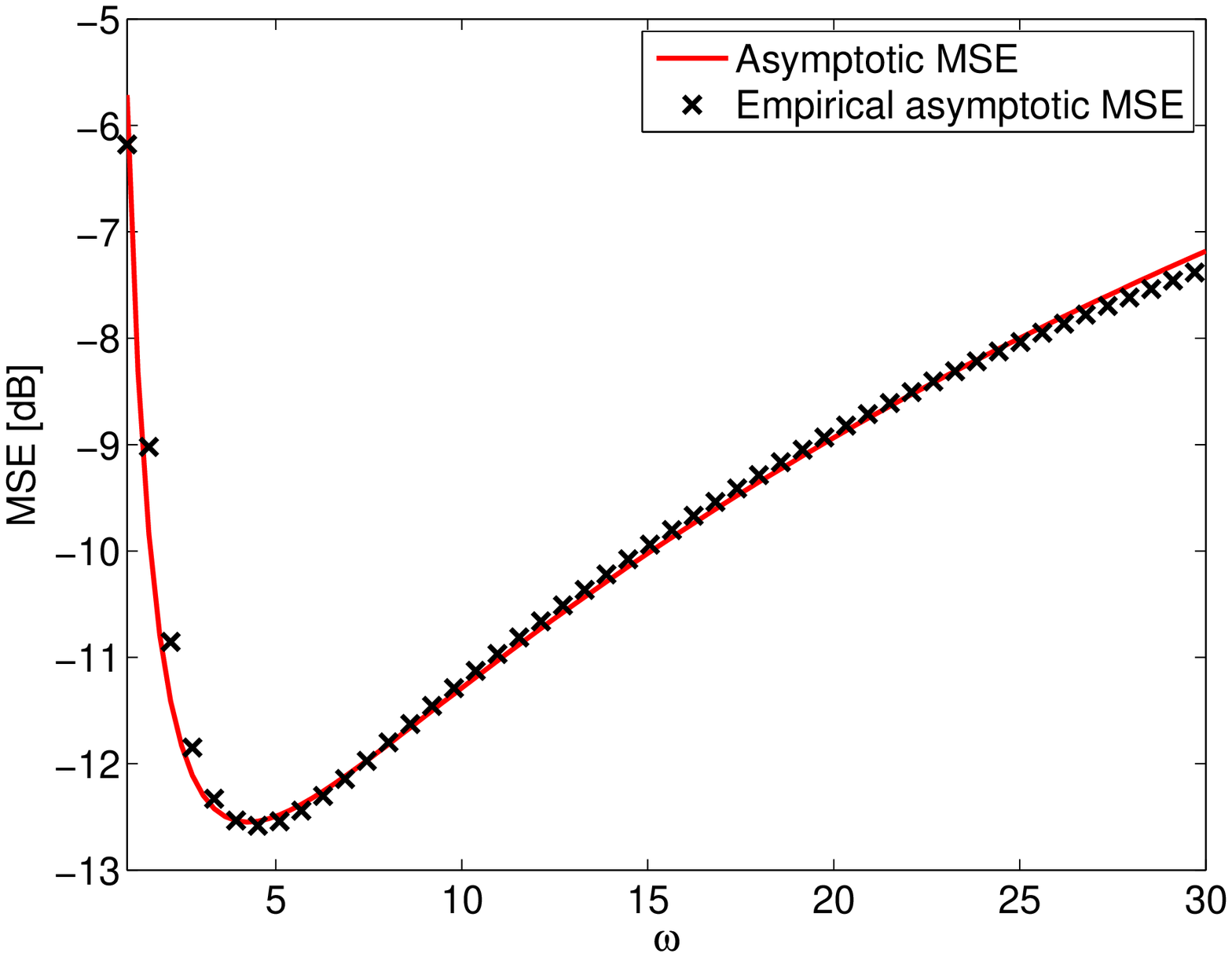}}}}
    {{\subfigure[]{\label{Fig2b}\includegraphics[scale = 0.445]{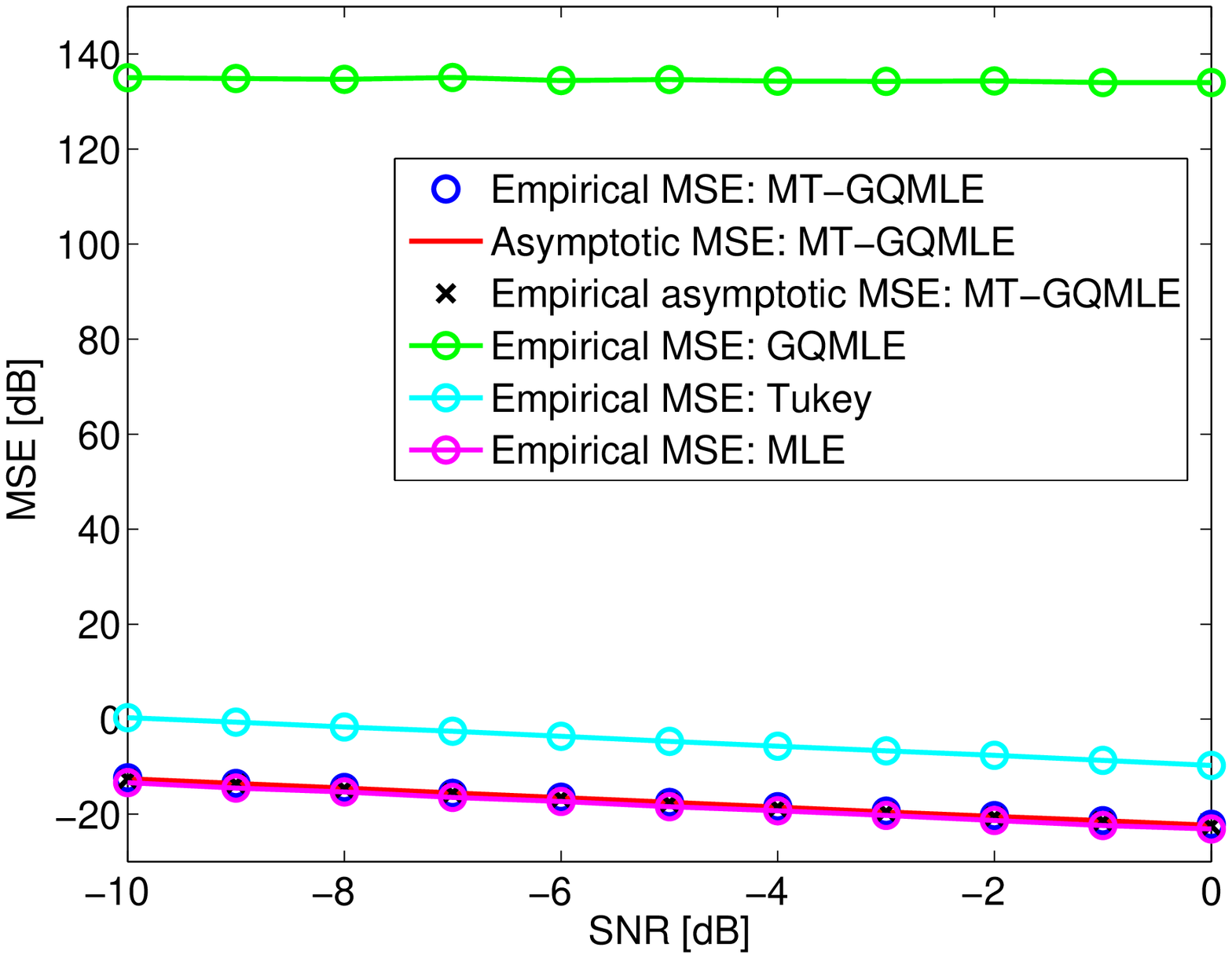}}}}
    {{\subfigure[]{\label{Fig2c}\includegraphics[scale = 0.445]{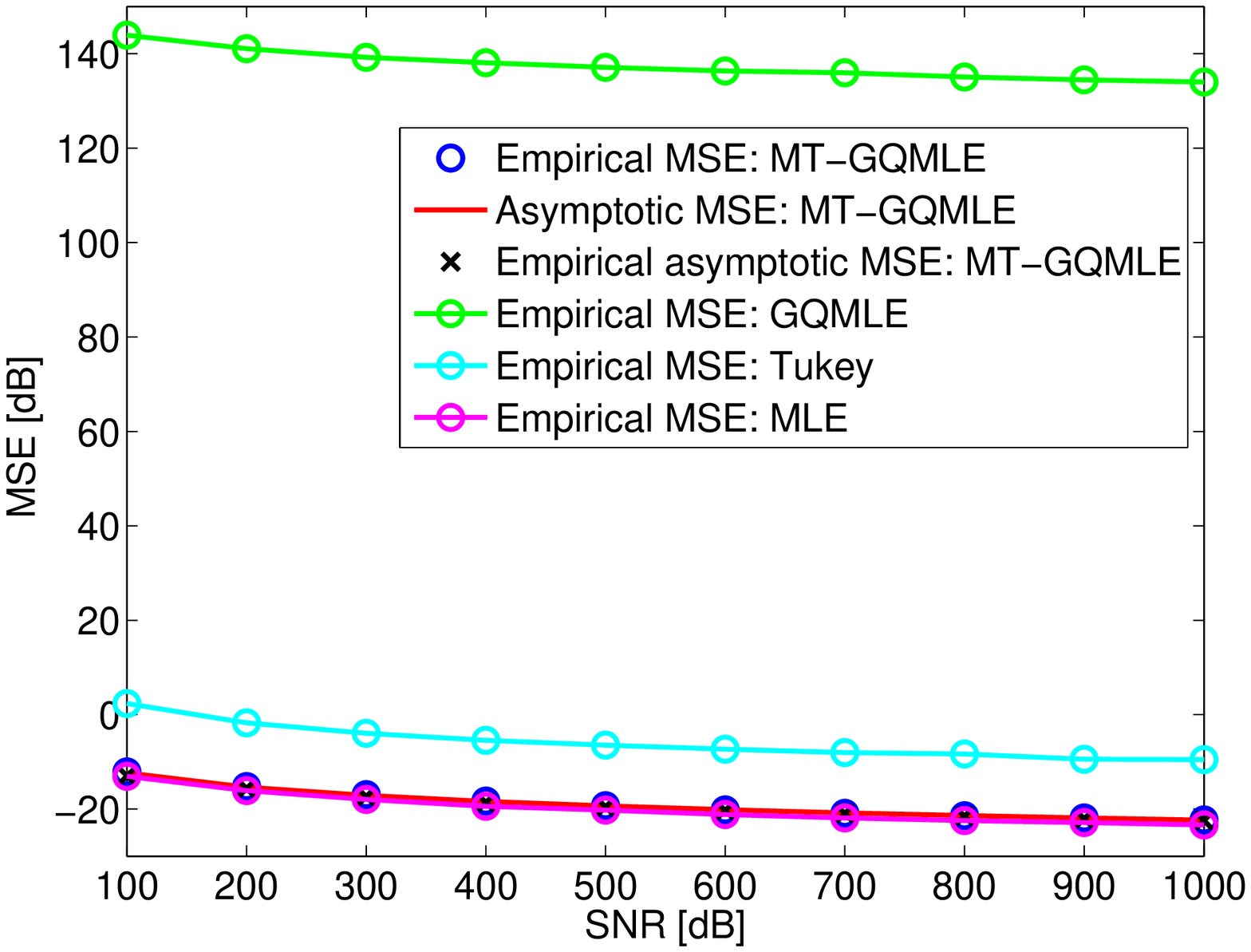}}}}
      \end{center}  
  \caption{\textbf{Linear regression in non-Gaussian noise:}
   (a) Traces of the asymptotic MSE matrix (\ref{AMSE_GAIN}) and its empirical estimate (\ref{PredMSEGain}) versus the width parameter $\omega$ of the Gaussian MT-function (\ref{GaussMTFuncOrth}). Notice that due to the consistency of (\ref{PredMSEGain}) the compared quantities are close. (b) + (c) Traces of the empirical, asymptotic (\ref{AMSE_GAIN}) and empirical asymptotic (\ref{PredMSEGain}) MSEs of the MT-GQMLE versus {\rm{SNR}}  (b) and sample size (c)  as compared to the GQMLE, Tukey's bi-square M-estimator and the omniscient MLE. Notice that the proposed MT-GQMLE (\ref{MTQMLE_GAIN_2}) outperforms the standard GQMLE and Tukey's bi-square M-estimator, and performs similarly to the MLE that unlike the propose estimator requires complete knowledge of the likelihood function.}
\label{Fig2}
\end{figure}
\begin{table}[H]
\caption{\textbf{Linear Regression:} Averaged running times for parameter vector dimension $m=4$, observation vector dimension $p=10$, sample size $N=1000$ and $\rm{SNR}=0$ [dB].}
\begin{center}
\begin{tabular}{|l|l|l|}\hline
Estimator & Running Time [Sec] - Gaussian Noise & Running Time [Sec] - $t$-distributed Noise\\ \hline
GQMLE & 2e-4 & 2e-4 \\ \hline
MT-GQMLE & 6e-3 & 6e-3  \\ \hline
Tukey & 2e-3 & 2e-3 \\ \hline
MLE   & 2e-4 & 1e-3 \\ \hline
\end{tabular}
\end{center}
\label{Table1}
\end{table}%
\begin{table}[H]
\caption{\textbf{Linear regression:} asymptotic computational load. Notations: $m$ and $p$ are the dimensions of the parameter and observation vectors, respectively. $N$ denotes the samples size. 
$K_{\Omega}$ denotes number of grid points of the $\Omega$-axis (the width parameter space of the Gaussian MT-function (\ref{GaussMTFuncOrth})). $L_{\rm{T}}$ and $L_{\rm{M}}$ are the number of fixed-point iterations used in Tukey's and the non-Gaussian MLE estimators, respectively.}
\begin{center}
\begin{tabular}{|l|l|}\hline
Estimator & Asymptotic computational load [flops] \\ \hline
GQMLE & $O\left(m^{3} + p(m^{2} + N)\right)$ \\ \hline
MT-GQMLE & \begin{tabular}{l} MT-function optimization: $O( m^3 + pm^2 + p^2(m + N) + Nm(p + m)K_{\Omega})$ \\  
Estimation: $O\left(m^{3} + pm^{2} + p^{2}(m + N) \right)$ \end{tabular}  \\ \hline
Tukey & \begin{tabular}{l} Initialization (via median location estimator): $O\left(m^{3} + p(m^{2} + N\log_{2}{N})\right)$ \\  
Estimation: $O\left(m^{3} + p(m^{2} + N\log_{2}{N} + (m+N)L_{\rm{T}})\right)$ \end{tabular}
\\ \hline
MLE - Gaussian Noise &  $O\left(m^{3} + p(m^{2} + N)\right)$ \\ \hline
MLE - $t$-distributed Noise &  \begin{tabular}{l} Initialization (via median location estimator): $O\left(m^{3} + p(m^{2} + N\log_{2}{N})\right)$ \\  
Estimation: $O\left(m^{3} + p(m^{2} + (m+N)L_{\rm{M}})\right)$ 
\end{tabular} \\ \hline
\end{tabular}
\end{center}
\label{Table2}
\end{table}%

\subsection{Source localization}
\label{SourceLoc} 
Here, we illustrate the use of the proposed MT-GQMLE (\ref{PropEst}) for source localization. We consider a uniform linear array (ULA) of $p$ sensors with half wave-length spacing that receive a signal generated by a narrowband far-field point source with azimuthal angle of arrival (AOA) $\theta_{0}$. Under this model the array output satisfies \cite{Viberg}:
\begin{equation}
\label{ArrayModel}
\Xmat_{n}=S_{n}\avec\left(\theta_{0}\right)+\Wmat_{n},\hspace{0.2cm}n=1,\ldots,N,
\end{equation}
where $n$ is a discrete time index, $\Xmat_{n}\in\Csp^{p}$ is the vector of received signals, $S_{n}\in\Csp$ is the emitted signal, $\avec\left(\theta_{0}\right)\triangleq\left[1,
\exp\left({-i\pi\sin\left(\theta_{0}\right)}\right),\ldots,\exp\left({-i\pi(p-1)\sin\left(\theta_{0}\right)}\right)\right]^{T}$ is the steering vector of the array toward direction $\theta_{0}$ and $\Wmat_{n}\in\Csp^{p}$ is an additive noise. We assume that the following conditions are satisfied: 
\begin{inparaenum}[1)]
\item
$\theta_{0}\in\Theta=\left[-\pi/2,\pi/2-\delta\right]$ where $\delta$ is a small positive constant,
\item
the emitted signal is symmetrically distributed about the origin,
\item
$S_{n}$ and $\Wmat_{n}$ are statistically independent and first-order stationary, and
\item
the noise component is spherically contoured with the stochastic representation (\ref{CompGaussOrth}).
\end{inparaenum}

In order to gain robustness against outliers, as well as sensitivity to higher-order moments, we specify the MT-function in the zero-centred Gaussian family of functions parametrized by a width parameter $\omega$, i.e.,
\begin{equation}
\label{GaussMTFunc}
\uGausss\left(\xvec;\omega\right)\triangleq\exp\left(-{\left\|\xvec\right\|^{2}}/{\omega^{2}}\right),\hspace{0.2cm\omega\in\Rsp_{++}}.
\end{equation}
Using (\ref{MTMean}), (\ref{MTCovZ}), (\ref{ArrayModel}) and (\ref{GaussMTFunc}) it can be shown that  the MT-mean and MT-covariance under the transformed probability measure $Q^{(\uGausss)}_{\Xmatsc;\theta}$ are given by:
\begin{equation}
\label{MG}
\muvec^{(u_{\rm{G}})}_{\Xmatsc}\left(\theta;\omega\right)=\zerovec
\end{equation}
and 
\begin{equation}
\label{CG}
\bSigma^{(u_{\rm{G}})}_{\Xmatsc}\left(\theta;\omega\right)=r_{S}\left(\omega\right)\avec\left(\theta\right)\avec^{H}\left(\theta\right) + r_{\Wmatsc}\left(\omega\right)\Imat,
\end{equation} 
respectively, where $r_{S}\left(\omega\right)$ and $r_{\Wmatsc}\left(\omega\right)$ are some strictly positive functions of $\omega$.
Hence, by substituting (\ref{GaussMTFunc})-(\ref{CG}) into (\ref{ObjFun}) the resulting MT-GQMLE (\ref{PropEst}) is obtained by solving the following maximization problem: 
\begin{equation}
\label{QML_DOA}
\hat{\theta}_{u_{\rm{G}}}\left(\omega\right)=\arg\max\limits_{\theta\in\Theta}\avec^{H}\left(\theta\right)\hat{\Cmat}^{(u_{\rm{G}})}_{\Xmatsc}\left(\omega\right)\avec\left(\theta\right),
\end{equation}
where $\hat{\Cmat}^{(u_{\rm{G}})}_{\Xmatsc}\left(\omega\right)\triangleq\hat{\bSigma}^{\left(u_{\rm{G}}\right)}_{\Xmatsc}\left(\omega\right)
+\hat{\muvec}^{\left(u_{\rm{G}}\right)}_{\Xmatsc}\left(\omega\right)\hat{\muvec}^{\left(u_{\rm{G}}\right)H}_{\Xmatsc}\left(\omega\right)$. 
Notice that the MT-GQMLE (\ref{QML_DOA}) maximizes a measure-transformed version of the spatial spectrum related to Bartlett's beamformer \cite{Viberg}.

Under the considered settings, it can be shown that the conditions stated in Theorems \ref{ConsistencyTh}-\ref{EAMSETh} are satisfied. The resulting asymptotic MSE (\ref{AMSEN}) is given by:
\begin{equation}
\label{MSE_DOA}
C_{u_{\rm{G}}}\left(\theta_{0};\omega\right)=\frac{{\rm{E}}\left[\left(\nu^{4}\sigma^{4}_{\zvec} + \frac{\nu^{2}\sigma^{2}_{\zvec}\omega^{2}p}{2\nu^{2}\sigma^{2}_{\zvec}
+\omega^{2}}\left|S\right|^{2}\right)h\left(\sqrt{2p}S,\sqrt{2}\nu\sigma_{\zvec},\omega\right);P_{S,\nu}\right]}
{{{\rm{E}}^{2}}\left[p\left|S\right|^{2}h\left(\sqrt{p}S,\nu\sigma_{\zvec},\omega\right);P_{S,\nu}\right]}
\frac{6}{\pi^{2}\cos^{2}\left(\theta_{0}\right)\left(p^{2}-1\right)N},
\end{equation}
where 
$h\left(S,\nu,\omega\right)\triangleq\left(({\nu^{2}+\omega^{2})}/{\omega^{2}}\right)^{-p-2}\exp\left({-{\left|S\right|^{2}}/({\nu^{2}+\omega^{2}}})\right)$.
Notice that when the noise is Gaussian, i.e. the texture component $\nu=1$ in (\ref{CompGaussOrth}), $C_{u_{\rm{G}}}\left(\theta_{0};\omega\right)\rightarrow{C(\theta_{0})}$ as $\omega\rightarrow\infty$, where $C(\theta_{0})\triangleq\frac{6\sigma^{2}_{\zvec}\left(\sigma^{2}_{\zvec}+\sigma^{2}_{S}p\right)}{\sigma^{4}_{S}\pi^{2}\cos^{2}\left(\theta_{0}\right)p^{2}(p^{2}-1)}$ is the CRLB for estimating $\theta_{0}$ under the assumption that the signal and noise are jointly Gaussian and $\sigma^{2}_{S}\triangleq{E}\left[\left|S\right|^{2};P_{S}\right]$ is the variance of the signal. Again, since $\uGausss\left(\xvec;\omega\right)\rightarrow{1}$ as $\omega\rightarrow\infty$ this result verifies the conclusion following Proposition \ref{EffTh}, which states that when the observations are normally distributed the CRLB is achievable for non-zero constant MT-function. Furthermore, the empirical asymptotic MSE (\ref{EMSEN}) takes the form:
\begin{equation}
\label{PredMSE}
\hat{C}_{u_{\rm{G}}}(\hat{\theta}_{u_{\rm{G}}}\left(\omega\right);\omega)=
\frac{\sum_{n=1}^{N}\alpha^{2}(\Xmat_{n};\hat{\theta}_{u_{\rm{G}}}\left(\omega\right))u^{2}_{\rm{G}}(\Xmat_{n};\omega)}
{(\sum_{n=1}^{N}\beta(\Xmat_{n};\hat{\theta}_{u_{\rm{G}}}\left(\omega\right))u_{\rm{G}}(\Xmat_{n};\omega))^{2}},
\end{equation}
where $\alpha\left(\Xmat;\theta\right)\triangleq2{\rm{Re}}\left\{\dot{\avec}^{H}\left(\theta\right)\Xmat\Xmat^{H}\avec\left(\theta\right)\right\},$ 
$\beta\left(\Xmat;\theta\right)\triangleq{2{\rm{Re}}\left\{\ddot{\avec}^{H}\left(\theta\right)\Xmat\Xmat^{H}\avec\left(\theta\right)\\+|\dot{\avec}^{H}\left(\theta\right)\Xmat|^{2}\right\}},$
$\dot\avec\left(\theta\right)\triangleq{d\avec\left(\theta\right)}/{d\theta}$ and $\ddot\avec\left(\theta\right)\triangleq{d^{2}\avec\left(\theta\right)}/{d\theta^{2}}$. 

Next, we study the robustness of the MT-GQMLE (\ref{QML_DOA}) to outliers. Using (\ref{InfFuncMTQML}) the corresponding influence function is given by:
\begin{equation}
\label{IF_DOA}
{\rm{IF}}\left(\yvec;\thetavec_{0}\right)=\frac{{\rm{E}}\left[\left(1+\frac{\nu^{2}\sigma^{2}_{\zvec}}{\omega^{2}}\right)^{2}h\left(\sqrt{p}S,\nu\sigma_{\zvec},\omega\right);P_{S,\nu}\right]}
{{\rm{E}}\left[\left|S\right|^{2}h\left(\sqrt{p}S,\nu\sigma_{\zvec},\omega\right);P_{S,\nu}\right]}\times\frac{12{\rm{Re}}\left\{\dot{\avec}^{H}\left(\theta\right)\yvec\yvec^{H}\avec\left(\theta\right)\right\}\exp\left(-{\left\|\yvec\right\|^{2}}/{\omega^{2}}\right)}{\pi^{2}\cos^{2}\left(\theta_{0}\right)p^{2}\left(p^{2}-1\right)}.
\end{equation}
Notice that for any fixed width parameter $\omega$ the MT-function (\ref{GaussMTFunc}) satisfies the conditions stated in Proposition \ref{RobustnessConditions} for $\mathcal{C}=\Csp^{p}$. Thus, the influence function (\ref{IF_DOA}) is bounded and decays to zero as the outlier norm approaches infinity, i.e, the  MT-GQMLE (\ref{QML_DOA}) is B-robust and rejects large norm outliers.

In the following example, we consider a BPSK signal with power $\sigma^{2}_{S}$ impinging on a $4$-element ULA with half wavelength spacing from AOA $\theta_{0}=30^{\circ}$. We considered two types of noise distributions with zero location parameter and isotropic dispersion $\sigma^{2}_{\Zmatsc}\Imat$: 
\begin{inparaenum}
\item
Gaussian and
\item
heavy-tailed $K$-distributed noise with shape parameter $\lambda=0.75$.
\end{inparaenum}

Similarly to the gain estimation example, for each noise type we performed two simulations. In the first simulation example, we compared the the asymptotic MSE (\ref{MSE_DOA}) and its empirical estimate (\ref{PredMSE}) as a function of $\omega$. The empirical asymptotic MSE was obtained from a single realization of $N=5000$ i.i.d. snapshots. The $\rm{SNR}$, defined in this example as $\textrm{SNR}\triangleq{10}\log_{10}{\sigma^{2}_{S}}/{\sigma^{2}_{\Zmatsc}}$, was set to $-5$ [dB] and $-15$ [dB] for the Gaussian and $K$-distributed noise, respectively. Observing Figs. \ref{Fig3a} and \ref{Fig4a} one sees that due to the consistency of (\ref{PredMSE}), which follows from Theorem \ref{EAMSETh}, the compared quantities are very close. This illustrates the reliability of the empirical asymptotic MSE in optimal choice of the MT-function parameter, as discussed in subsection \ref{OptChoice}. 

In the second simulation example, we compared the empirical, asymptotic (\ref{MSE_DOA}) and empirical asymptotic (\ref{PredMSE}) MSEs of the MT-GQMLE (\ref{QML_DOA}) to the empirical MSEs obtained by the GQMLE \cite{Trees}, the omniscient MLE, the measure-transformed MUSIC (MT-MUSIC) \cite{Todros2} and the robust MUSIC generalizations based on the empirical sign-covariance (SGN-MUSIC) \cite{Visuri} and Tyler's scatter M-estimator (TYLER-MUSIC) \cite{Visa}. The width parameter of the Gaussian MT-function in the MT-MUSIC algorithm \cite{Todros2} was set to guarantee relative transform domain Fisher information loss (under nominal Gaussian distribution) of no more than 40$\%$. Initial conditions, maximum number of iterations and stopping criteria of the fixed point iterations used in the MT-MUSIC and TYLER-MUSIC algorithms are identical to those set in \cite{Todros2}. In all compared algorithms estimation of $\theta_{0}$ was carried out via maximization over $K_{\Theta}=10^{4}$ equally spaced grid points of the parameter space $\Theta$. The compared MSE performance are indexed by $\rm{SNR}$ and sample size $N$ with averaging over $1000$ Monte-Carlo simulations. The performance versus $\rm{SNR}$ were evaluated for $N=5000$ i.i.d. observations. The performance versus sample size was evaluated for $\rm{SNR}=0$ [dB] for the Gaussian noise and $\rm{SNR}=-25$ [dB] for the $K$-distributed noise. The optimal Gaussian MT-function parameter $\omega_{\rm{opt}}$ was obtained by minimizing (\ref{PredMSE}) over $K_{\Omega}=30$ equally spaced grid points of the range $\Omega=\left[1,30\right]$. Observing Figs. \ref{Fig3b} and \ref{Fig3c}, one can notice that when the noise is Gaussian the MLE that assumes full knowledge of the likelihood function outperforms the other compared algorithms which attain similar performance. In Figs. \ref{Fig4b} and \ref{Fig4c} one sees that for the $K$-distributed noise, the proposed MT-GQMLE (\ref{MTQMLE_GAIN_2}) outperforms the standard GQMLE and the robust MUSIC generalizations and attains estimation performance that are significantly closer to those obtained by the MLE that, unlike the MT-GQMLE, requires complete knowledge of the likelihood function. The performance gap from the robust MUSIC generalizations stems from the empirical MSE optimization performed by the MT-GQMLE.  

As in the regression example, the averaged running times of the compared estimators are reported in Table \ref{Table3} for sample size $N=5000$ and $\rm{SNR}=-10$ [dB] for both Gaussian and $K$-distributed noise. We note that for a fixed sample size, the averaged running times did not vary across $\rm{SNR}$. Observing Table \ref{Table3}, one sees that the computational burden of the MT-GQMLE is significantly lower than this of the MLE for both Gaussian and $K$-distributed noise. Furthermore, one can notice that, similarly to the linear regression example, the observed performance gain of the MT-GQMLE does not come at the expense of significantly increased running time as compared to the GQMLE and the robust MUSIC generalizations. To support these observations an asymptotic computational load (ACL) analysis (under the considered estimation problem) is reported in Table \ref{Table4}. One sees that unlike the MLE, the ACL of the MT-GQMLE is not affected by product of the sample size $N$ and the number of grid points $K_{\Theta}$ of the parameter space, which is quite large for the considered $N=5000$ and  $K_{\Theta}=10^{4}$. Also notice that when (\ref{PredMSE}) is minimized over a relatively small number of grid points $K_{\Omega}$ the ACL of the MT-GMLE is not significantly higher than this of the GQMLE (here $K_{\Theta}=30$). Finally, note that unlike the robust MUSIC generalizations, whose ACL is cubic in the number of sensors $p$, the ACL of the MT-GQMLE is quadratic in $p$.
\begin{figure}[H]
  \begin{center}
    {{\subfigure[]{\label{Fig3a}\includegraphics[scale = 0.445]{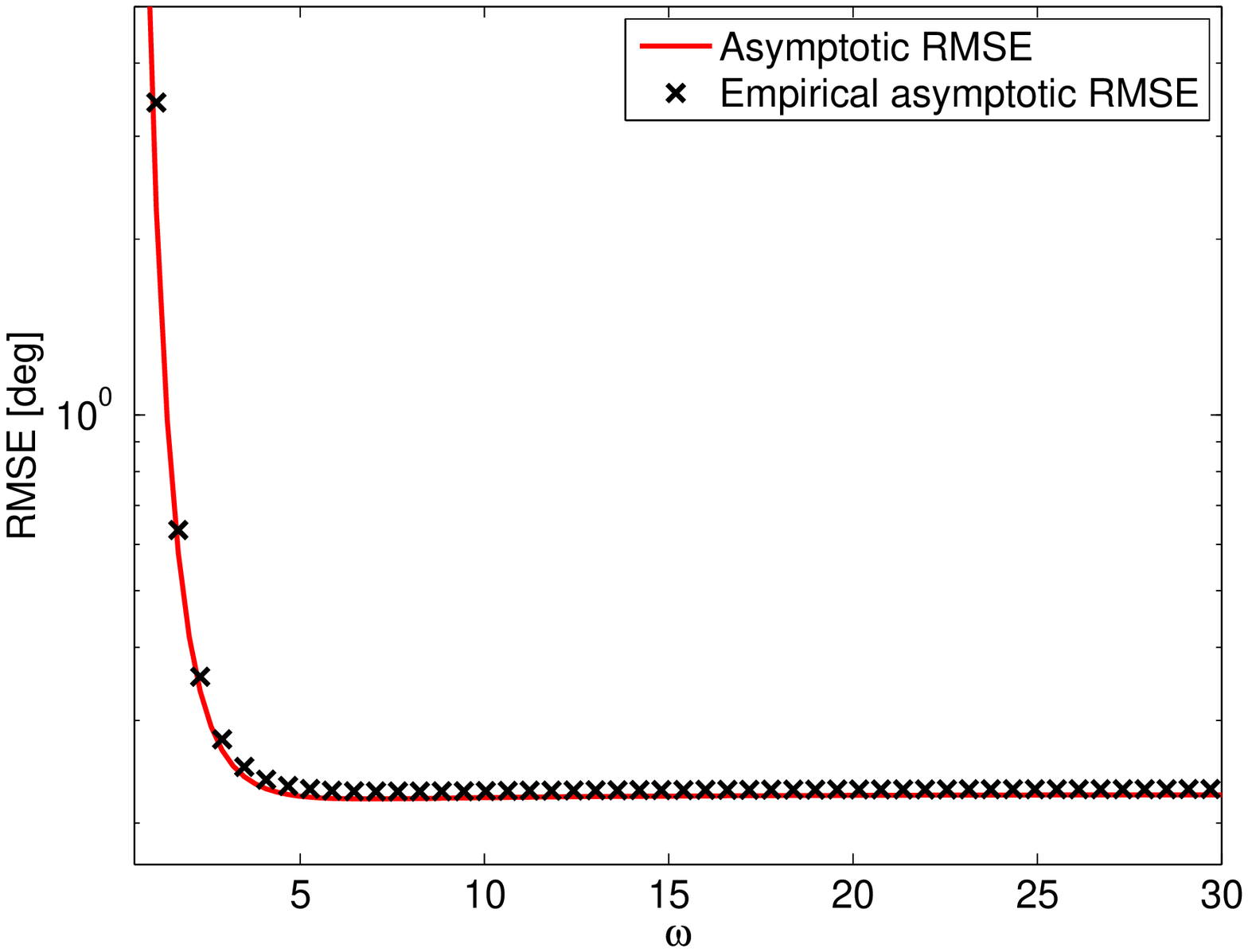}}}}
    {{\subfigure[]{\label{Fig3b}\includegraphics[scale = 0.445]{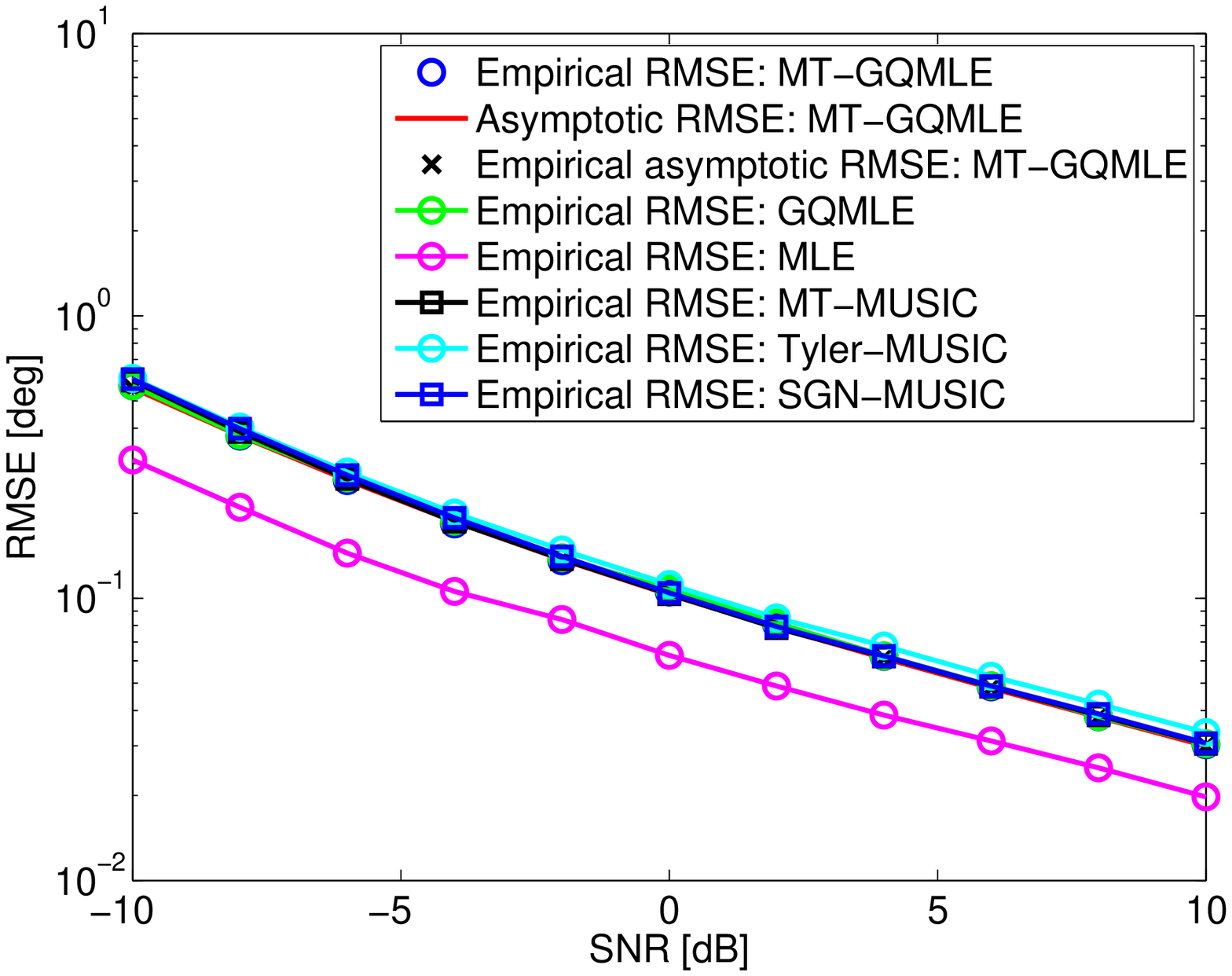}}}}
    {{\subfigure[]{\label{Fig3c}\includegraphics[scale = 0.445]{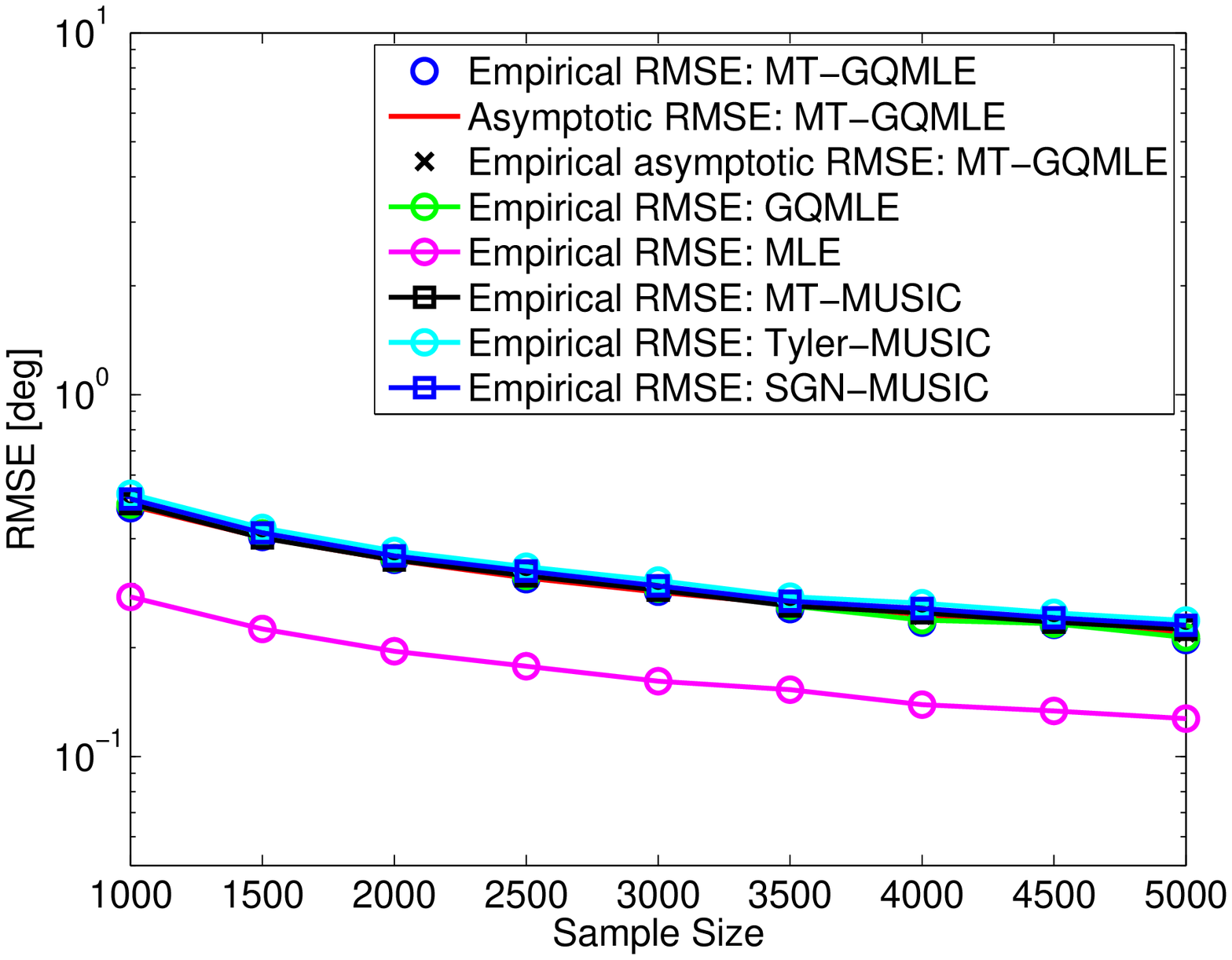}}}}
      \end{center}  
  \caption{\textbf{Source localization in Gaussian noise:}
   (a) Asymptotic RMSE (\ref{MSE_DOA}) and its empirical estimate (\ref{PredMSE}) versus the width parameter $\omega$ of the Gaussian MT-function (\ref{GaussMTFunc}). Notice that due to the consistency of (\ref{PredMSE}) the compared quantities are close. (b) + (c) The empirical, asymptotic (\ref{MSE_DOA}) and empirical asymptotic (\ref{PredMSE}) RMSEs of the MT-GQMLE versus SNR (b) and sample size (c) as compared to the GQMLE, MT-MUSIC, SGN-MUSIC, TYLER-MUSIC and the MLE. Notice that, excluding the omniscient MLE, all algorithms perform similarly when the noise is normally distributed.}
\label{Fig1}
\end{figure}
\begin{figure}[H]
  \begin{center}
    {{\subfigure[]{\label{Fig4a}\includegraphics[scale = 0.445]{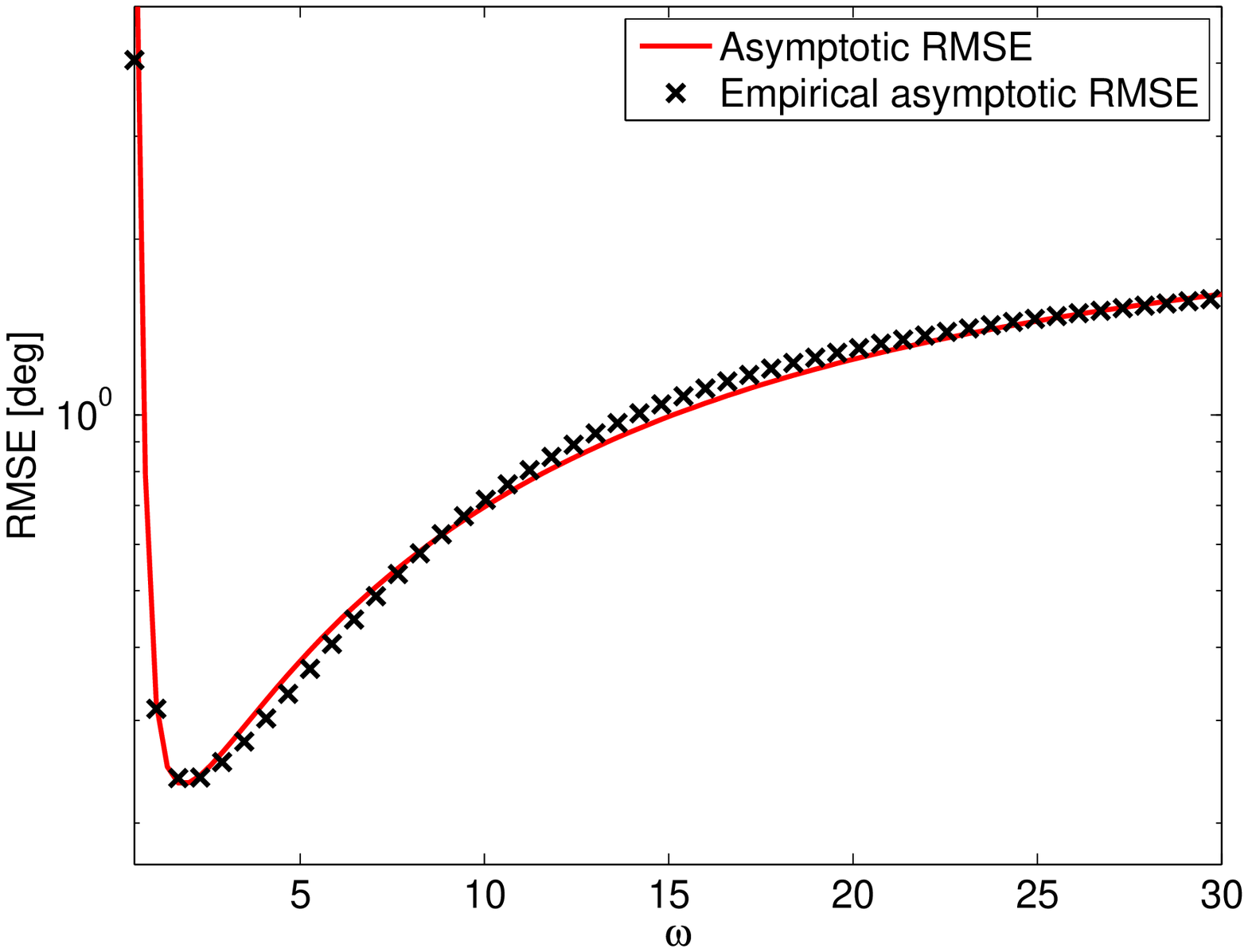}}}}
    {{\subfigure[]{\label{Fig4b}\includegraphics[scale = 0.445]{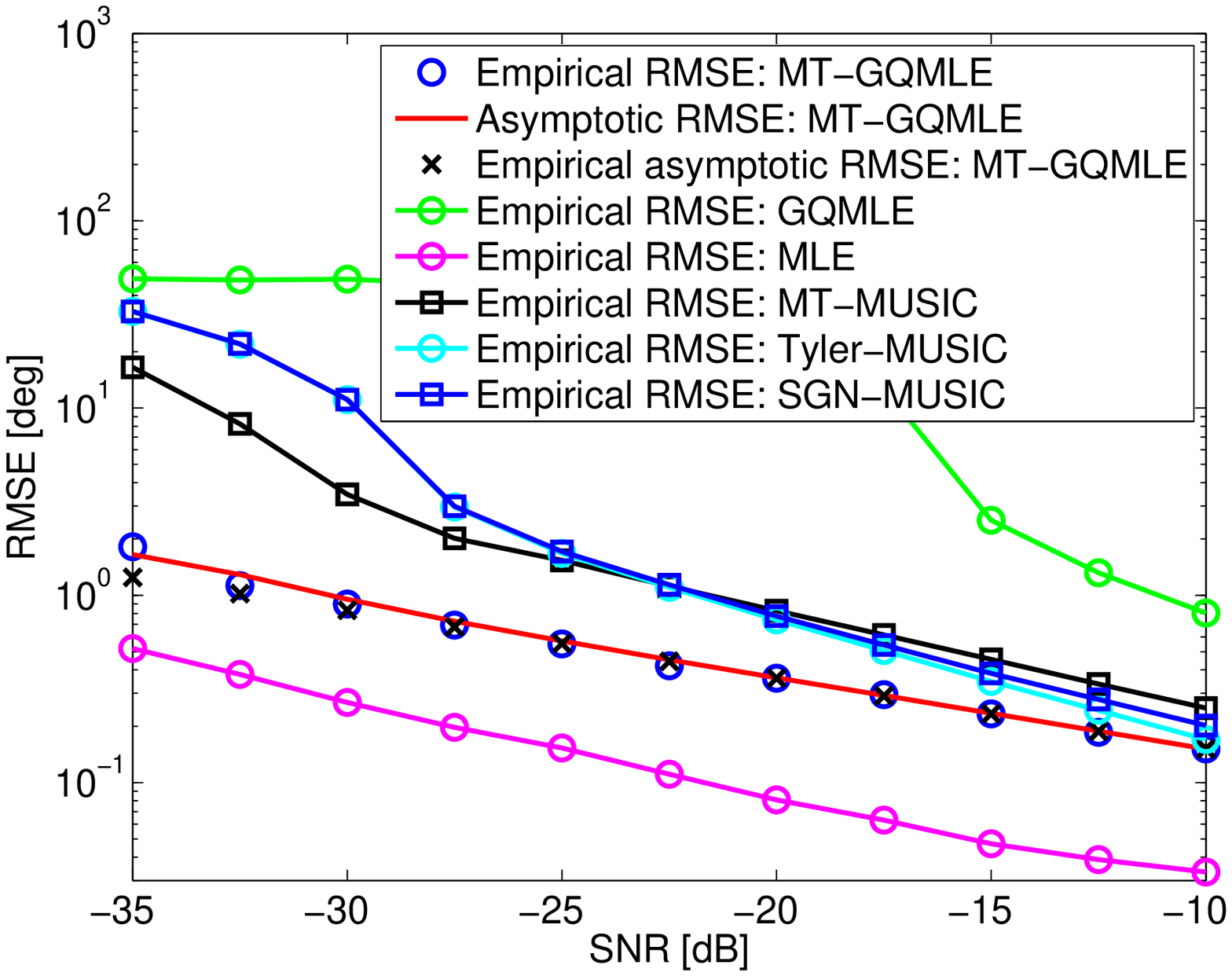}}}}
    {{\subfigure[]{\label{Fig4c}\includegraphics[scale = 0.445]{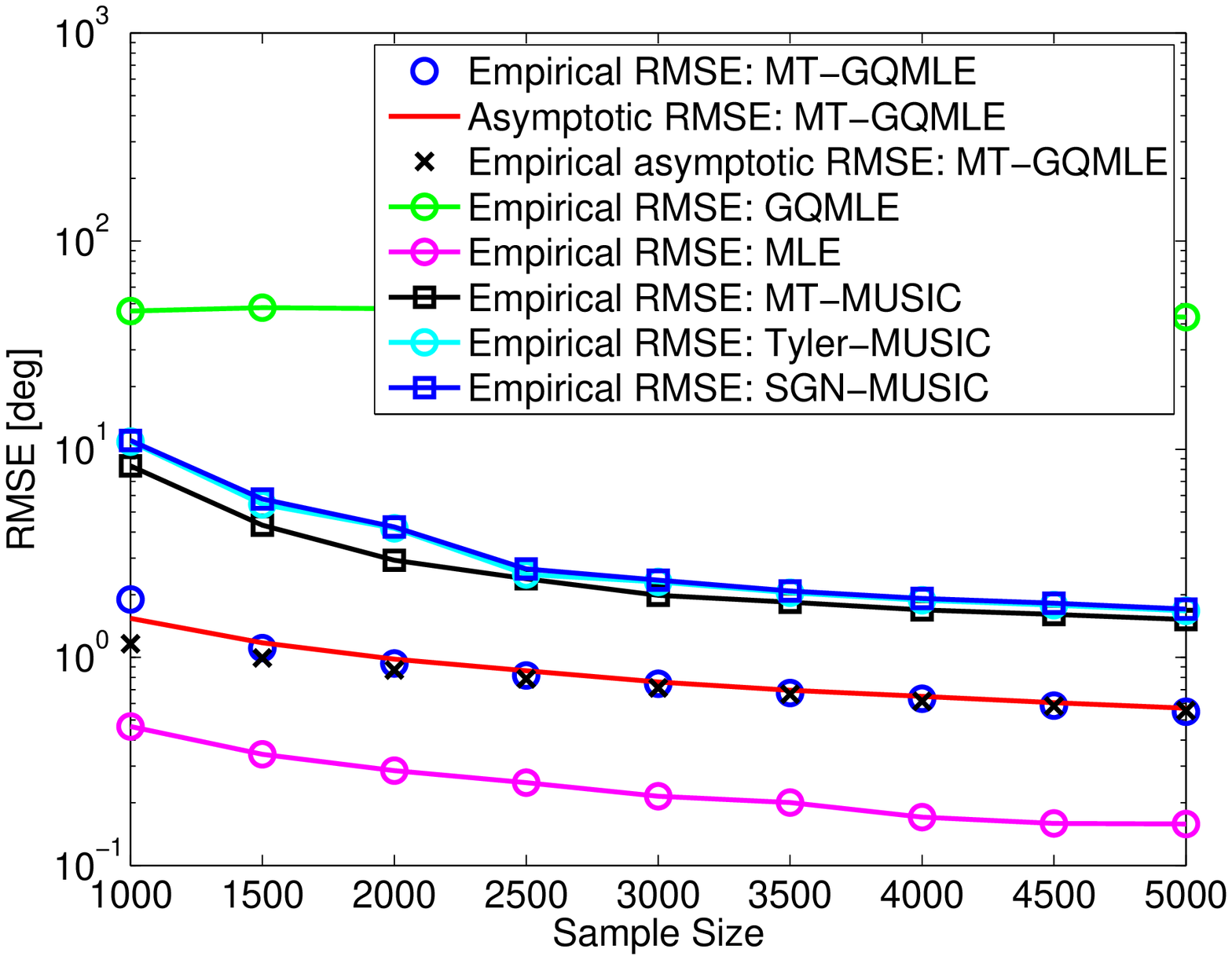}}}}
      \end{center}  
  \caption{\textbf{Source localization in non-Gaussian noise:}
   (a) Asymptotic RMSE (\ref{MSE_DOA}) and its empirical estimate (\ref{PredMSE}) versus the width parameter $\omega$ of the Gaussian MT-function (\ref{GaussMTFunc}). Notice that due to the consistency of (\ref{PredMSE}) the compared quantities are close. (b) + (c) The empirical, asymptotic (\ref{MSE_DOA}) and empirical asymptotic (\ref{PredMSE}) RMSEs of the MT-GQMLE versus SNR (b) and sample size (c) as compared to the GQMLE, MT-MUSIC, SGN-MUSIC, TYLER-MUSIC and the MLE. Notice that the MT-GQMLE outperforms the standard GQMLE and the robust MUSIC generalizations and attains estimation performance that are significantly closer to those obtained by the MLE that, unlike the MT-GQMLE, requires complete knowledge of the likelihood function.}
\label{Fig4}
\end{figure}
\begin{table}[H] 
\caption{\textbf{Source localization:} averaged running times for number of sensors $p=4$, sample size $N=5000$ and $\rm{SNR}=-10$ [dB].}
\begin{center}
\begin{tabular}{|l|l|l|}\hline
Estimator & Running Time [Sec] - Gaussian Noise & Running Time [Sec] - $K$-distributed Noise\\ \hline
GQMLE & 1e-3 & 1e-3 \\ \hline
MT-GQMLE & 5e-2 & 5e-2  \\ \hline
MT-MUSIC & 1e-2 & 1e-2 \\ \hline
SGN-MUSIC   &1e-3 & 1e-3 \\ \hline
Tyler-MUSIC   & 7e-3 & 7e-3 \\ \hline
MLE   & 2 & 25 \\ \hline
\end{tabular}
\end{center}
\label{Table3}
\end{table}%
\begin{table}[htbp]
\caption{\textbf{Source localization:} asymptotic computational load. Notations: $p$ and $N$ are the number of sensors and samples size, respectively. 
$K_{\Theta}$ and $K_{\Omega}$ denote the number of grid points of the parameter space $\Theta$ and the $\Omega$-axis (the width parameter space of the Gaussian MT-function (\ref{GaussMTFunc})), respectively. 
$L_{\rm{M}}$ and $L_{\rm{T}}$ are the number of fixed-point iterations used in the MT-MUSIC and Tyler-MUSIC estimators, respectively.}\begin{center}
\begin{tabular}{|l|l|}\hline
Estimator & Asymptotic computational load [flops] \\ \hline
GQMLE & $O\left(p^{2}(N+K_{\Theta})\right)$ \\ \hline
MT-GQMLE & \begin{tabular}{l} MT-function optimization: $O(p^{2}(N+K_{\Theta})K_{\Omega})$ \\  
 Estimation: $O\left(p^{2}(N+K_{\Theta})\right)$  \end{tabular}  \\ \hline
MT-MUSIC & \begin{tabular}{l} MT-function optimization: $O((p^{3}+Np^{2})L_{M})$ \\  
 Estimation: $O(p^{3} + p^{2}(N+K_{\Theta}))$ \end{tabular}  \\ \hline
SGN-MUSIC &  $O(p^{3} + p^{2}(N+K_{\Theta}))$  \\ \hline
Tyler-MUSIC &  $O(p^{3} + p^{2}(N+K_{\Theta}) + (p^{3} + Np^{2})L_{T})$  \\ \hline
MLE - Gaussian Noise &  $O\left(pNK_{\Theta}\right)$ \\ \hline
MLE - $K$-distributed Noise &  $O\left(pNK_{\Theta}\right)$ \\ \hline
\end{tabular}
\end{center}
\label{Table4}
\end{table}
\section{Conclusion}
\label{Conclusion}
In this paper a new multivariate estimator, called MT-GQMLE, was developed that applies Gaussian quasi maximum likelihood estimation under a transformed probability distribution of the data. 
We have shown that the MT-GQMLE can gain sensitivity to higher-order statistical moments and resilience to outliers when the MT-function associated with the transform is non-constant and satisfies some mild regularity conditions. After specifying the MT-function in the Gaussian family of functions, the proposed estimator was applied to linear regression and source localization in non-Gaussian noise. Exploration of other classes of MT-functions may result in additional estimators in this family that have different useful properties. 

We emphasize that although the MT-MUSIC estimator, presented in \cite{Todros2} and \cite{Todros3}, and the newly proposed MT-GQMLE, presented in this paper, are derived under the same general framework of probability measure transformation, these estimators are totally different. The MT-MUSIC estimator is designed for source localization and operates by finding angle of arrivals with corresponding array steering vectors that have minimal projections onto the empirical noise subspace, whose spanning vectors are obtained via eigendecomposition of the empirical MT-covariance of the array outputs. The proposed MT-GQMLE operates differently by fitting a Gaussian probability model, characterized by the parametric MT-mean and MT-covariance, to a transformed probability distribution of the data. Unlike the MT-MUSIC estimator the MT-GQMLE is much more general and not solely designated to source localization, as illustrated in the linear regression example in Subsection \ref{GainParam}. Furthermore, unlike the MT-MUSIC estimator, optimization of the MT-function in the MT-GQMLE is based on empirical estimate of the asymptotic MSE which can lead to significant performance advantage. Indeed, as shown in the single source localization example in Subsection \ref{SourceLoc} the MT-QMLE outperforms the MT-MUSIC estimator without significantly increased computational burden. However, we note that in multiple source localization, the computational burden of the MT-GQMLE will be significantly higher as, similarly to the MLE and the GQMLE, it involves maximization over a higher dimensional space. 
\appendix
\subsection{Proof of Theorem \ref{ConsistencyTh}:}
\label{ConsistencyThProof}
Define the deterministic quantity
\begin{equation}
\label{ObjFunDet}
\bar{J}_{u}\left(\thetavec\right)\triangleq-D_{\rm{LD}}\left[{\bSigma}^{\left(u\right)}_{\Xmatsc}\left(\thetaveczero\right)||\bSigma^{\left(u\right)}_{\Xmatsc}\left(\thetavec\right)\right] 
-\left\|{\muvec}^{\left(u\right)}_{\Xmatsc}\left(\thetaveczero\right)-{\muvec}^{\left(u\right)}_{\Xmatsc}\left(\thetavec\right)\right\|^{2}_{{\left(\bSigmasc^{\left(u\right)}_{\xvec}\left(\thetavecsc\right)\right)}^{-1}}.
\end{equation}
According to Lemmas \ref{Lemma1} and  \ref{Lemma2}, stated below, $\bar{J}_{u}\left(\thetavec\right)$ is uniquely maximized at $\thetavec=\thetaveczero$ and the random objective function $J_{u}\left(\thetavec\right)$ defined in (\ref{ObjFun}) converges uniformly w.p. 1 to $\bar{J}_{u}\left(\thetavec\right)$ as $N\rightarrow\infty$. Furthermore, by assumptions {A}-\ref{AS_3} and  {A}-\ref{AS_4} $J_{u}\left(\thetavec\right)$ is continuous over the compact parameter space $\Thetasp$ w.p. 1.
Therefore, by Theorem 3.4 in \cite{White2} we conclude that (\ref{StrongConsistency}) holds.\qed
\begin{Lemma}
\label{Lemma1}
Let $\bar{J}_{u}\left(\thetavec\right)$ be defined by relation (\ref{ObjFunDet}). If conditions A-\ref{AS_1}-A-\ref{AS_4} are satisfied, then $\bar{J}_{u}\left(\thetavec\right)$ is uniquely maximized at $\thetavec=\thetaveczero$.
\end{Lemma}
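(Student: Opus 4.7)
The plan is to recognize that $\bar{J}_u(\thetavec)$ is, up to sign, exactly the Kullback-Leibler divergence between the two complex circular Gaussian probability measures $\Phi^{(u)}_{\Xmatsc;\thetavecsc_{0}}$ and $\Phi^{(u)}_{\Xmatsc;\thetavecsc}$, and then invoke Gibbs' inequality together with the identifiability assumption A-\ref{AS_2}.

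First I would compute, by direct integration against the density (\ref{GaussDens}), the closed-form KL divergence between two complex circular Gaussians with means $\muvec_{0},\muvec$ and non-singular covariance matrices $\bSigma_{0},\bSigma$:
\begin{equation}
\nonumber
D_{\rm{KL}}\bigl[\mathcal{N}_{\Csp}(\muvec_{0},\bSigma_{0})\,\|\,\mathcal{N}_{\Csp}(\muvec,\bSigma)\bigr]
= {\rm{tr}}[\bSigma_{0}\bSigma^{-1}]-\log\det[\bSigma_{0}\bSigma^{-1}] - p + (\muvec_{0}-\muvec)^{H}\bSigma^{-1}(\muvec_{0}-\muvec).
\end{equation}
Comparing with the definitions of $D_{\rm{LD}}[\cdot\|\cdot]$ (below (\ref{ObjFun0})) and of $\bar{J}_u$ in (\ref{ObjFunDet}), this yields the identification
\begin{equation}
\nonumber
\bar{J}_{u}(\thetavec) \;=\; -\,D_{\rm{KL}}\!\left[\Phi^{(u)}_{\Xmatsc;\thetavecsc_{0}}\,\big\|\,\Phi^{(u)}_{\Xmatsc;\thetavecsc}\right],
\end{equation}
which is well defined for every $\thetavec\in\Thetasp$ thanks to the non-singularity assumption A-\ref{AS_3}.

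Next I would apply Gibbs' inequality: the KL divergence is non-negative, and vanishes if and only if the two probability measures coincide. Since a complex circular Gaussian is uniquely parameterized by its mean vector and covariance matrix, $\Phi^{(u)}_{\Xmatsc;\thetavecsc_{0}}=\Phi^{(u)}_{\Xmatsc;\thetavecsc}$ is equivalent to the simultaneous equalities $\muvec^{(u)}_{\Xmatsc}(\thetaveczero)=\muvec^{(u)}_{\Xmatsc}(\thetavec)$ and $\bSigma^{(u)}_{\Xmatsc}(\thetaveczero)=\bSigma^{(u)}_{\Xmatsc}(\thetavec)$. Assumption A-\ref{AS_2} states that this simultaneous equality fails for every $\thetavec\neq\thetaveczero$, whereas at $\thetavec=\thetaveczero$ it trivially holds. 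Consequently $\bar{J}_{u}(\thetaveczero)=0$ and $\bar{J}_{u}(\thetavec)<0$ for all $\thetavec\neq\thetaveczero$, proving that $\thetaveczero$ is the unique global maximizer of $\bar{J}_{u}$ on $\Thetasp$.

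The argument is essentially algebraic once the identification with a KL divergence is made; there is no real obstacle beyond carefully checking the closed-form KL expression for the complex circular Gaussian (which uses $\det^{-1}[\pi\bSigma]$ rather than the real-valued normalization) and invoking assumption A-\ref{AS_3} to guarantee that all inverses and log-determinants appearing in $\bar{J}_{u}$ are finite.
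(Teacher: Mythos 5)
Your proposal is correct and is in substance the same argument as the paper's: the paper shows $\bar{J}_{u}(\thetaveczero)-\bar{J}_{u}(\thetavec)=D_{\rm{LD}}\left[\bSigma^{(u)}_{\Xmatsc}(\thetaveczero)||\bSigma^{(u)}_{\Xmatsc}(\thetavec)\right]+\left\|\muvec^{(u)}_{\Xmatsc}(\thetaveczero)-\muvec^{(u)}_{\Xmatsc}(\thetavec)\right\|^{2}_{(\bSigmasc^{(u)}_{\xvec}(\thetavecsc))^{-1}}$ and uses the non-negativity and equality conditions of each term together with A-\ref{AS_2} and A-\ref{AS_3}, which is exactly what your Gibbs'-inequality step does once you observe that this sum is the Gaussian KL divergence. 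The only cosmetic difference is that the paper also invokes continuity and the extreme-value theorem to assert existence of a maximizer, a step your version renders unnecessary since $\bar{J}_{u}(\thetaveczero)=0$ is attained explicitly.
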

\begin{proof}
By (\ref{ObjFunDet}) and assumptions A-\ref{AS_3} and A-\ref{AS_4} $\bar{J}_{u}\left(\thetavec\right)$ is continuous over the compact set $\Thetasp$. Therefore, by the extreme-value theorem \cite{AdCal} it must have a global maximum in $\Thetasp$. We now show that the global maximum is uniquely attained at $\thetavec=\thetaveczero$. According to (\ref{ObjFunDet})
\begin{equation}
\label{DiffEq}
\bar{J}_{u}\left(\thetaveczero\right)-\bar{J}_{u}\left(\thetavec\right)=D_{\rm{LD}}\left[\bSigma^{\left(u\right)}_{\Xmatsc}\left(\thetaveczero\right)||\bSigma^{\left(u\right)}_{\Xmatsc}\left(\thetavec\right)\right] +
\left\|{\muvec}^{\left(u\right)}_{\Xmatsc}\left(\thetaveczero\right)-{\muvec}^{\left(u\right)}_{\Xmatsc}\left(\thetavec\right)\right\|^{2}_{{\left(\bSigmasc^{\left(u\right)}_{\xvec}\left(\thetavecsc\right)\right)}^{-1}}.
\end{equation}
Notice that the log-determinant divergence between positive-definite matrices $\Amat$, $\Bmat$ satisfies  $D_{\rm{LD}}\left[\Amat||\Bmat\right]\geq{0}$ with equality if and only if $\Amat=\Bmat$ \cite{LogDetDiv}. 
Also notice that the weighted Euclidian norm of a vector $\avec$ with positive-definite weighting matrix $\Cmat$ satisfies $\left\|\avec\right\|_{\Cmat}\geq{0}$ with equality if and only if $\avec=\zerovec$.
Therefore, by assumptions A-\ref{AS_2} and A-\ref{AS_3} we conclude that the difference in (\ref{DiffEq}) is finite and strictly positive for all $\thetavec\neq\thetaveczero$, which implies that
$\bar{J}_{u}\left(\thetavec\right)$ is uniquely maximized at $\thetavec=\thetaveczero$.
\end{proof}
\begin{Lemma}
\label{Lemma2}
Let $J_{u}\left(\thetavec\right)$ and $\bar{J}_{u}\left(\thetavec\right)$ be defined by relations (\ref{ObjFun}) and (\ref{ObjFunDet}), respectively. 
If conditions A-\ref{AS_1} and A-\ref{AS_3}-A-\ref{AS_5} are satisfied, then
\begin{equation}
\label{UniformConv}
\sup\limits_{\thetavecsc\in\Thetaspsc}\left|J_{u}\left(\thetavec\right)-\bar{J}_{u}\left(\thetavec\right)\right|\xrightarrow{\textit{w.p. 1}}{0}\hspace{0.2cm}\text{as $N\rightarrow\infty$}.
\end{equation}
\end{Lemma}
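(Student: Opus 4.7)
\textbf{Proof proposal for Lemma \ref{Lemma2}.}
The plan is to express $J_u(\thetavec)$ and $\bar{J}_u(\thetavec)$ as two evaluations of one deterministic continuous functional of $(\muvec,\bSigma,\thetavec)$, so that the uniform convergence reduces to the (already established) w.p.~$1$ convergence of the empirical MT-mean and MT-covariance together with a compactness/uniform-continuity argument on $\Thetasp$. Concretely, define
\begin{equation*}
\phi(\muvec,\bSigma;\thetavec)\triangleq -D_{\rm{LD}}\!\left[\bSigma\,\big\|\,\bSigma^{(u)}_{\Xmatsc}(\thetavec)\right]-\left\|\muvec-\muvec^{(u)}_{\Xmatsc}(\thetavec)\right\|^{2}_{(\bSigmasc^{(u)}_{\xvec}(\thetavecsc))^{-1}}
\end{equation*}
on $\Csp^{p}\times\mathcal{P}_{+}\times\Thetasp$, where $\mathcal{P}_{+}$ is the open cone of $p\times p$ Hermitian positive definite matrices. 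Then $J_u(\thetavec)=\phi(\hat{\muvec}^{(u)}_{\Xmatsc},\hat{\bSigma}^{(u)}_{\Xmatsc};\thetavec)$ and $\bar{J}_u(\thetavec)=\phi(\muvec^{(u)}_{\Xmatsc}(\thetaveczero),\bSigma^{(u)}_{\Xmatsc}(\thetaveczero);\thetavec)$. Proposition \ref{ConsistentEst}, applied under A-\ref{AS_5}, yields the w.p.~$1$ limits $\hat{\muvec}^{(u)}_{\Xmatsc}\to\muvec^{(u)}_{\Xmatsc}(\thetaveczero)$ and $\hat{\bSigma}^{(u)}_{\Xmatsc}\to\bSigma^{(u)}_{\Xmatsc}(\thetaveczero)$.

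Next I would set up a compact domain on which $\phi$ is jointly continuous. By A-\ref{AS_4}, the maps $\thetavec\mapsto\muvec^{(u)}_{\Xmatsc}(\thetavec)$ and $\thetavec\mapsto\bSigma^{(u)}_{\Xmatsc}(\thetavec)$ are continuous on the compact parameter set $\Thetasp$ (A-\ref{AS_1}), so their images are compact; by A-\ref{AS_3} the covariance image lies in $\mathcal{P}_{+}$, hence its eigenvalues are uniformly bounded between some $\lambda_{0}>0$ and $\Lambda_{0}<\infty$ over $\thetavec\in\Thetasp$. Because $\bSigma^{(u)}_{\Xmatsc}(\thetaveczero)$ is an interior point of $\mathcal{P}_{+}$, I can choose a compact neighborhood $K\subset\Csp^{p}\times\mathcal{P}_{+}$ of $(\muvec^{(u)}_{\Xmatsc}(\thetaveczero),\bSigma^{(u)}_{\Xmatsc}(\thetaveczero))$ whose covariance coordinate stays uniformly positive definite. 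On the compact set $K\times\Thetasp$, $\phi$ is jointly continuous---matrix inversion is continuous on $\mathcal{P}_{+}$, and both $D_{\rm{LD}}[\cdot\|\cdot]$ and the weighted quadratic form are smooth in their arguments wherever defined---and hence uniformly continuous.

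Given $\varepsilon>0$, uniform continuity supplies a $\delta>0$ such that any pair $(\muvec_{1},\bSigma_{1}),(\muvec_{2},\bSigma_{2})\in K$ within Euclidean/Frobenius distance $\delta$ satisfies $\sup_{\thetavecsc\in\Thetaspsc}|\phi(\muvec_{1},\bSigma_{1};\thetavec)-\phi(\muvec_{2},\bSigma_{2};\thetavec)|<\varepsilon$. Combined with the w.p.~$1$ convergence from the first paragraph---which ensures $(\hat{\muvec}^{(u)}_{\Xmatsc},\hat{\bSigma}^{(u)}_{\Xmatsc})\in K$ and within $\delta$ of its limit for all $N$ large enough on a probability-one event---this delivers the claim (\ref{UniformConv}).

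The only nontrivial point is the uniform continuity step: one must arrange that $\phi$ extends as a continuous function to a compact neighborhood $K$ that contains the empirical pair eventually. This is ensured by A-\ref{AS_3} at the limit $\thetaveczero$ together with the openness of $\mathcal{P}_{+}$, which keeps $\hat{\bSigma}^{(u)}_{\Xmatsc}$ safely positive definite on a tail event. No regularity of the MT-function beyond A-\ref{AS_5}, used to invoke Proposition \ref{ConsistentEst}, is needed.
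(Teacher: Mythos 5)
Your proof is correct, but it takes a genuinely different route from the paper's. The paper's argument is computational: it uses the Cauchy--Schwarz inequality and its matrix extension to bound $\sup_{\thetavec\in\Thetasp}|J_{u}(\thetavec)-\bar{J}_{u}(\thetavec)|$ explicitly by a sum of products, each consisting of a term like $\|\hat{\bSigma}^{(u)}_{\Xmatsc}-\bSigma^{(u)}_{\Xmatsc}(\thetaveczero)\|_{\rm Fro}$ or $\|\hat{\muvec}^{(u)}_{\Xmatsc}-\muvec^{(u)}_{\Xmatsc}(\thetaveczero)\|$ (which vanish w.p.~1 by Proposition \ref{ConsistentEst}) multiplied by a supremum such as $\sup_{\thetavec}\|(\bSigma^{(u)}_{\Xmatsc}(\thetavec))^{-1}\|_{\rm Fro}$ (finite by A-\ref{AS_3}, A-\ref{AS_4} and compactness), and then invokes the Mann--Wald theorem to conclude that the bound goes to zero. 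You instead factor both $J_{u}$ and $\bar{J}_{u}$ through a single deterministic functional $\phi(\muvec,\bSigma;\thetavec)$, establish joint continuity on a compact product set $K\times\Thetasp$ (which requires the same ingredients A-\ref{AS_1}, A-\ref{AS_3}, A-\ref{AS_4}), and conclude via uniform continuity; the stochastic input is the same convergence from Proposition \ref{ConsistentEst} under A-\ref{AS_5}, together with the observation that $(\hat{\muvec}^{(u)}_{\Xmatsc},\hat{\bSigma}^{(u)}_{\Xmatsc})$ eventually enters $K$ on a probability-one event. Your approach is cleaner and more modular---it avoids the bookkeeping of explicit matrix inequalities and would generalize to any objective of this compositional form---while the paper's approach yields a concrete quantitative bound on the discrepancy in terms of the estimation errors of the MT-mean and MT-covariance, which can be useful for rate arguments. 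Both proofs are sound and rest on the same hypotheses.
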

\begin{proof}
Using (\ref{ObjFun}),  (\ref{ObjFunDet}), the Cauchy-Schwartz inequality and its the matrix extension \cite{CS} one can verify that 
\begin{eqnarray}
\label{Lemma2Eq}
\nonumber
\sup\limits_{\thetavecsc\in\Thetaspsc}\left|J_{u}\left(\thetavec\right)-\bar{J}_{u}\left(\thetavec\right)\right|
&\leq&
\left(\left\|\bSigma^{\left(u\right)}_{\Xmatsc}\left(\thetaveczero\right)-\hat{\bSigma}^{\left(u\right)}_{\Xmatsc}\right\|_{\rm{Fro}}
+\left\|{\muvec}^{\left(u\right)}_{\Xmatsc}\left(\thetaveczero\right){\muvec}^{\left(u\right)H}_{\Xmatsc}\left(\thetaveczero\right) 
- \hat{\muvec}^{\left(u\right)}_{\Xmatsc}\hat{\muvec}^{\left(u\right)H}_{\Xmatsc} \right\|_{\rm{Fro}}\right)
\\\nonumber&\times&
\sup\limits_{\thetavecsc\in\Thetaspsc}{\left\|\left(\bSigma^{\left(u\right)}_{\Xmatsc}\left(\thetavec\right)\right)^{-1}\right\|_{\rm{Fro}}}
+\left|\log\det\left[\hat{\bSigma}^{\left(u\right)}_{\Xmatsc}\left(\bSigma^{\left(u\right)}_{\Xmatsc}\left(\thetaveczero\right)\right)^{-1}\right]\right|
\\&+&
2\left\|{\muvec}^{\left(u\right)}_{\Xmatsc}\left(\thetaveczero\right)-\hat{\muvec}^{\left(u\right)}_{\Xmatsc} \right\|
\sup\limits_{\thetavecsc\in\Thetaspsc}{\left\|\left(\bSigma^{\left(u\right)}_{\Xmatsc}\left(\thetavec\right)\right)^{-1}{\muvec}^{\left(u\right)}_{\Xmatsc}\left(\thetavec\right)\right\|}
\hspace{0.4cm}\textrm{w.p. 1}.
\end{eqnarray}
Notice that by assumptions  A-\ref{AS_3} and A-\ref{AS_4} and the compactness of $\Thetasp$, the terms
$\sup\limits_{\thetavecsc\in\Thetaspsc}{\|(\bSigma^{\left(u\right)}_{\Xmatsc}\left(\thetavec)\right)^{-1}\|_{\rm{Fro}}}$ and
$\sup\limits_{\thetavecsc\in\Thetaspsc}{\|(\bSigma^{\left(u\right)}_{\Xmatsc}\left(\thetavec\right))^{-1}{\muvec}^{\left(u\right)}_{\Xmatsc}\left(\thetavec\right)\|}$
are finite. Also notice that since $\Xmat_{n}$, $n=1,\ldots,N$ is a sequence of i.i.d. samples from $\pxthetazero$, then by Proposition \ref{ConsistentEst} and assumption A-\ref{AS_5} $\hat{\muvec}^{\left(u\right)}_{\Xmatsc}\xrightarrow{\textit{w.p. 1}}{\muvec}^{\left(u\right)}_{\Xmatsc}\left(\thetaveczero\right)$ and $\hat{\bSigma}^{\left(u\right)}_{\Xmatsc}\xrightarrow{\textit{w.p. 1}}{\bSigma}^{\left(u\right)}_{\Xmatsc}\left(\thetaveczero\right)$ as $N\rightarrow\infty$. Furthermore, note that the operators $\left\|\cdot\right\|_{\rm{Fro}}$, $\left\|\cdot\right\|$ and $\log\det\left[\cdot\right]$ define real continuous mappings of $\hat{\muvec}^{\left(u\right)}_{\Xmatsc}$ and $\hat{\bSigma}^{\left(u\right)}_{\Xmatsc}$. Hence, by the Mann-Wald Theorem \cite{MannWald} the upper bound in (\ref{Lemma2Eq}) converges to zero w.p. 1 as $N\rightarrow\infty$, and therefore, the relation (\ref{UniformConv}) must hold.
\end{proof}
\subsection{Proof of Theorem \ref{ANor}:}  
\label{ANorProof}
By assumptions B-\ref{C1} and B-\ref{C2} the estimator $\hat{\thetavec}_{u}$ is weakly consistent and the true parameter vector $\thetaveczero$ lies in the interior of $\Thetasp$, which is assumed to be compact. Therefore, we conclude that $\hat{\thetavec}_{u}$ lies in an open neighbourhood $\mathcal{U}\subset\Thetasp$ of $\thetaveczero$ with sufficiently high probability as $N$ gets large, i.e., it does not lie on the boundary of $\Thetasp$. Hence, $\hat{\thetavec}_{u}$ is a maximum point of the objective function $J_{u}\left(\thetavec\right)$ (\ref{ObjFun}) whose gradient satisfies
 \begin{equation}
\label{Grad}
\nabla_{\thetavecsc}{J}_{u}\left(\hat{\thetavec}_{u}\right)\sum\limits_{n=1}^{N}u\left(\Xmat_{n}\right)=\sum\limits_{n=1}^{N}u\left(\Xmat_{n}\right)\psivec_{u}\left(\Xmat_{n};\hat{\thetavec}_{u}\right)=\zerovec.
\end{equation}
By (\ref{GaussDens}) and assumption B-\ref{C3}, the vector function $\psivec_{u}\left(\Xmat;\thetavec\right)$ defined in (\ref{PsiDef}) is continuous over $\Thetasp$ w.p. 1. Using (\ref{Grad}) and the mean-value theorem \cite{AdCalMult} applied to each entry of $\psivec_{u}\left(\Xmat;\thetavec\right)$ we conclude that
\begin{equation}
\label{GradConc}
\zerovec=\frac{1}{\sqrt{N}}\sum\limits_{n=1}^{N}u\left(\Xmat_{n}\right)\psivec_{u}\left(\Xmat_{n};\hat{\thetavec}_{u}\right)=\frac{1}{\sqrt{N}}\sum\limits_{n=1}^{N}u\left(\Xmat_{n}\right)\psivec_{u}\left(\Xmat_{n};\thetaveczero\right)
-\hat{\Fmat}_{u}\left(\thetavec^{{*}}\right)\sqrt{N}\left(\hat{\thetavec}_{u}-\thetaveczero\right),
\end{equation}
where $\hat{\Fmat}_{u}\left(\thetavec\right)$ defined in (\ref{Fhat}) is an unbiased estimator of the symmetric matrix function $\Fmat_{u}\left(\thetavec\right)$ (\ref{FDef}), $\Gammamat_{u}\left(\Xmat;\thetavec\right)$ is defined in (\ref{GammaDef}) and $\thetavec^{*}$ lies on the line segment connecting $\hat{\thetavec}_{u}$ and $\thetaveczero$. 

Since $\hat{\thetavec}_{u}$ is weakly consistent $\thetavec^{*}$ must be weakly consistent as well.
Therefore, by Lemma \ref{WeakCons}, stated below, $\hat{\Fmat}_{u}\left(\thetavec^{*}\right)\xrightarrow{P}{\Fmat}_{u}\left(\thetaveczero\right)$ as $N\rightarrow\infty$, where $\Fmat_{u}\left(\thetavec\right)$ is non-singular at $\thetavec=\thetaveczero$ by assumption. Hence, by the Mann-Wald theorem \cite{MannWald}
\begin{equation}
\label{FinvCon}
\hat{\Fmat}^{-1}_{u}\left(\thetavec^{*}\right)\xrightarrow{P}{\Fmat}^{-1}_{u}\left(\thetaveczero\right)\hspace{0.2cm}\text{as $N\rightarrow\infty$},
\end{equation}
which implies that $\hat{\Fmat}_{u}\left(\thetavec^{*}\right)$ is invertible with sufficiently high probability as $N$ gets large. Therefore, by (\ref{GradConc}) the equality
\begin{equation}
\sqrt{N}\left(\hat{\thetavec}_{u}-\thetaveczero\right)=\hat{\Fmat}^{-1}_{u}\left(\thetavec^{*}\right)\frac{1}{\sqrt{N}}\sum\limits_{n=1}^{N}u\left(\Xmat_{n}\right)\psivec_{u}\left(\Xmat_{n};\thetaveczero\right)
\end{equation}
holds with sufficiently high probability as $N$ gets large. Furthermore, by Lemma \ref{GaussLem} stated below
\begin{equation}
\label{PsiSumDist}
\frac{1}{\sqrt{N}}\sum\limits_{n=1}^{N}u\left(\Xmat_{n}\right)\psivec_{u}\left(\Xmat_{n};\thetaveczero\right)\xrightarrow{D}\mathcal{N}\left(\zerovec,\Gmat_{u}\left(\thetaveczero\right)\right)\hspace{0.2cm}\text{as $N\rightarrow\infty$},
\end{equation}
where $\Gmat_{u}\left(\thetavec\right)$ is defined in (\ref{GDef}).  
Thus, by (\ref{FinvCon})-(\ref{PsiSumDist}) and Slutsky's theorem \cite{MeasureTheory} the relation (\ref{Asymp}) holds.
\begin{Lemma}
\label{WeakCons}
Let $\Fmat_{u}\left(\thetavec\right)$ and $\hat{\Fmat}_{u}\left(\thetavec\right)$ be the matrix functions defined in (\ref{FDef}) and (\ref{Fhat}), respectively.
Furthermore, let $\thetavec^{*}$ denote an estimator of $\thetaveczero$.
If $\thetavec^{*}\xrightarrow{P}\thetaveczero$ as $N\rightarrow\infty$ and conditions B-\ref{C3} and B-\ref{C4} are satisfied, then $\hat{\Fmat}_{u}\left(\thetavec^{*}\right)\xrightarrow{P}{\Fmat}_{u}\left(\thetaveczero\right)$ as $N\rightarrow\infty$.
\end{Lemma}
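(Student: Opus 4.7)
\textbf{Proof sketch proposal for Lemma \ref{WeakCons}.}
The plan is to split, by the triangle inequality,
\begin{equation*}
\left\|\hat{\Fmat}_{u}\left(\thetavec^{*}\right)-\Fmat_{u}\left(\thetaveczero\right)\right\|
\;\leq\;
\sup_{\thetavecsc\in\Thetaspsc}\left\|\hat{\Fmat}_{u}\left(\thetavec\right)-\Fmat_{u}\left(\thetavec\right)\right\|
\;+\;
\left\|\Fmat_{u}\left(\thetavec^{*}\right)-\Fmat_{u}\left(\thetaveczero\right)\right\|,
\end{equation*}
and to show that each term vanishes in probability as $N\to\infty$. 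This is the standard ``stochastic equicontinuity plus continuity'' reduction used in M-estimator theory.

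For the second (deterministic) term I would first establish that $\thetavec\mapsto\Fmat_{u}\left(\thetavec\right)$ is continuous on $\Thetasp$ by dominated convergence. By B-\ref{C3}, $\muvec^{\left(u\right)}_{\Xmatsc}\left(\thetavec\right)$ and $\bSigma^{\left(u\right)}_{\Xmatsc}\left(\thetavec\right)$ are twice continuously differentiable on the compact set $\Thetasp$, and $\bSigma^{\left(u\right)}_{\Xmatsc}\left(\thetavec\right)$ is non-singular there, so each entry of the Hessian $\Gammamat_{u}\left(\Xmat;\thetavec\right)=\nabla^{2}_{\thetavecsc}\log\phi^{(u)}\left(\Xmat;\thetavec\right)$ is continuous in $\thetavec$ and can be bounded by a polynomial in $\Xmat$ with $\thetavec$-uniform coefficients. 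Concretely, an envelope of the form $C\left(1+\left\|\Xmat\right\|^{2}\right)u\left(\Xmat\right)$ dominates $u\left(\Xmat\right)\Gammamat_{u}\left(\Xmat;\thetavec\right)$ uniformly in $\thetavec\in\Thetasp$, and by B-\ref{C4} together with the Cauchy-Schwarz inequality under $\pxthetazero$ this envelope is $\pxthetazero$-integrable. Continuity of $\Fmat_{u}\left(\cdot\right)$ then follows, and combining it with $\thetavec^{*}\xrightarrow{P}\thetaveczero$ and the Mann-Wald theorem yields $\Fmat_{u}\left(\thetavec^{*}\right)\xrightarrow{P}\Fmat_{u}\left(\thetaveczero\right)$.

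For the first (stochastic) term I would invoke a uniform strong law of large numbers for i.i.d.\ samples indexed by a compact parameter set: its hypotheses are (i) compactness of $\Thetasp$ (holds by B-\ref{C2}), (ii) a.s.\ continuity of $\thetavec\mapsto u\left(\Xmat\right)\Gammamat_{u}\left(\Xmat;\thetavec\right)$ (from B-\ref{C3}), and (iii) a $\thetavec$-uniform integrable envelope (the one constructed above). Applied to $\hat{\Fmat}_{u}\left(\thetavec\right)=-N^{-1}\sum_{n=1}^{N}u\left(\Xmat_{n}\right)\Gammamat_{u}\left(\Xmat_{n};\thetavec\right)$, this gives
\begin{equation*}
\sup_{\thetavecsc\in\Thetaspsc}\left\|\hat{\Fmat}_{u}\left(\thetavec\right)-\Fmat_{u}\left(\thetavec\right)\right\|\xrightarrow{\textit{w.p. 1}}0,
\end{equation*}
hence the same convergence in probability. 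Combining both bounds in the triangle inequality then yields the claim.

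The main obstacle is the careful construction of the envelope. The Hessian $\Gammamat_{u}$ produced by twice differentiating $\log\phi^{(u)}\left(\Xmat;\thetavec\right)$ contains a variety of terms: first and second derivatives of $\log\det\bSigma^{\left(u\right)}_{\Xmatsc}\left(\thetavec\right)$, and mixed products of $\left(\Xmat-\muvec^{\left(u\right)}_{\Xmatsc}\left(\thetavec\right)\right)$ with derivatives of $\left(\bSigma^{\left(u\right)}_{\Xmatsc}\left(\thetavec\right)\right)^{-1}$ and of $\muvec^{\left(u\right)}_{\Xmatsc}\left(\thetavec\right)$. One must verify, using continuity on the compact $\Thetasp$ together with non-singularity of $\bSigma^{\left(u\right)}_{\Xmatsc}$, that all $\thetavec$-dependent factors are uniformly bounded, leaving an $\Xmat$-dependence that is at most quadratic in $\left\|\Xmat\right\|$; the product with $u\left(\Xmat\right)$ is then integrable by B-\ref{C4} and Cauchy-Schwarz, supplying the dominating function required for both dominated convergence and the ULLN.
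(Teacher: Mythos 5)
Your proposal is correct and follows essentially the same route as the paper: the paper also splits via the triangle inequality into $\|\hat{\Fmat}_{u}(\thetavec^{*})-\Fmat_{u}(\thetavec^{*})\|+\|\Fmat_{u}(\thetavec^{*})-\Fmat_{u}(\thetaveczero)\|$, handles the second term by continuity of $\Fmat_{u}(\cdot)$ plus the Mann--Wald theorem, and handles the first by bounding it with the supremum over the compact $\Thetasp$ and invoking a uniform (weak) law of large numbers with exactly the kind of polynomial-times-$u(\Xmat)$ envelope you construct, integrable by B-\ref{C4}. The only cosmetic difference is that you invoke a uniform strong LLN where the paper cites the uniform weak LLN, which is immaterial since only convergence in probability is claimed.
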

\begin{proof}
By the triangle inequality
\begin{equation}
\|\hat{\Fmat}_{u}\left(\thetavec^{*}\right)-{\Fmat}_{u}\left(\thetaveczero\right)\|\leq 
\|\hat{\Fmat}_{u}\left(\thetavec^{*}\right)-{\Fmat}_{u}\left(\thetavec^{*}\right)\|+
\left\|{\Fmat}_{u}\left(\thetavec^{*}\right)-{\Fmat}_{u}\left(\thetaveczero\right)\right\|.
\end{equation}
Therefore, it suffices to prove that 
\begin{equation}
\|\hat{\Fmat}_{u}\left(\thetavec^{*}\right)-{\Fmat}_{u}\left(\thetavec^{*}\right)\|\xrightarrow{P}0\hspace{0.2cm}\text{as $N\rightarrow\infty$}
\end{equation}
and
\begin{equation}
\label{Conv2}
\left\|{\Fmat}_{u}\left(\thetavec^{*}\right)-{\Fmat}_{u}\left(\thetaveczero\right)\right\|\xrightarrow{P}0\hspace{0.2cm}\text{as $N\rightarrow\infty$}.
\end{equation}
By Lemma \ref{FCont}, stated below, $\Fmat_{u}\left(\thetavec\right)$ is continuous in $\Thetasp$. Therefore, since $\thetavec^{*}\xrightarrow{P}\thetaveczero$ as $N\rightarrow\infty$, by the Mann-Wald Theorem \cite{MannWald} we conclude that (\ref{Conv2}) holds. We further conclude by Lemma  \ref{FCont} that 
$\|\hat{\Fmat}_{u}\left(\thetavec^{*}\right)-{\Fmat}_{u}\left(\thetavec^{*}\right)\|\leq\sup\limits_{\thetavecsc\in{\Thetaspsc}}\left\|\hat{\Fmat}_{u}\left(\thetavec\right)-{\Fmat}_{u}\left(\thetavec\right)\right\|\xrightarrow{P}0,\hspace{0.2cm}\text{as $N\rightarrow\infty$}$.
\end{proof}
\begin{Lemma}
\label{FCont}
Let $\Fmat_{u}\left(\thetavec\right)$ and $\hat{\Fmat}_{u}\left(\thetavec\right)$ be the matrix functions defined in (\ref{FDef}) and (\ref{Fhat}), respectively.
If conditions B-\ref{C3} and B-\ref{C4} are satisfied, then $\Fmat_{u}\left(\thetavec\right)$ is continuous in $\Thetasp$ and $\sup\limits_{\thetavecsc\in\Thetaspsc}\left\|\hat{\Fmat}_{u}\left(\thetavec\right)-{\Fmat}_{u}\left(\thetavec\right)\right\|\xrightarrow{P}{0}$ as $N\rightarrow\infty$.
\end{Lemma}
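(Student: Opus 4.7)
My plan is to verify the hypotheses of a standard uniform strong law of large numbers for i.i.d.\ sequences (e.g.\ Jennrich's theorem): because $\Thetasp$ is compact (B-\ref{C3}), it suffices to show that the integrand $u(\xvec)\Gammamat_{u}(\xvec;\thetavec)$ is continuous in $\thetavec$ for every fixed $\xvec$ and is majorized in absolute value, uniformly over $\thetavec\in\Thetasp$, by some function $M(\xvec)\in\mathcal{L}_{1}(\mathcal{X},\pxthetazero)$. Granted these two facts, continuity of $\Fmat_{u}(\thetavec)$ will be immediate from dominated convergence, while the uniform convergence $\sup_{\thetavecsc\in\Thetaspsc}\|\hat{\Fmat}_{u}(\thetavec)-\Fmat_{u}(\thetavec)\|\xrightarrow{\textit{w.p. 1}}0$ (which implies convergence in probability) is exactly the conclusion of the ULLN.

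\textbf{Continuity of the integrand and construction of the envelope $M$.} Starting from $\log\phi^{(u)}(\xvec;\thetavec)=-\log\det[\pi\bSigma^{(u)}_{\Xmatsc}(\thetavec)]-(\xvec-\muvec^{(u)}_{\Xmatsc}(\thetavec))^{H}(\bSigma^{(u)}_{\Xmatsc}(\thetavec))^{-1}(\xvec-\muvec^{(u)}_{\Xmatsc}(\thetavec))$ and invoking B-\ref{C3} together with the standing invertibility of $\bSigma^{(u)}_{\Xmatsc}(\thetavec)$ on $\Thetasp$ (without which $\phi^{(u)}$ is not defined), the matrix $\Gammamat_{u}(\xvec;\thetavec)=\nabla^{2}_{\thetavecsc}\log\phi^{(u)}(\xvec;\thetavec)$ exists and is continuous in $\thetavec$. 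An explicit expansion shows that each entry $[\Gammamat_{u}(\xvec;\thetavec)]_{ij}$ is a polynomial of degree at most two in $\xvec$, with coefficients that are continuous functions of $\thetavec$ built out of $\muvec^{(u)}_{\Xmatsc}(\thetavec)$, $\bSigma^{(u)}_{\Xmatsc}(\thetavec)$, their inverses and their first and second partial derivatives. Setting $C_{k}\triangleq\sup_{\thetavecsc\in\Thetaspsc}|c_{k}(\thetavec)|<\infty$ for $k=0,1,2$ — finite by continuity on the compact set $\Thetasp$ — one gets the uniform envelope $|u(\xvec)[\Gammamat_{u}(\xvec;\thetavec)]_{ij}|\leq M(\xvec)\triangleq u(\xvec)\bigl(C_{0}+C_{1}\|\xvec\|+C_{2}\|\xvec\|^{2}\bigr)$.

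\textbf{Integrability of $M$ under $\pxthetazero$.} This is where assumption B-\ref{C4} enters. By Cauchy-Schwarz, ${\rm{E}}[u(\Xmat)\|\Xmat\|^{k};\pxthetazero]\leq\sqrt{{\rm{E}}[u^{2}(\Xmat)\|\Xmat\|^{2k};\pxthetazero]}$ for $k=0,1,2$. For $k=0$ and $k=2$ the right-hand side is finite directly by B-\ref{C4}; for $k=1$ the elementary bound $\|\Xmat\|^{2}\leq 1+\|\Xmat\|^{4}$ yields ${\rm{E}}[u^{2}(\Xmat)\|\Xmat\|^{2};\pxthetazero]\leq {\rm{E}}[u^{2}(\Xmat);\pxthetazero]+{\rm{E}}[u^{2}(\Xmat)\|\Xmat\|^{4};\pxthetazero]<\infty$, again by B-\ref{C4}. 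Hence $M\in\mathcal{L}_{1}(\mathcal{X},\pxthetazero)$, and both conclusions of the lemma follow.

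\textbf{Main obstacle.} The one non-routine step is producing the envelope $M$: this requires writing out the second partial derivatives of $\log\phi^{(u)}$ cleanly enough to confirm that their $\xvec$-dependence is of order no higher than two, and then matching the resulting growth rates $\{1,\|\xvec\|,\|\xvec\|^{2}\}$ against the mixed moment assumption B-\ref{C4} via the Cauchy-Schwarz trick above. Once $M$ is in place, continuity of $\Fmat_{u}$ via dominated convergence and uniform convergence via Jennrich's ULLN are routine.
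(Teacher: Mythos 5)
Your proposal is correct and follows essentially the same route as the paper's own proof: continuity of $\Gammamat_{u}\left(\Xmat;\thetavec\right)$ in $\thetavec$, a dominating envelope of the form $u\left(\xvec\right)\left(C_{0}+C_{1}\left\|\xvec\right\|+C_{2}\left\|\xvec\right\|^{2}\right)$ whose integrability follows from B-4 via Cauchy--Schwarz, and then a uniform law of large numbers over the compact set $\Thetasp$ (the paper cites the uniform weak LLN of Newey--McFadden where you invoke Jennrich's strong version, but either suffices for the stated convergence in probability). Your explicit Cauchy--Schwarz reduction of ${\rm{E}}\left[u\left(\Xmat\right)\left\|\Xmat\right\|^{k};\pxthetazero\right]$ to the second-moment conditions in B-4 is a detail the paper leaves implicit, and it is handled correctly.
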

\begin{proof}
Notice that the samples $\Xmat_{n}$, $n=1,\ldots,N$ are i.i.d., the parameter space $\Thetasp$ is compact and by  (\ref{GaussDens}) and assumption B-\ref{C3} the matrix function $\Gammamat_{u}\left(\Xmat;\thetavec\right)$ defined in (\ref{GammaDef}) is continuous in $\Thetasp$ w.p. 1. Furthermore, using (\ref{GaussDens}), (\ref{GammaDef}), the triangle inequality, the Cauchy-Schwartz inequality, assumption B-\ref{C3}, and the compactness of $\Thetasp$ it can be shown that there exists a positive scalar $C$ such that $\left|\left[\Gammamat_{u}\left(\Xmat;\thetavec\right)\right]_{k,j}\right|\leq\left(1+\left\|\Xmat\right\|+\left\|\Xmat\right\|^{2}\right)C$ w.p. 1 $\forall{k,j}=1,\ldots,p$. Therefore, by (\ref{FDef}), (\ref{Fhat}), assumption B-\ref{C4} and the uniform weak law of large numbers \cite{newey} we conclude that $\Fmat_{u}\left(\thetavec\right)$ is continuous in $\Thetasp$ and $\sup\limits_{\thetavecsc\in\Thetaspsc}\left\|\hat{\Fmat}_{u}\left(\thetavec\right)-{\Fmat}_{u}\left(\thetavec\right)\right\|\xrightarrow{P}{0}$ as $N\rightarrow\infty$.
\end{proof}
\begin{Lemma}
\label{GaussLem}
Given a sequence $\Xmat_{n}$, $n=1,\ldots,N$ of i.i.d. samples from $\pxthetazero$ 
$$\frac{1}{\sqrt{N}}\sum\limits_{n=1}^{N}u\left(\Xmat_{n}\right)\psivec_{u}\left(\Xmat_{n};\thetaveczero\right)\xrightarrow{D}\mathcal{N}\left(\zerovec,\Gmat_{u}\left(\thetaveczero\right)\right)\hspace{0.2cm}{\rm{as}}\hspace{0.2cm}N\rightarrow\infty,$$ where $\Gmat_{u}\left(\thetavec\right)$  and $\psivec_{u}\left(\Xmat;\thetavec\right)$ are defined in (\ref{GDef}) and (\ref{PsiDef}), respectively. 
\end{Lemma}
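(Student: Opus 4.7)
The plan is to apply the multivariate Lindeberg--L\'evy central limit theorem to the i.i.d.\ sequence $\Ymat_{n}\triangleq u(\Xmat_{n})\psivec_{u}(\Xmat_{n};\thetaveczero)$, $n=1,\ldots,N$. Three ingredients are required: (i) ${\rm{E}}[\Ymat_{n};\pxthetazero]=\zerovec$, (ii) ${\rm{Cov}}[\Ymat_{n};\pxthetazero]=\Gmat_{u}(\thetaveczero)$, and (iii) ${\rm{E}}[\|\Ymat_{n}\|^{2};\pxthetazero]<\infty$. Step (ii) is immediate from the definition (\ref{GDef}).

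For step (i), I would invoke the Radon--Nikodym identity (\ref{MeasureTransformRadNik}), rewritten as $u(\xvec)=\varphi_{u}(\xvec;\thetaveczero)\cdot{\rm{E}}[u(\Xmat);\pxthetazero]$, to obtain
\[
{\rm{E}}[u(\Xmat)\psivec_{u}(\Xmat;\thetaveczero);\pxthetazero]
={\rm{E}}[u(\Xmat);\pxthetazero]\cdot{\rm{E}}[\psivec_{u}(\Xmat;\thetaveczero);Q^{(u)}_{\Xmatsc;\thetaveczerosc}].
\]
It thus suffices to show that the expected Gaussian score vanishes under the transformed measure $Q^{(u)}_{\Xmatsc;\thetaveczerosc}$. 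Writing $\muvec\triangleq\muvec^{(u)}_{\Xmatsc}(\thetaveczero)$ and $\bSigma\triangleq\bSigma^{(u)}_{\Xmatsc}(\thetaveczero)$, a direct differentiation of $\log\phi^{(u)}$ in (\ref{GaussDens}) produces, for each $\theta_{k}$, a trace $-{\rm{tr}}[\bSigma^{-1}\partial_{k}\bSigma]$, an affine term in $\Xmat-\muvec$, and a quadratic term $(\Xmat-\muvec)^{H}\bSigma^{-1}(\partial_{k}\bSigma)\bSigma^{-1}(\Xmat-\muvec)$. By (\ref{MTMean})--(\ref{MTCovZ}), $\Xmat$ has mean $\muvec$ and covariance $\bSigma$ under $Q^{(u)}_{\Xmatsc;\thetaveczerosc}$, so the affine term has zero expectation and the quadratic term has expectation ${\rm{tr}}[\bSigma^{-1}\partial_{k}\bSigma]$, which exactly cancels the trace. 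The crucial observation is that the score-vanishing identity holds \emph{not} because the data are Gaussian, but because $Q^{(u)}_{\Xmatsc;\thetaveczerosc}$ and $\Phi^{(u)}_{\Xmatsc;\thetaveczerosc}$ agree in their first two moments.

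For step (iii), the same decomposition shows each entry of $\psivec_{u}(\cdot;\thetaveczero)$ is at most quadratic in $\Xmat$ with coefficients that are bounded under the smoothness and nonsingularity hypotheses of Theorem \ref{ANor}. Thus $u^{2}(\Xmat)\|\psivec_{u}(\Xmat;\thetaveczero)\|^{2}$ is dominated by a linear combination of $u^{2}(\Xmat)$, $\|\Xmat\|^{2}u^{2}(\Xmat)$, and $\|\Xmat\|^{4}u^{2}(\Xmat)$, all integrable under $\pxthetazero$ by assumption B-\ref{C4} (the middle term is handled by Cauchy--Schwartz). The Lindeberg--L\'evy CLT then yields the claimed convergence in distribution.

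The principal technical hurdle is step (i): the score-expectation identity cannot be obtained by direct appeal to the Bartlett identity, because $\Xmat$ is not actually distributed according to the hypothesized Gaussian $\Phi^{(u)}_{\Xmatsc;\thetaveczerosc}$. The argument must instead route through $Q^{(u)}_{\Xmatsc;\thetaveczerosc}$ and rely explicitly on the moment-matching built into the MT-mean and MT-covariance (\ref{MTMean})--(\ref{MTCovZ}). A small additional care point is that $\thetavec$ is real while $\Xmat$ is complex, so the differentiation of the quadratic form in (\ref{GaussDens}) must be handled with the real/imaginary parts of $\muvec$ treated consistently, which is where the factor of $2{\rm{Re}}\{\cdot\}$ in the affine term arises (and which nonetheless has zero expectation under $Q^{(u)}_{\Xmatsc;\thetaveczerosc}$).
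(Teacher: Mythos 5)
Your proposal is correct and follows essentially the same route as the paper: the paper's proof also reduces the zero-mean claim to the vanishing of the Gaussian score under the transformed measure via the factorization $u(\xvec)=\varphi_{u}(\xvec;\thetaveczero){\rm{E}}[u(\Xmat);\pxthetazero]$ and the moment-matching of (\ref{MTMean})--(\ref{MTCovZ}) (its Lemma \ref{PsiuExp}), bounds the entries of $\psivec_{u}\psivec_{u}^{T}$ by a polynomial in $\|\Xmat\|$ of degree four weighted by $u^{2}$ so that B-\ref{C4} gives finiteness of $\Gmat_{u}(\thetaveczero)$ (its Lemma \ref{GFinit}), and then invokes the multivariate i.i.d.\ CLT. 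The only cosmetic difference is that the paper packages the moment bound inside a uniform-convergence lemma over the compact parameter space, whereas you state it pointwise at $\thetaveczero$, which is all the CLT requires.
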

\begin{proof}
Since $\Xmat_{n}$, $n=1,\ldots,N$ are i.i.d. random vectors and the functions $u\left(\cdot\right)$ and $\psivec_{u}\left(\cdot,\cdot\right)$ are real the products $u\left(\Xmat_{n}\right)\psivec_{u}\left(\Xmat_{n},\thetaveczero\right)$  $n=1,\ldots,N$ are real and i.i.d. 
Furthermore, by Lemmas \ref{PsiuExp} and \ref{GFinit}, stated below, the definition of $\Gmat_{u}\left(\thetavec\right)$ in (\ref{GDef}) and the compactness of $\Thetasp$ we have that ${\rm{E}}\left[u\left(\Xmat\right)\psivec_{u}\left(\Xmat;\thetaveczero\right);\pxthetazero\right]=\zerovec$ and $\textrm{cov}\left[u\left(\Xmat\right)\psivec_{u}\left(\Xmat;\thetaveczero\right);\pxthetazero\right]=\Gmat_{u}\left(\thetaveczero\right)$ is finite. Therefore by the central limit theorem \cite{Serfling} we conclude that 
$\frac{1}{\sqrt{N}}\sum\limits_{n=1}^{N}u\left(\Xmat_{n}\right)\psivec_{u}\left(\Xmat_{n};\thetaveczero\right)$ is asymptotically normal with zero mean and covariance $\Gmat_{u}\left(\thetaveczero\right)$.
\end{proof}
\begin{Lemma}
\label{PsiuExp}  
The random vector function $\psivec_{u}\left(\Xmat;\thetavec\right)$ defined in (\ref{PsiDef}) satisfies
\begin{equation}
\label{EPsiU}
{\rm{E}}\left[u\left(\Xmat\right)\psivec_{u}\left(\Xmat;\thetavec\right);\pxtheta\right]=\zerovec.
\end{equation}
\end{Lemma}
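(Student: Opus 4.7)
\textbf{Proof plan for Lemma \ref{PsiuExp}.}
The plan is to reduce the expectation under $P_{\Xmatsc;\thetavecsc}$ (weighted by $u$) to an expectation under the transformed measure $Q^{(u)}_{\Xmatsc;\thetavecsc}$, and then invoke the standard Gaussian score identity by exploiting the fact that $\psivec_{u}(\Xmat;\thetavec)$ is a quadratic function of $\Xmat$.

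First, I would use the Radon--Nikodym derivative (\ref{MeasureTransformRadNik}) together with (\ref{VarPhiDef}) to write
\begin{equation}
\label{eq:plan1}
{\rm{E}}\left[u\left(\Xmat\right)\psivec_{u}\left(\Xmat;\thetavec\right);\pxtheta\right]
= {\rm{E}}\left[u\left(\Xmat\right);\pxtheta\right]\cdot
{\rm{E}}\left[\psivec_{u}\left(\Xmat;\thetavec\right);\qxtheta\right],
\end{equation}
where condition (\ref{Cond}) guarantees that the leading factor is finite and non-zero, so proving the lemma reduces to showing that the second factor vanishes.

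Next, I would examine the explicit form of $\psivec_{u}(\xvec;\thetavec)=\nabla_{\thetavecsc}\log\phi^{(u)}(\xvec;\thetavec)$ obtained from (\ref{GaussDens}). Differentiating the log-density with respect to each $\theta_{k}$ produces a sum of a $\thetavec$-dependent constant (from $\log\det[\pi\bSigma^{(u)}_{\Xmatsc}(\thetavec)]$), a term linear in $\xvec-\muvec^{(u)}_{\Xmatsc}(\thetavec)$ (from differentiating $\muvec^{(u)}_{\Xmatsc}(\thetavec)$), and a term quadratic in $\xvec-\muvec^{(u)}_{\Xmatsc}(\thetavec)$ (from differentiating $(\bSigma^{(u)}_{\Xmatsc}(\thetavec))^{-1}$). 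In particular, $\psivec_{u}(\xvec;\thetavec)$ is a polynomial of degree at most two in $\xvec$, so its expectation under any probability measure depends only on the first moment and the second central moment of that measure.

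By construction (Section \ref{MT_MEAN_COV}), the measures $\qxtheta$ and $\Phi^{(u)}_{\Xmatsc;\thetavecsc}$ share the same mean $\muvec^{(u)}_{\Xmatsc}(\thetavec)$ and the same covariance $\bSigma^{(u)}_{\Xmatsc}(\thetavec)$. Consequently
\begin{equation}
\label{eq:plan2}
{\rm{E}}\left[\psivec_{u}\left(\Xmat;\thetavec\right);\qxtheta\right]
= {\rm{E}}\left[\psivec_{u}\left(\Xmat;\thetavec\right);\Phi^{(u)}_{\Xmatsc;\thetavecsc}\right].
\end{equation}
Finally, since $\phi^{(u)}(\xvec;\thetavec)$ is a bona fide Gaussian density and regularity is immediate in the Gaussian case (the integrand is dominated uniformly in a neighborhood of $\thetavec$), the classical score identity yields ${\rm{E}}[\psivec_{u}(\Xmat;\thetavec);\Phi^{(u)}_{\Xmatsc;\thetavecsc}]=\zerovec$. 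Combining this with (\ref{eq:plan1}) and (\ref{eq:plan2}) delivers (\ref{EPsiU}).

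The only delicate point is the observation underlying (\ref{eq:plan2}): that matching the first two moments of $\qxtheta$ and $\Phi^{(u)}_{\Xmatsc;\thetavecsc}$ suffices, which relies on $\psivec_{u}$ being at most quadratic in $\xvec$. Everything else is bookkeeping via (\ref{VarPhiDef})--(\ref{MeasureTransformRadNik}) and standard Gaussian calculus.
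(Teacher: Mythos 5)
Your proposal is correct and follows essentially the same route as the paper: both factor out ${\rm{E}}\left[u\left(\Xmat\right);\pxtheta\right]$ to reduce the claim to the vanishing of ${\rm{E}}\left[\psivec_{u}\left(\Xmat;\thetavec\right);\qxtheta\right]$, and both rest on the fact that the mean and covariance of $\qxtheta$ are exactly the parameters $\muvec^{(u)}_{\Xmatsc}\left(\thetavec\right)$ and $\bSigma^{(u)}_{\Xmatsc}\left(\thetavec\right)$ of the hypothesized Gaussian. The only (cosmetic) difference is that the paper finishes by writing out the gradient (\ref{PsivecExp}) explicitly and cancelling the trace, linear, and quadratic terms directly, whereas you package that same computation as moment matching with $\Phi^{(u)}_{\Xmatsc;\thetavecsc}$ followed by the Gaussian score identity.
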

\begin{proof}
Notice that 
\begin{equation}
\label{EUQ}
\textrm{E}\left[u\left(\Xmat\right)\psivec_{u}\left(\Xmat;\thetavec\right);\pxtheta\right]=\textrm{E}\left[\psivec_{u}\left(\Xmat;\thetavec\right)\varphi_{u}\left(\Xmat;\thetavec\right);\pxtheta\right]{\textrm{E}\left[u\left(\Xmat\right);\pxtheta\right]}, 
\end{equation}
where $\varphi_{u}\left(\cdot;\cdot\right)$ is defined in (\ref{VarPhiDef}) and the expectation ${\textrm{E}}\left[u\left(\Xmat\right);\pxtheta\right]$ is finite by assumption (\ref{Cond}). 
Also notice that by (\ref{GaussDens}) and (\ref{PsiDef}) the $k$-th entry of the vector function $\psivec_{u}\left(\Xmat;\thetavec\right)$ is given by:
\begin{eqnarray}
\label{PsivecExp} 
\left[\psivec_{u}\left(\Xmat;\thetavec\right)\right]_{k}&\triangleq&\frac{\partial\log\phi^{(u)}\left(\Xmat;\thetavec\right)}{\partial\theta_{k}}
=
-{\rm{tr}}\left[\left(\bSigma^{(u)}_{\Xmatsc}\left(\thetavec\right)\right)^{-1}\frac{\partial\bSigma^{(u)}_{\Xmatsc}\left(\thetavec\right)}{\partial\theta_{k}}\right]
\\\nonumber
&+&2{\rm{Re}}\left\{\left(\Xmat-\muvec^{\left(u\right)}_{\Xmatsc}\left(\thetavec\right)\right)^{H}\left(\bSigma^{(u)}_{\Xmatsc}\left(\thetavec\right)\right)^{-1}
\frac{\partial\muvec^{\left(u\right)}_{\Xmatsc}\left(\thetavec\right)}{\partial\theta_{k}}\right\}
\\\nonumber
&+&
{\rm{tr}}\left[\left(\bSigma^{(u)}_{\Xmatsc}\left(\thetavec\right)\right)^{-1}
\frac{\partial\bSigma^{(u)}_{\Xmatsc}\left(\thetavec\right)}{\partial\theta_{k}}
\left(\bSigma^{(u)}_{\Xmatsc}\left(\thetavec\right)\right)^{-1}\left(\Xmat-\muvec^{\left(u\right)}_{\Xmatsc}\left(\thetavec\right)\right)
\left(\Xmat-\muvec^{\left(u\right)}_{\Xmatsc}\left(\thetavec\right)\right)^{H}\right].
\end{eqnarray}
Therefore, by (\ref{MTMean}), (\ref{MTCovZ}), (\ref{EUQ}) and (\ref{PsivecExp}) the relation (\ref{EPsiU}) is easily verified.
\end{proof}
\begin{Lemma}
\label{GFinit}
Let $\Gmat_{u}\left(\thetavec\right)$ and $\hat{\Gmat}_{u}\left(\thetavec\right)$ be the matrix functions defined in (\ref{GDef}) and (\ref{Ghat}), respectively.
If conditions B-\ref{C3} and B-\ref{C4} are satisfied, then $\Gmat_{u}\left(\thetavec\right)$ is continuous in $\Thetasp$ and $\sup\limits_{\thetavecsc\in\Thetaspsc}\left\|\hat{\Gmat}_{u}\left(\thetavec\right)-{\Gmat}_{u}\left(\thetavec\right)\right\|\xrightarrow{P}{0}$ as $N\rightarrow\infty$.
\end{Lemma}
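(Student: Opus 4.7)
The plan is to mimic the argument used in the proof of Lemma \ref{FCont}, replacing $\Gammamat_{u}\left(\Xmat;\thetavec\right)$ with the outer product $\psivec_{u}\left(\Xmat;\thetavec\right)\psivec_{u}^{T}\left(\Xmat;\thetavec\right)$ and $u(\Xmat)$ with $u^{2}(\Xmat)$. The key ingredients are again a continuity statement, a deterministic envelope for the summand, and an invocation of the uniform weak law of large numbers.

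First, I would show that the vector function $\psivec_{u}\left(\Xmat;\thetavec\right)$ defined in (\ref{PsiDef}) is continuous in $\thetavec\in\Thetasp$ w.p. 1. This follows from the closed-form expression (\ref{PsivecExp}) for its entries, together with assumption B-\ref{C3}, which guarantees that $\muvec^{(u)}_{\Xmatsc}(\thetavec)$ and $\bSigma^{(u)}_{\Xmatsc}(\thetavec)$ are twice continuously differentiable and, by Theorem \ref{ConsistencyTh}'s assumption A-\ref{AS_3} (inherited here implicitly through the definition of $\phi^{(u)}$), that $\bSigma^{(u)}_{\Xmatsc}(\thetavec)$ is invertible. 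By composition and continuity of matrix inversion and of the trace, each entry of $\psivec_{u}\left(\Xmat;\thetavec\right)\psivec_{u}^{T}\left(\Xmat;\thetavec\right)$ is then continuous in $\thetavec$ on $\Thetasp$ w.p. 1.

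Second, I would derive a deterministic envelope. Inspecting (\ref{PsivecExp}) one sees that each entry of $\psivec_{u}\left(\Xmat;\thetavec\right)$ is a sum of three terms that are, respectively, constant, linear, and quadratic in $\Xmat$, with coefficients that are continuous functions of $\thetavec$. Since $\Thetasp$ is compact (see the premise built into B-\ref{C3} via Theorem \ref{ANor}) these coefficients are uniformly bounded, so there exists a finite constant $C>0$ such that
\begin{equation}
\sup_{\thetavecsc\in\Thetaspsc}\bigl|[\psivec_{u}(\Xmat;\thetavec)]_{k}\bigr|\leq C\bigl(1+\|\Xmat\|+\|\Xmat\|^{2}\bigr)\quad\text{w.p. 1},\ k=1,\ldots,m.
\end{equation}
Consequently, for each pair $(k,j)$,
\begin{equation}
\sup_{\thetavecsc\in\Thetaspsc}\bigl|u^{2}(\Xmat)[\psivec_{u}(\Xmat;\thetavec)\psivec_{u}^{T}(\Xmat;\thetavec)]_{k,j}\bigr|\leq C^{2}u^{2}(\Xmat)\bigl(1+\|\Xmat\|+\|\Xmat\|^{2}\bigr)^{2}.
\end{equation}
By expanding the square and applying elementary inequalities, this envelope is dominated by a linear combination of $u^{2}(\Xmat)$, $u^{2}(\Xmat)\|\Xmat\|^{2}$ and $u^{2}(\Xmat)\|\Xmat\|^{4}$, each of which has finite expectation under $\pxthetazero$ by assumption B-\ref{C4} and Cauchy--Schwartz.

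Finally, I would invoke the uniform weak law of large numbers for i.i.d. samples (Newey, 1991), whose three hypotheses are now verified: (i) continuity of the summand in $\thetavec$ on the compact set $\Thetasp$ w.p. 1, (ii) existence of a $\thetavec$-independent envelope, and (iii) integrability of that envelope under $\pxthetazero$. This delivers both conclusions of the lemma: continuity of $\Gmat_{u}(\thetavec)$ in $\Thetasp$ (as the expectation of a continuous, dominated integrand) and $\sup_{\thetavecsc\in\Thetaspsc}\|\hat{\Gmat}_{u}(\thetavec)-\Gmat_{u}(\thetavec)\|\xrightarrow{P}0$. The main technical obstacle is the verification of the envelope bound, as the explicit form in (\ref{PsivecExp}) must be handled carefully to ensure the $(1+\|\Xmat\|+\|\Xmat\|^{2})$ growth rate; the outer-product then doubles this to the $\|\Xmat\|^{4}$ order that B-\ref{C4} is tailored to accommodate.
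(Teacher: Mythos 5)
Your proposal is correct and follows essentially the same route as the paper's proof: continuity of $\psivec_{u}\left(\Xmat;\thetavec\right)$ in $\thetavec$ w.p.~1 from assumption B-\ref{C3}, a polynomial envelope of order $\left\|\Xmat\right\|^{4}$ for the entries of the outer product (integrable under B-\ref{C4} via Cauchy--Schwartz), and the uniform weak law of large numbers of Newey to conclude both continuity of $\Gmat_{u}\left(\thetavec\right)$ and the uniform convergence in probability. The only cosmetic difference is that you obtain the envelope by first bounding each entry of $\psivec_{u}$ by $C\left(1+\left\|\Xmat\right\|+\left\|\Xmat\right\|^{2}\right)$ and squaring, whereas the paper states the quartic bound on the outer product directly.
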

\begin{proof}
Notice that the samples $\Xmat_{n}$, $n=1,\ldots,N$ are i.i.d., the parameter space $\Thetasp$ is compact and by  (\ref{GaussDens}) and assumption B-\ref{C3} the vector function 
$\psivec_{u}\left(\Xmat;\thetavec\right)$ defined in (\ref{PsiDef}) is continuous in $\Thetasp$ w.p. 1. Using (\ref{GaussDens}), (\ref{PsiDef}), the triangle inequality, the Cauchy-Schwartz inequality, assumption B-\ref{C3}, and the compactness of $\Thetasp$ it can be shown that there exists a positive scalar $C$ such that
\begin{equation}  
\label{vkjbound}
\left|\left[\psivec_{u}\left(\Xmat;\thetavec\right)\psivec^{T}_{u}\left(\Xmat;\thetavec\right)\right]_{k,j}\right|\leq
\left(1+\left\|\Xmat\right\|
+\left\|\Xmat\right\|^{2}+\left\|\Xmat\right\|^{3}
+\left\|\Xmat\right\|^{4}\right)C\hspace{0.2cm}\forall{k,j}=1,\ldots,p.
\end{equation}
Therefore, by (\ref{GDef}), (\ref{Ghat}), assumption B-\ref{C4} and the uniform weak law of large numbers \cite{newey} we conclude that $\Gmat_{u}\left(\thetavec\right)$ is continuous in $\Thetasp$ and $\sup\limits_{\thetavecsc\in\Thetaspsc}\left\|\hat{\Gmat}_{u}\left(\thetavec\right)-{\Gmat}_{u}\left(\thetavec\right)\right\|\xrightarrow{P}{0}$ as $N\rightarrow\infty$.
\end{proof}
\subsection{Proof of Proposition \ref{EffTh}:}   
\label{EffThProof}
Define $\bxi\left(\Xmat;\thetavec\right)\triangleq{u}(\Xmat)\psivec_{u}(\Xmat;\thetavec)$. Using (\ref{AMSEN}), the definitions of $\Rmat_{u}\left(\cdot\right)$,  $\Gmat_{u}\left(\cdot\right)$
and $\Fmat_{u}\left(\cdot\right)$ in (\ref{AMSE}), (\ref{GDef}) and (\ref{FDef}), respectively, Identity \ref{FIdent} stated below, the symmetricity of  $\Fmat_{u}\left(\thetavec\right)$, the non-singularity of $\Gmat_{u}\left(\thetavec\right)$ at $\thetavec=\thetaveczero$ and the covariance semi-inequality \cite{Lehmann} one can verify that
\begin{eqnarray}
\label{InvC}
\nonumber
\Cmat^{-1}_{u}\left(\thetaveczero\right)&=&{N}{\rm{E}}[\etavec(\Xmat;\thetaveczero)\bxi^{T}\left(\Xmat;\thetaveczero\right);\pxthetazero]
{\rm{E}}^{-1}\left[\bxi\left(\Xmat;\thetaveczero\right)\bxi^{T}\left(\Xmat;\thetaveczero\right);\pxthetazero\right]
\\\nonumber
&\times&{\rm{E}}[\bxi\left(\Xmat;\thetaveczero\right)\etavec^{T}(\Xmat;\thetaveczero);\pxthetazero]
\\&\preceq&{N}{\rm{E}}[\etavec(\Xmat;\thetaveczero)\etavec^{T}(\Xmat;\thetaveczero);\pxthetazero]\triangleq{N}\Imat_{\rm{FIM}}\left[\pxthetazero\right],
\end{eqnarray}
where equality holds if and only if the relation (\ref{AchCRB}) holds. The proof is complete by taking the inverse of (\ref{InvC}).
\qed
\begin{Identity}
\label{FIdent}
$\Fmat_{u}(\thetavec)={\rm{E}}[u(\Xmat)\psivec_{u}(\Xmat;\thetavec)\etavec^{T}(\Xmat;\thetavec);\pxtheta]$
\end{Identity}
\begin{proof}
According to Lemma \ref{PsiuExp}, stated in Appendix \ref{ANorProof}, ${\rm{E}}\left[u\left(\Xmat\right)\psivec_{u}\left(\Xmat,\thetavec\right);\pxtheta\right]=\zerovec$.
Therefore, by (\ref{likelihoodfunc}), (\ref{ExpDef}) and (\ref{PsiDef})
\begin{eqnarray}
\label{IdentEq1}
\zerovec
&=&
\frac{\partial}{\partial\thetavec}\int\limits_{\XCalsc}u\left(\xvec\right)\left(\frac{\partial\log\phi^{(u)}\left(\xvec;\thetavec\right)}{\partial\thetavec}\right)^{T}f\left(\xvec;\thetavec\right)d\rho\left(\xvec\right)
\\\nonumber
&=&
\int\limits_{\XCalsc}u\left(\xvec\right)\frac{\partial^{2}\log\phi^{(u)}\left(\xvec;\thetavec\right)}{\partial\thetavec\partial\thetavec^{T}}f\left(\xvec;\thetavec\right)d\rho\left(\xvec\right)
\\\nonumber
&+&
\int\limits_{\XCalsc}u\left(\xvec\right)\left(\frac{\partial\log\phi^{(u)}\left(\xvec;\thetavec\right)}{\partial\thetavec}\right)^{T}{\frac{\partial\log{f}\left(\xvec;\thetavec\right)}{\partial\thetavec}}
{f}\left(\xvec;\thetavec\right)d\rho\left(\xvec\right).
\end{eqnarray}
The second equality in (\ref{IdentEq1}) stems from assumptions C-\ref{D1}-C-\ref{D3} and Theorem 2.40 in \cite{giaquinta2003mathematical} according to which integration and differentiation can be interchanged. Therefore, by (\ref{likelihoodfunc}), (\ref{PsiDef})-(\ref{logfGrad}) and (\ref{IdentEq1})
\begin{equation}
{\rm{E}}[u(\Xmat)\psivec_{u}(\Xmat;\thetavec)\etavec^{T}(\Xmat;\thetavec);\pxtheta]=-{\rm{E}}[u(\Xmat)\Gammamat_{u}(\Xmat;\thetavec);\pxtheta]\triangleq\Fmat_{u}(\thetavec).
\end{equation}
\end{proof}
\subsection{Proof of Theorem \ref{EAMSETh}:}   
\label{EAMSEThProof}
Under assumption D-\ref{E1}, $\hat{\thetavec}_{u}\xrightarrow{P}\thetaveczero$ as $N\rightarrow\infty$. Therefore, by Lemma \ref{WeakCons}, stated in Appendix \ref{ANorProof}, and Lemma \ref{WeakConG}, stated below, that require conditions 
D-\ref{E3} and D-\ref{E4} to be satisfied, we have that 
$\hat{\Fmat}_{u}(\hat{\thetavec}_{u})\xrightarrow{P}{\Fmat}_{u}\left(\thetaveczero\right)$ as $N\rightarrow\infty$ and 
$\hat{\Gmat}_{u}(\hat{\thetavec}_{u})\xrightarrow{P}{\Gmat}_{u}\left(\thetaveczero\right)$ as $N\rightarrow\infty$.
Hence, by (\ref{AMSE}),  (\ref{AMSEN}), (\ref{EMSEN}), (\ref{EMSE}) and the Mann-Wald theorem \cite{MannWald} we conclude that (\ref{ChatConv}) is satisfied.\qed
\begin{Lemma}
\label{WeakConG}
Let $\Gmat_{u}\left(\thetavec\right)$ and $\hat{\Gmat}_{u}\left(\thetavec\right)$ be the matrix functions defined in (\ref{GDef}) and (\ref{Ghat}), respectively.
Furthermore, let $\thetavec^{*}$ denote an estimator of $\thetaveczero$.
If $\thetavec^{*}\xrightarrow{P}\thetaveczero$ as $N\rightarrow\infty$ and conditions D-\ref{E3} and D-\ref{E4} are satisfied, then $\hat{\Gmat}_{u}\left(\thetavec^{*}\right)\xrightarrow{P}{\Gmat}_{u}\left(\thetaveczero\right)$ as $N\rightarrow\infty$.
\end{Lemma}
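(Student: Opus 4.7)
The plan is to mirror, almost verbatim, the proof of Lemma \ref{WeakCons} given earlier in the appendix, replacing every invocation of Lemma \ref{FCont} (which covered $\Fmat_{u}$) with the corresponding statement of Lemma \ref{GFinit} (which covers $\Gmat_{u}$). The only conceptual work is to verify that Lemma \ref{GFinit} does indeed supply the two ingredients we need: continuity of $\Gmat_{u}(\thetavec)$ on $\Thetasp$, and uniform-in-$\thetavec$ convergence in probability of $\hat{\Gmat}_{u}(\thetavec)$ to $\Gmat_{u}(\thetavec)$. Both are stated there explicitly under exactly the hypotheses D-\ref{E3}-D-\ref{E4} assumed here.

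First I would apply the triangle inequality
\begin{equation*}
\bigl\|\hat{\Gmat}_{u}(\thetavec^{*})-\Gmat_{u}(\thetaveczero)\bigr\|
\;\leq\;
\bigl\|\hat{\Gmat}_{u}(\thetavec^{*})-\Gmat_{u}(\thetavec^{*})\bigr\|
+\bigl\|\Gmat_{u}(\thetavec^{*})-\Gmat_{u}(\thetaveczero)\bigr\|,
\end{equation*}
and then handle the two summands separately. For the second summand, Lemma \ref{GFinit} gives continuity of $\Gmat_{u}(\thetavec)$ on $\Thetasp$, so combining the hypothesis $\thetavec^{*}\xrightarrow{P}\thetaveczero$ with the Mann-Wald continuous mapping theorem yields $\|\Gmat_{u}(\thetavec^{*})-\Gmat_{u}(\thetaveczero)\|\xrightarrow{P}0$. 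For the first summand, I would dominate it by the supremum and invoke the uniform convergence part of Lemma \ref{GFinit}:
\begin{equation*}
\bigl\|\hat{\Gmat}_{u}(\thetavec^{*})-\Gmat_{u}(\thetavec^{*})\bigr\|
\;\leq\;
\sup_{\thetavecsc\in\Thetaspsc}\bigl\|\hat{\Gmat}_{u}(\thetavec)-\Gmat_{u}(\thetavec)\bigr\|
\;\xrightarrow{P}\;0.
\end{equation*}
Adding the two bounds and applying Slutsky's theorem (or simply the union bound on the event that each term exceeds $\epsilon/2$) finishes the proof.

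No step here should be genuinely hard, since all the heavy lifting, in particular the polynomial-in-$\|\Xmat\|$ envelope (\ref{vkjbound}) for the integrand of $\hat{\Gmat}_{u}$ together with the moment bound D-\ref{E4} that feeds the uniform weak law of large numbers, has already been carried out in Lemma \ref{GFinit}. The only mild subtlety worth noting explicitly is that $\thetavec^{*}$ need not be measurable in the same way as the estimator $\hat{\thetavec}_{u}$; however, the sup-bound in the first summand completely decouples the randomness in $\thetavec^{*}$ from that in $\hat{\Gmat}_{u}$, so this causes no difficulty and no measurability hypothesis on $\thetavec^{*}$ beyond what is implicit in $\thetavec^{*}\xrightarrow{P}\thetaveczero$ is required.
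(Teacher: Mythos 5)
Your proof is correct and follows exactly the route the paper takes: the paper's own proof of this lemma simply says to repeat the argument of Lemma \ref{WeakCons} with $\Fmat_{u}$, $\hat{\Fmat}_{u}$ replaced by $\Gmat_{u}$, $\hat{\Gmat}_{u}$ and Lemma \ref{FCont} replaced by Lemma \ref{GFinit}, which is precisely the triangle-inequality decomposition plus continuity/Mann--Wald and the sup-bound/uniform-convergence steps you spell out.
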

\begin{proof}
The proof is similar to the one of Lemma \ref{WeakCons} stated in Appendix \ref{ANorProof}. Simply replace ${\Fmat}_{u}\left(\thetavec\right)$ and $\hat{\Fmat}_{u}\left(\thetavec\right)$ with ${\Gmat}_{u}\left(\thetavec\right)$ and $\hat{\Gmat}_{u}\left(\thetavec\right)$ and use Lemma \ref{GFinit} instead of Lemma \ref{FCont}.
\end{proof}
\subsection{Proof of Proposition \ref{FishConsistency}:}   
\label{FishConsistencyProof}
We first show that $\hat{\thetavec}_{u}$ can be represented as a statistical functional of the empirical probability distribution. 
As also shown in \cite{Todros2}, the empirical MT-mean and MT-covariance (\ref{Mu_u_Est}), (\ref{Rx_u_Est})  can be written as statistical functionals of the empirical probability measure $\hat{P}_{\Xmatsc}$ defined in (\ref{EmpProbMes}), i.e.,
\begin{equation}
\label{StatFuncMean}
\hat{\muvec}^{\left(u\right)}_{\Xmatsc}=\frac{{\rm{E}}[\Xmat{u}\left(\Xmat\right);\hat{P}_{\Xmatsc}]}{{\rm{E}}[{u}\left(\Xmat\right);\hat{P}_{\Xmatsc}]}\triangleq\etavec_{\Xmatsc}^{(u)}[\hat{P}_{\Xmatsc}]
\end{equation}
and
\begin{equation}
\label{StatFuncCov} 
\hat{\bSigma}^{\left(u\right)}_{\Xmatsc}=\frac{{\rm{E}}[(\Xmat-\etavec_{\Xmatsc}^{(u)}[\hat{P}_{\Xmatsc}])(\Xmat-\etavec_{\Xmatsc}^{(u)}[\hat{P}_{\Xmatsc}])^{H}u\left(\Xmat\right);\hat{P}_{\Xmatsc}]}{{\rm{E}}[{u}\left(\Xmat\right);\hat{P}_{\Xmatsc}]}
\triangleq{\Psimat}_{\Xmatsc}^{(u)}[\hat{P}_{\Xmatsc}].
\end{equation} 
Hence, the objective function (\ref{ObjFun}) can be represented as:
\begin{equation}
\label{ObjFunMod}
J_{u}\left(\thetavec\right)=
-D_{\rm{LD}}\left[{\Psimat}_{\Xmatsc}^{(u)}[\hat{P}_{\Xmatsc}]||\bSigma^{\left(u\right)}_{\Xmatsc}\left(\thetavec\right)\right] -
\left\|\etavec_{\Xmatsc}^{(u)}[\hat{P}_{\Xmatsc}]-{\muvec}^{\left(u\right)}_{\Xmatsc}\left(\thetavec\right)\right\|^{2}_{{\left(\bSigmasc^{\left(u\right)}_{\xvec}\left(\thetavecsc\right)\right)}^{-1}}
\triangleq{\rm{H}}_{u}[\hat{P}_{\Xmatsc};\thetavec],
\end{equation}
and therefore, by (\ref{PropEst}) we conclude that 
\begin{equation}
\label{PropEstMod}
\hat{\thetavec}_{u}=\arg\max\limits_{\thetavecsc\in\Thetaspsc}{\rm{H}}_{u}[\hat{P}_{\Xmatsc};\thetavec]\triangleq\Smat_{u}[\hat{P}_{\Xmatsc}]. 
\end{equation}

Next, we prove Fisher consistency of $\hat{\thetavec}_{u}$. According to (\ref{VarPhiDef}),  (\ref{MTMean}),  (\ref{MTCovZ}), (\ref{StatFuncMean}) and (\ref{StatFuncCov}), when $\hat{P}_{\Xmatsc}$ is replaced by $\pxthetazero$ we have $\etavec_{\Xmatsc}^{(u)}[{P}_{\Xmatsc;\thetaveczerosc}]={\muvec}^{\left(u\right)}_{\Xmatsc}\left(\thetaveczero\right)$ and ${\Psimat}_{\Xmatsc}^{(u)}[\pxtheta]={\bSigma}^{(u)}_{\Xmatsc}\left(\thetaveczero\right)$. Thus, by (\ref{ObjFunMod})
\begin{equation}
\label{ObjFunP0}
{\rm{H}}_{u}[\pxthetazero;\thetavec]=
-D_{\rm{LD}}\left[{\bSigma}^{(u)}_{\Xmatsc}\left(\thetaveczero\right)||\bSigma^{\left(u\right)}_{\Xmatsc}\left(\thetavec\right)\right] -
\left\|{\muvec}^{\left(u\right)}_{\Xmatsc}\left(\thetaveczero\right)-{\muvec}^{\left(u\right)}_{\Xmatsc}\left(\thetavec\right)\right\|^{2}_{{\left(\bSigmasc^{\left(u\right)}_{\xvec}\left(\thetavecsc\right)\right)}^{-1}}.
\end{equation}
Finally, since ${\rm{H}}_{u}[\pxthetazero;\thetavec]=\bar{J}_{u}\left(\thetavec\right)$ defined in (\ref{ObjFunDet}), we conclude by Lemma \ref{Lemma1} in Appendix \ref{ConsistencyThProof} that if conditions A-\ref{AS_1}-A-\ref{AS_4} are satisfied, then ${\rm{H}}_{u}[\pxthetazero;\thetavec]$ is uniquely maximized at $\thetavec=\thetaveczero$. Therefore, by (\ref{PropEstMod}) $\Smat_{u}[\pxthetazero]=\thetaveczero$. \qed
\subsection{Proof of Proposition \ref{InfFuncExp}}
\label{InfFuncExpProof}
By  Lemma \ref{PsiuExp} stated in Appendix \ref{ANorProof} and the Fisher consistency of $\hat{\thetavec}_{u}$ we conclude that 
\begin{equation}
\label{PsiContProp}
{\rm{E}}\left[u\left(\Xmat\right)\psivec_{u}\left(\Xmat;\Smat_{u}\left[P_{\epsilon}\right]\right);P_{\epsilon}\right]=\zerovec,
\end{equation}
where $\Smat_{u}\left[\cdot\right]$ denotes the statistical functional corresponding to $\hat{\thetavec}_{u}$ and $P_{\epsilon}$ is the contaminated probability measure defined in (\ref{ContProb}).
The first-order derivative of (\ref{PsiContProp}) w.r.t. $\epsilon$ at $\epsilon=0$ satisfies:
\begin{equation}
\left.\frac{\partial{\rm{E}}\left[u\left(\Xmat\right)\psivec_{u}\left(\Xmat;\Smat_{u}\left[P_{\epsilon}\right]\right);\pxthetazero\right]}{\partial\epsilon}\right|_{\epsilon=0}
+u\left(\yvec\right)\psivec_{u}\left(\yvec;\thetavec_{0}\right)=\zerovec.
\end{equation}
Therefore, by the chain-rule for derivatives we obtain that 
\begin{equation}
\label{DerivFunc0}
\left.\frac{\partial{\rm{E}}\left[u\left(\Xmat\right)\psivec_{u}\left(\Xmat;\Smat_{u}\left[P_{\epsilon}\right]\right);\pxthetazero\right]}{\partial\Smat_{u}\left[P_{\epsilon}\right]}\right|_{\epsilon=0}
\left.\frac{\partial\Smat_{u}\left[P_{\epsilon}\right]}{\partial\epsilon}\right|_{\epsilon=0}  
+u\left(\yvec\right)\psivec_{u}\left(\yvec;\thetavec_{0}\right)=\zerovec.
\end{equation}
According to (\ref{ExpDef}) and (\ref{PsiDef})-(\ref{GammaDef})
\begin{eqnarray}
\label{DerivFunc}
\frac{\partial{\rm{E}}\left[u\left(\Xmat\right)\psivec_{u}\left(\Xmat;\Smat_{u}\left[P_{\epsilon}\right]\right);\pxthetazero\right]}{\partial\Smat_{u}\left[P_{\epsilon}\right]}
&=&\frac{\partial}{\partial\Smat_{u}\left[P_{\epsilon}\right]}\int\limits_{\XCalsc}u\left(\xvec\right)\left(\frac{\partial\log\phi^{(u)}\left(\xvec;\Smat_{u}\left[P_{\epsilon}\right]\right)}{\partial\Smat_{u}\left[P_{\epsilon}\right]}\right)^{T}
d\pxthetazero\left(\xvec\right)
\\\nonumber
&=&
\int\limits_{\XCalsc}u\left(\xvec\right)\frac{\partial^{2}\log\phi^{(u)}\left(\xvec;\Smat_{u}\left[P_{\epsilon}\right]\right)}{\partial\Smat_{u}\left[P_{\epsilon}\right]\partial\Smat^{T}_{u}\left[P_{\epsilon}\right]}d\pxthetazero\left(\xvec\right)
\\\nonumber
&=&
-\Fmat_{u}\left(\Smat_{u}\left[P_{\epsilon}\right]\right),
\end{eqnarray}
where the second equality in (\ref{DerivFunc}) stems from assumptions E-\ref{F1}-E-\ref{F3} and Theorem 2.40 in \cite{giaquinta2003mathematical} according to which integration and differentiation can be interchanged. Furthermore, by (\ref{ContProb}) and the Fisher consistency of $\hat{\thetavec}_{u}$  we have that 
\begin{equation}
\label{FFConst}
\left.\Fmat_{u}\left(\Smat_{u}\left[P_{\epsilon}\right]\right)\right|_{\epsilon=0}=\Fmat_{u}\left(\thetaveczero\right),
\end{equation}
which is non-singular by assumption E-\ref{F4}.  Therefore, by relations (\ref{DerivFunc0})-(\ref{FFConst}) and the definition (\ref{IFDef}), according to which ${\bf{IF}}\left(\yvec;\thetaveczero\right)\triangleq\left.\frac{\partial\Smat_{u}\left[P_{\epsilon}\right]}{\partial\epsilon}\right|_{\epsilon=0}$, the relation (\ref{InfFuncMTQML}) is easily verified.\qed
\subsection{Proof of Proposition \ref{RobustnessConditions}}
\label{RobustnessConditionsProof}
By (\ref{PsivecExp}) and the triangle inequality the absolute value of the $k$-th entry of the product $\psivec_{u}\left(\Xmat;\thetaveczero\right)u\left(\yvec\right)$ in (\ref{InfFuncMTQML}) satisfies: 
\begin{equation}
\label{Upperbound1}
\left|\left[\psivec_{u}\left(\yvec;\thetaveczero\right)\right]_{k}u\left(\yvec\right)\right|\leq
{c}_{1}u\left(\yvec\right)
+{c}_{2}\left(\yvec\right){u\left(\yvec\right)}{\left\|\yvec-\muvec^{\left(u\right)}_{\Xmatsc}\left(\thetaveczero\right)\right\|}+
{c}_{3}\left(\yvec\right){u\left(\yvec\right)}{\left\|\yvec-\muvec^{\left(u\right)}_{\Xmatsc}\left(\thetaveczero\right)\right\|}^{2},
\end{equation}
where 
\begin{equation}
\nonumber
c_{1}\triangleq\left|{\rm{tr}}\left[\left(\bSigma^{(u)}_{\Xmatsc}\left(\thetaveczero\right)\right)^{-1}\left.\frac{\partial\bSigma^{(u)}_{\Xmatsc}\left(\thetavec\right)}{\partial\theta_{k}}\right|_{\thetavecsc=\thetavecsc_{0}}\right]\right|,
\end{equation}
\begin{equation}
\nonumber
c_{2}\left(\yvec\right)\triangleq\left|2{\rm{Re}}\left\{\bvec^{H}\left(\yvec;\thetaveczero\right)\left(\bSigma^{(u)}_{\Xmatsc}\left(\thetaveczero\right)\right)^{-1}
\left.\frac{\partial\muvec^{\left(u\right)}_{\Xmatsc}\left(\thetavec\right)}{\partial\theta_{k}}\right|_{\thetavecsc=\thetavecsc_{0}}\right\}\right|, 
\end{equation}
\begin{equation}
\nonumber
c_{3}\left(\yvec\right)\triangleq\left|{\rm{tr}}\left[\left(\bSigma^{(u)}_{\Xmatsc}\left(\thetaveczero\right)\right)^{-1}
\left.\frac{\partial\bSigma^{(u)}_{\Xmatsc}\left(\thetavec\right)}{\partial\theta_{k}}\right|_{\thetavecsc=\thetavecsc_{0}}
\left(\bSigma^{(u)}_{\Xmatsc}\left(\thetaveczero\right)\right)^{-1}{\Bmat\left(\yvec;\thetaveczero\right)}\right]\right|, 
\end{equation}
$\bvec\left(\yvec;\thetavec\right)\triangleq{\left(\yvec-\muvec^{\left(u\right)}_{\Xmatsc}(\thetavec)\right)}{{\left\|\yvec-\muvec^{\left(u\right)}_{\Xmatsc}(\thetavec)\right\|}^{-1}}$ and $\Bmat\left(\yvec;\thetavec\right)\triangleq\bvec\left(\yvec;\thetavec\right)\bvec^{H}\left(\yvec;\thetavec\right)$. Since ${\left\|\bvec\left(\yvec;\thetavec\right)\right\|}={1}$ for any $\yvec\in\Csp^{p}$, the real and imaginary components of $\bvec\left(\yvec;\thetavec\right)$ and $\Bmat\left(\yvec;\thetavec\right)$ are bounded. Hence, there exists a positive constant $c$ that upper bounds $c_{1}$, $c_{2}\left(\yvec\right)$ and $c_{3}\left(\yvec\right)$. Therefore, by (\ref{Upperbound1}), the triangle inequality and the fact that
$\left\|\yvec\right\|<\left\|\yvec\right\|^{2}+1$ we conclude that
\begin{eqnarray}
\left|\left[\psivec_{u}\left(\yvec;\thetaveczero\right)\right]_{k}u\left(\yvec\right)\right|&\leq&
c\cdot{u}\left(\yvec\right)\left(1+\left\|\yvec-\muvec^{\left(u\right)}_{\Xmatsc}\left(\thetaveczero\right)\right\|+\left\|\yvec-\muvec^{\left(u\right)}_{\Xmatsc}\left(\thetaveczero\right)\right\|^{2}\right)
\\\nonumber
&\leq&
c\cdot{u}\left(\yvec\right)\left(\left\|\yvec\right\|^{2} + \left\|\yvec\right\|\left(2\left\|\muvec^{\left(u\right)}_{\Xmatsc}\left(\thetaveczero\right)\right\|+1\right) + 
\left\|\muvec^{\left(u\right)}_{\Xmatsc}\left(\thetaveczero\right)\right\|^{2}+ \left\|\muvec^{\left(u\right)}_{\Xmatsc}\left(\thetaveczero\right)\right\| + 1\right)
\\\nonumber
&<&
c\cdot{u}\left(\yvec\right)\left(2\left(\left\|\muvec^{\left(u\right)}_{\Xmatsc}\left(\thetaveczero\right)\right\|+1\right)\left\|\yvec\right\|^{2} + \left( 
\left\|\muvec^{\left(u\right)}_{\Xmatsc}\left(\thetaveczero\right)\right\|^{2} + 3\left\|\muvec^{\left(u\right)}_{\Xmatsc}\left(\thetaveczero\right)\right\|+1\right) \right).
\end{eqnarray}
Thus, the influence function (\ref{InfFuncMTQML}) is bounded over $\mathcal{C}\subseteq\Csp^{p}$ if $u\left(\yvec\right)$ and $u\left(\yvec\right)\left\|\yvec\right\|^{2}$ are bounded over $\mathcal{C}$. 
Furthermore, if over the subset $\mathcal{C}$ $u\left(\yvec\right)\rightarrow0$ and $u\left(\yvec\right)\left\|\yvec\right\|^{2}\rightarrow0$ as $\left\|\yvec\right\|\rightarrow{\infty}$, then   
$\left\|{\bf{IF}}\left(\yvec;\thetaveczero\right)\right\|\rightarrow{0}$ over $\mathcal{C}$ as $\left\|\yvec\right\|\rightarrow{\infty}$.
\qed
\subsection{Proof of Proposition \ref{RobGainProp}:}   
\label{RobGainPropProof}
According to (\ref{GaussMTFuncOrth}) $\uGausss\left(\yvec;\omega\right)\leq\exp\left(-\frac{\epsilon\left\|\yvec\right\|^{2}}{\omega^{2}}\right)$ for any $\yvec\in\mathcal{B}_{\epsilon}$.
The proof is now complete by noting that for any fixed width parameter $\omega$ the functions $\exp\left(-\frac{\epsilon\left\|\yvec\right\|^{2}}{\omega^{2}}\right)$ and $\left\|\yvec\right\|^{2}\exp\left(-\frac{\epsilon\left\|\yvec\right\|^{2}}{\omega^{2}}\right)$ are bounded and decay to zero as $\left\|\yvec\right\|\rightarrow\infty$.
\qed


\bibliographystyle{IEEEbib}

\bibliography{strings,refs}

\end{document}